\DeclarePairedDelimiter\rbra{\lparen}{\rparen}
\DeclarePairedDelimiter\sbra{\lbrack}{\rbrack}
\DeclarePairedDelimiter\cbra{\{}{\}}
\DeclarePairedDelimiter\abs{\lvert}{\rvert}
\DeclarePairedDelimiter\Abs{\lVert}{\rVert}
\DeclarePairedDelimiter\ceil{\lceil}{\rceil}
\DeclarePairedDelimiter\ket{\lvert}{\rangle}
\DeclarePairedDelimiter\bra{\langle}{\rvert}
\newcommand{\set}[2] {\left\{\, #1 \colon#2 \,\right\}}
\newcommand{\E}{\mathbf{E}}
\newcommand{\tr} {\operatorname{tr}}
\newtheorem{theorem}{Theorem}[section]
\newtheorem{lemma}[theorem]{Lemma}
\newtheorem{proposition}[theorem]{Proposition}
\newtheorem{remark}[theorem]{Remark}
\newtheorem{definition}[theorem]{Definition}
\newtheorem{example}[theorem]{Example}
\titleformat{\section}[hang]{\Large\bfseries\filright}{\thesection.}{.5em}{}
\titleformat{\subsection}[hang]{\large\bfseries\filright}{%
  \thesubsection.}{.5em}{}
\begin{document}

\title{\textbf{\Large Quantum Approximate \texorpdfstring{$\bm{k}$}{k}-Minimum Finding}}
\author[1,2]{Minbo Gao}
\author[3]{Zhengfeng Ji}
\author[4]{Qisheng Wang}
\affil[1]{Institute of Software, Chinese Academy of Sciences, Beijing, China}
\affil[2]{University of Chinese Academy of Sciences, Beijing, China}
\affil[3]{Department of Computer Science and Technology, Tsinghua University, Beijing, China}
\affil[4]{School of Informatics, University of Edinburgh, Edinburgh, United Kingdom}

\renewcommand\Affilfont{\normalsize\itshape} 
\renewcommand\Authfont{\large}
\setlength{\affilsep}{4mm}
\renewcommand\Authand{\rule{6mm}{0mm}}

\date{}

\maketitle

\begin{abstract}
  Quantum $k$-minimum finding is a fundamental subroutine with numerous
  applications in combinatorial problems and machine learning.
  Previous approaches typically assume oracle access to exact function values,
  making it challenging to integrate this subroutine with other quantum
  algorithms.
  In this paper, we propose an (almost) optimal quantum $k$-minimum finding
  algorithm that works with approximate values for all $k \geq 1$, extending a
  result of \hyperlink{cite.vAGGdW20}{van Apeldoorn, Gily\'{e}n, Gribling, and
    de Wolf (FOCS 2017)} for $k=1$.
  As practical applications of this algorithm, we present efficient quantum
  algorithms for identifying the $k$ smallest expectation values among multiple
  observables and for determining the $k$ lowest ground state energies of a
  Hamiltonian with a known eigenbasis.
\end{abstract}

\section{Introduction}

Quantum minimum finding was initially proposed in~\cite{DH96}, with a direct but
clever application of Grover search~\cite{Gro96}.
They showed that given quantum query access to $n$ unordered elements (with
exact values),
one can find the smallest one with an optimal quantum query complexity
$\Theta\rbra{\sqrt{n}}$, which quadratically outperforms any classical
algorithms.
As a generalization, quantum $k$-minimum finding was later developed
in~\cite{DHHM06}, which finds the $k$ smallest elements from $n$ elements with
an optimal quantum query complexity $\Theta\rbra{\sqrt{nk}}$.
Quantum $k$-minimum finding for exact values turns out to have various
applications in different fields, e.g., minimum spanning tree and strong
connectivity in graph theory~\cite{DHHM06}, clustering~\cite{ABG13,LMR13} and
$k$-nearest neighbor~\cite{WKS15} in machine learning (see
also~\cite{SSP15,BWP+17}), and pattern matching~\cite{RV03} and other string
problems~\cite{WY24,AJ23,JN24} in stringology.

The traditional quantum ($k$-)minimum finding works when the elements are
comparable by their exact values.
However, it does not apply to the numerical case, where the values of the
elements can be computed by (stochastic) numerical methods.
Notably, this case often appears in practice, especially when the values of the
elements are estimated by quantum subroutines such as quantum phase
estimation~\cite{Kit95,NC10} and amplitude estimation~\cite{BHMT02}.
A typical example is the quantum SDP (semi-definite program) solver proposed
in~\cite{vAGGdW20,BKLLSW19,vAG19a}, where the ground energy of a Hamiltonian is
found by a quantum minimum finding over approximate quantum phases.
To facilitate the discussion of the numerical scenarios and the composability of
quantum $k$-minimum finding, we introduce an abstract type of quantum oracle for
numerical inputs, called \textit{approximate oracle}.
In an approximate oracle, each query returns a \emph{superposition} of
approximate values of the queried element to high precision and with high
success probability (see \cref{def:approx-oracle} for the formal definition).
This is a very general definition which meets the requirement for all known
situations where quantum $k$-minimum finding is used as a subroutine.
A natural question is then
\begin{center}
  \textit{How to solve quantum $k$-minimum finding problem with an approximate
    oracle?}
\end{center}

This question is not only of theoretical interest, but also holds significant
practical value.
The solution turns out to be more subtle as one may initially expect.
To address this problem clearly, we propose two (weak and strong) definitions of
approximate minimum index sets as characterizations of the required output in
$k$-minimum finding algorithms, naturally generalizing the requirement of the
approximate minimum finding problem.
We give almost optimal quantum algorithms for computing these sets with
$\widetilde{O}\rbra{\sqrt{nk}}$ queries to the approximate oracle.
Finally, we showcase two applications of our algorithms: identifying the $k$
smallest expectation values and determining the $k$ lowest ground state energies
of a Hamiltonian.

\subsection{Approximate Oracles and Approximate Minimum Index Sets}

Before presenting our main results, we define the concept of approximate oracle,
explain why the standard algorithm for quantum $k$-minimum finding fail in this
case, and define the concept of approximate minimum index set as an appropriate
output.

\subsubsection{Approximate Oracles}

In the standard setting of quantum ($k$-)minimum finding (see~\cite{DH96,
  DHHM06}), an \emph{exact oracle} for the input data is assumed, which always
returns the \emph{exact function value} for each query.
The assumption of an exact oracle is reasonable in the case where the input data
is previously known or is produced by deterministic algorithms.
However, it fails to describe the situation where the input data is generated by
randomized or quantum numerical algorithms.

Generally speaking, for a randomized numerical algorithm that estimates a real
number $x$, its output is a random variable $X$, such that the distance between
$X$ and $x$ is no more than $\varepsilon$ with probability at least $1-\delta$.
We will call $\varepsilon$ the precision parameter and $\delta$ the error
parameter from now on.
If the values are estimated using quantum algorithms, a general situation is
that the estimated values are stored in a quantum state in superposition.
Hence, it is natural to consider the following definition of \textit{approximate
  oracles}.
An $(\varepsilon, \delta)$-approximate oracle for the input $x_i$'s, are quantum
unitary that with probability at least $1-\delta$, outputs a quantum state which
is a \emph{superposition} of estimated values for $x_i$ within $\varepsilon$
distance upon querying $i$.
The formal definition of approximate oracle can be found
in \cref{def:approx-oracle}.
This input model works well with two widely-used quantum procedures for quantum
numerical computations, namely phase estimation and amplitude estimation.

There are other input unitary oracle models, such as the bounded-error
oracle~\cite{HMdW03,WY24}, and untrustworthy oracle~\cite{QCR20}.
We summarize their properties in the following table, with a detailed discussion
in \cref{subsec:comparison-different-oracle}.

\begin{table}[h]
\centering
  \caption{Different Unitary Oracle Models for Minimum Finding}\label{tab:oracle-comparison}
  \begin{threeparttable}[b]
  \begin{tabular}{@{}llll@{}}
    \toprule
    & Allows Error & Considers Precision & Time Complexity \\ \midrule
    Exact Oracle~\cite{DHHM06} & $\times$ & $\times$ & $O(\sqrt{n})$ \\ \midrule
    Bounded-error Oracle~\cite{WY24} & $\checkmark$
                   & $\times$ & $\widetilde{O}(\sqrt{n})$ \\ \midrule
    Untrustworthy Oracle~\cite{QCR20} & $\checkmark$
                   & $\checkmark$ & $\widetilde{O}(\sqrt{n(1+\Delta)})$\tnote{$\dagger$}  \\ \midrule
    Approximate Oracle & $\checkmark$
                   & $\checkmark$ & $\widetilde{O}(\sqrt{n})$ \\ \bottomrule
 \end{tabular}
 \begin{tablenotes}
\item [$\dagger$]  $\Delta$ is the upper bound on the number of
elements within any interval of unit length.                              
\end{tablenotes}
 \end{threeparttable}
\end{table}

\subsubsection{Approximate Minimum Index Set}

When discussing the $k$-minimum finding problem, it is important to define what
conditions of the output, which we call the approximate minimum index set, need
to satisfy.
For reason to be clear later, we will consider both the weak and strong
approximate minimum index sets.

As a warm-up, let us consider what we can learn from the randomized numerical
algorithms.
Suppose that there are $n$ elements $x_1, x_2, \dots, x_n$, and for each
$j \in \sbra{n}$, a random result $X_j$ is generated by a randomized numerical
algorithm, that is $\varepsilon$-close to the actual value $x_j$ with
probability at least $1-\delta$.
With this as the input oracle, it may be impossible for us to find out the $k$
smallest elements in extreme cases, for example, when all elements are
$\varepsilon$-close to each other.
As a compromise, we seek an approximate solution instead.

For the minimum finding problem, the authors of~\cite{vAGGdW20} implicitly gave
a characterization of the \emph{approximate minimum index}, namely an index $j$
whose value $x_j$ is $\varepsilon$-close to the true minimum value.
This characterization, however, could be interpreted and generalized in various
different ways when $k$-minimum finding is considered.

\subparagraph*{Weak Approximate Minimum Index Set.}
One way to understand the approximate minimum index is that the solution we find
are \emph{entry-wise} $\varepsilon$-close to the ideal solution.
Generalizing this idea to the $k$-minimum case, we propose the definition of the
\emph{weak approximate minimum index set}, which requires the approximate
solution to be entry-wise $\varepsilon$-close to the $k$ minima
(see \cref{definition:weak-approximate-min} for a formal definition).

\subparagraph*{Strong Approximate Minimum Index Set.}
The other way to interpret the approximate minimum is to regard it as an
$\varepsilon$-approximate version of a \emph{global} condition.
To be more precise, the minimum has the property that it is no more than all
other elements; and the approximate minimum index requires this condition to
hold within at most $\varepsilon$-error.
Generalizing this idea to the $k$-minimum case, we propose the definition of the
\emph{strong approximate minimum index set}, which requires the approximate
solution to be bounded by the other values whose indices are not in the set within
at most $\varepsilon$-error (see \cref{definition:strong-approx-min} for a
formal definition).

\subparagraph*{Relationships between the Two Definitions.}
To the best of our knowledge, neither of the above definitions has been proposed
for $k\ge 2$, and their relationships are not well understood before.
Although the above definitions are equivalent for $k = 1$, this equivalence does
not hold for larger $k$ anymore.
In the paper, we show that a strong approximate minimum index set must be a weak
one, but not vice versa (see \cref{sec:approximate-k-minimum-set-define} for
more details).
We also note that, in certain scenarios, a strong $(k, \varepsilon)$-approximate
minimum index set is needed and a weak one does not suffice (see
\cref{remark:fill-gaps-GJLW23} for more details).

\subsection{Main Results}

Given an approximate oracle as an input, we design \emph{almost optimal} quantum
algorithms for $k$-minimum finding, based on the concepts of weak and strong
approximate minimum index sets.

\begin{theorem}[Quantum Approximate $k$-Minimum Finding, informal version
  of \cref{thm:approx-find-weak-min}
  and \cref{thm:approx-find-strong-min}]\label{thm:main-intro}
  Suppose $V$ is an $(\varepsilon, \delta)$-approximate oracle for a vector
  $v\in \mathbb{R}^n$, where $\delta > 0$ is sufficiently small.
  \begin{itemize}
    \item \textsc{Weak Approximate Minimum Index Set Finding}: There is a
          quantum algorithm that, with high probability, outputs a weak
          $(k, 2\varepsilon)$-approximate minimum index set $S$ for $v$, using
          $\widetilde{O}\rbra{\sqrt{nk}}$ queries to $V$.
    \item \textsc{Strong Approximate Minimum Index Set Finding}: There is a
          quantum algorithm that, with high probability, outputs a strong
          $(k, 7\varepsilon)$-approximate minimum index set $S$ for $v$, using
          $\widetilde{O}\rbra{\sqrt{nk}}$ queries to $V$.
  \end{itemize} 
\end{theorem}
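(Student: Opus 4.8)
The plan is to prove the weak and strong statements separately, in each case by taking a known exact-oracle algorithm and replacing its exact comparisons by a \emph{robust} comparison primitive distilled from the approximate oracle, recovering the $k=1$ case of~\cite{vAGGdW20} as a base case. As a preliminary normalization, note that an $(\varepsilon,\delta)$-approximate oracle can be turned into an $(\varepsilon,\delta')$-approximate one for any $\delta'>0$ at a multiplicative query cost of $O\rbra{\log(1/\delta')/\log(1/\delta)}$ --- compute $O(\log(1/\delta'))$ independent estimates into ancilla registers, take their median, and uncompute --- so I may assume $\delta\le 1/\poly(n,k)$ at the price of only $\polylog$ factors hidden in $\widetilde{O}(\cdot)$. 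A single query then produces, outside a subspace of amplitude $O(\sqrt\delta)$, a superposition of values all within $\varepsilon$ of $v_i$. From this I would build a \emph{fuzzy threshold test}: a unitary that on input $(i,\tau)$ flags $i$, the flag being deterministically $1$ when $v_i<\tau-\varepsilon$, deterministically $0$ when $v_i>\tau+\varepsilon$, and an arbitrary superposition in between. Feeding this into amplitude amplification with an unknown number of solutions yields a search that, whenever some index with $v_i<\tau-\varepsilon$ exists, returns with high probability an index with $v_i<\tau+\varepsilon$ using $O(\sqrt{n})$ expected queries, and likewise a ``find up to $k$ flagged indices'' routine using $\widetilde{O}(\sqrt{nk})$ queries. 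A hybrid argument then bounds the total damage: across the $\widetilde{O}(\sqrt{nk})$ queries of the whole algorithm, the accumulated amplitude coming from the bad subspaces is $O\rbra{\sqrt{nk}\cdot\sqrt\delta}=o(1)$, so in the analysis the oracle may be treated as if it always returned clean $\varepsilon$-approximations.

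For the weak statement I would then run the quantum $k$-minimum finding algorithm of~\cite{DHHM06} (a quantum quickselect) on top of this primitive: maintain a set $T$ of at most $k$ indices found so far together with a threshold $\tau$ equal to the largest of their estimated values, and repeatedly call the robust search to locate an index outside $T$ whose estimated value lies below $\tau$, updating $T$ and $\tau$. Each comparison now pits two $\varepsilon$-imprecise quantities --- the candidate and the pivot, both seen only through the oracle --- against each other, so every inequality driving the correctness argument holds only up to $2\varepsilon$; tracking this slack shows that the output set $S$ has the property that its $j$-th smallest true value lies within $2\varepsilon$ of $v_{(j)}$ for each $j\le k$, that is, $S$ is a weak $(k,2\varepsilon)$-approximate minimum index set. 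The $\widetilde{O}(\sqrt{nk})$ query bound is inherited from the exact-oracle analysis, up to the $\polylog$ overhead of median-boosting and of $O(\log)$-fold repetition for success amplification.

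For the strong statement the additional task is to rule out an index $j\notin S$ with $v_j$ far below $\max_{i\in S}v_i$, which a merely weak $S$ permits. I would first pin down the $k$-th smallest value $\theta:=v_{(k)}$ to additive precision $O(\varepsilon)$ by binary search on a threshold, using approximate quantum counting --- amplitude estimation~\cite{BHMT02} --- of $\abs{\set{i}{\widetilde v_i<\tau}}$ near the target value $k$; each counting step costs $\widetilde{O}(\sqrt{nk})$ and $O(\log(1/\varepsilon))$ steps suffice. Then I would (i) collect, via the robust ``find up to $k$'' routine, \emph{all} indices with estimated value below $\theta-c\varepsilon$ --- there are fewer than $k$ of these, since for a suitable constant $c$ their true values fall strictly below $v_{(k)}$ --- and (ii) pad the set up to exactly $k$ indices using further indices with estimated value at most $\theta+c\varepsilon$. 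By construction every retained index has true value at most $\theta+(c+1)\varepsilon$ while every discarded index has true value at least $\theta-(c+1)\varepsilon$, so with the right choice of $c$ the resulting set is a strong $(k,7\varepsilon)$-approximate minimum index set, still within $\widetilde{O}(\sqrt{nk})$ queries.

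The step I expect to be the main obstacle is making the fuzzy, $\sqrt\delta$-faulty comparison primitive behave correctly inside nested amplitude amplification, where ``unknown number of solutions'' and boundary indices straddling the threshold occur simultaneously, and then showing that the accumulated amplitude error over all $\widetilde{O}(\sqrt{nk})$ queries stays $o(1)$ so that the clean exact-oracle correctness and complexity analyses transfer with only the claimed constant-factor loss ($\varepsilon\to2\varepsilon$ for the weak case, $\varepsilon\to7\varepsilon$ for the strong case). A secondary point requiring care is the combinatorics of the strong case: checking that ``all indices clearly below $\theta$, padded to size $k$'' is genuinely a strong set, and that the constants compose to exactly $7\varepsilon$.
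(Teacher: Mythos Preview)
Your proposal takes a genuinely different route from the paper in both parts.

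For the weak case, the paper does \emph{not} run the D\"urr--H\o yer--Heiligman--Magniez threshold-swap iteration with a fuzzy comparator. Instead it calls the generalized minimum finder $\mathsf{FindMin}$ of~\cite{vAGGdW20} directly on the superposition of estimates, $k$ times with probability parameter $t/n$ for $t=k,k-1,\dots,1$, each time hiding the previously found indices. Correctness is isolated into a clean combinatorial lemma (\emph{error-tolerant incrementability}): if $S$ is a weak $(j,\varepsilon)$-set and $a\notin S$ satisfies $v_a\le v_{s_{j+1}}+\varepsilon$, then $S\cup\{a\}$ is a weak $(j{+}1,\varepsilon)$-set. This lemma plus the $\mathsf{FindMin}$ guarantee immediately gives the weak $(k,2\varepsilon)$-set and the $\sum_{t=1}^k\widetilde O(\sqrt{n/t})=\widetilde O(\sqrt{nk})$ bound. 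Your approach---maintain $T$, set $\tau=\max_{i\in T}\widetilde v_i$, Grover-search for an index below $\tau$, swap out the max---may well work, but the step ``tracking this slack shows the output is a weak $(k,2\varepsilon)$-set'' hides a real difficulty: with the fuzzy comparator, a swap can bring in an index whose true value exceeds the true max of $T$ by up to $2\varepsilon$, and the element swapped out is the estimated max rather than the true one, so the standard DHHM potential argument (number of true minima in $T$ is nondecreasing) no longer applies. You would need either a new potential or a separate argument bounding both the number of ``bad'' swaps and the final error; the paper's decomposition avoids this entirely.

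For the strong case, the paper again avoids your binary search: it simply reads off $v_{\textup g}\approx v_{s_k}$ from the weak algorithm's output $S_0$, then uses amplitude estimation ($\mathsf{QCount}$) once to size the set $\{(i,\nu):\nu\le v_{\textup g}-5\varepsilon\}$, and repeatedly samples from that set via amplitude amplification. A coupon-collector argument shows that $\widetilde O(k)$ samples collect \emph{all} indices with $v_i\le v_{\textup g}-6\varepsilon$; taking the $k$ smallest (by fresh estimates) from these samples together with $S_0$ yields a strong $(k,7\varepsilon)$-set. Your collect-all-then-pad scheme is a reasonable alternative, but note that the ``find all flagged'' subroutine is itself delicate when boundary indices are only stochastically flagged: you must argue that the search cannot terminate while a deterministically-flagged index remains unfound, and that fuzzy indices do not inflate the iteration count beyond $k$. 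The paper's sampling approach sidesteps this by never relying on a ``search failed'' signal.

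Your median-boosting and hybrid-argument treatment of the $\delta$ error is essentially what the paper does as well.
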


Actually, a weak approximate minimum index set is not necessarily a strong one
(see \cref{example:no-compos-strong-min-set}).
This is because the weak approximate minimum index set focuses on entry-wise
approximation while the strong one considers the global property that small
values should not be ignored.
The need of strong approximate minimum index sets exists in quantum machine
learning.
For example, in~\cite{GJLW23}, a strong approximate minimum index set is
required; in case the approximate minimum index set is weak, their key
intermediate conclusion~\cite[Proposition D.12]{GJLW23} no longer holds.
Due to the well-known $\Omega\rbra{\sqrt{nk}}$ bound of quantum $k$-minimum
finding~\cite{DHHM06}, and the exact oracle being a special case of the
approximate oracle, our algorithms for approximate $k$-minimum finding achieves
almost optimal query complexity for this problem.

The assumption that $\delta$ needs to be sufficiently small is not critical in
applications.
In fact, by applying the median trick to the approximate oracle, similar to
techniques used in phase estimation and amplitude estimation~\cite{NWZ09}, the failure
probability could be reduced with an extra logarithmic overhead if it is smaller
than $1/3$.

\subsection{Techniques}

The algorithm for finding a weak minimum index set serves as the basis of the
algorithm for finding a strong index set, although the techniques used in
designing these algorithms are quite different.
We now present the techniques for both algorithms separately.

To design a quantum algorithm for finding the weak approximate minimum index
set, we first observe an important property of the weak approximate minimum
index set which we call \emph{error-tolerant incrementability}
(\cref{prop:comp-weak-min-set}).
This property allows us to add an ``approximately'' $(k+1)$ smallest entry to
a weak $(k, \varepsilon)$-approximate minimum index set and form a weak
$(k+1, \varepsilon)$-approximate minimum index set.
It serves as the key property in demonstrating weak approximate minimum index
set finding (\cref{thm:algo-construct-weak-min}), and enables us to solve the
weak $k$-minimum finding problem by repeatedly applying the quantum approximate
minimum finding with an approximate oracle~\cite{vAGGdW20, CdW23}.
We can therefore mimic the way how the quantum $k$-minimum finding~\cite{DHHM06}
can be solved using original quantum minimum finding~\cite{DH96}.
To be more concrete, our algorithm (\cref{algo:approx-weak-min-find}) works as
follows: in the $j$-th iteration of the $k$ runs, the algorithm finds an element
not selected yet with value (approximately) no greater than the
$\rbra{k-j+1}$-th element.
While this approach may seem adaptable to more general cases, it turns out that
it can only return a \emph{weak} approximate minimum index set, since strong
approximate minimum index sets do not exhibit the error-tolerant
incrementability property.

Nevertheless, we still manage to design a quantum algorithm for finding a strong
approximate minimum index set with several additional new ideas.
The key insight is that the algorithm for finding a weak approximate minimum
index set could be used to estimate the value of the $k$-th smallest element
approximately.
Thus, we could set a threshold $u$ which is slightly smaller than the value of
the $k$-th smallest element.
The threshold $u$ can be understood as a filter, capturing small elements that
must not be ignored.
Having obtained the threshold $u$, we focus on all estimates with value lower
than $u$ (including those estimates whose corresponding elements are larger than
$u$), and then sample (at most) $\widetilde{O}\rbra{k}$ indices.
From these, we can find $k$ indices that form a \textit{strong} approximate
minimum index set.

More specifically, we begin with a quantum state
\[
\ket{\Phi} = \frac{1}{\sqrt{n}}\sum_{j=1}^n \ket{j} \ket{\Lambda_j}, 
\]
where $\ket{\Lambda_j} = \sum_{\nu} \alpha_{j,\nu} \ket{\nu}$ is a superposition
of all possible estimates of the $j$-th element.
If we focus on the register containing the estimate values $\nu$, then we can
write
\[
\ket{\Phi} = \sum_{\nu} \ket{S_\nu} \ket{\nu},
\]
where $\ket{S_{\nu}}$ is a superposition of indices. 
If we now condition on whether $\nu$ is lower than the threshold $u$, the
quantum state can be written as
$\ket{\Phi} = \ket{\Phi_{\leq u}} + \ket{\Phi_{> u}}$, where
\[
\ket{\Phi_{\leq u}} = \sum_{\nu \leq u} \ket{S_{\nu}} \ket{\nu}.
\]
By quantum amplitude amplification~\cite{BHMT02}, we can sample one index from
$\ket{\Phi_{\leq u}}$ by using $O\rbra{1/\Abs{\ket{\Phi_{\leq u}}}}$ queries to
the unitary oracle that prepares $\ket{\Phi}$.
Using an argument similar to the coupon collector's problem, we prove that
sampling sufficiently many (precisely,
$\Theta\rbra{n\Abs{\ket{\Phi_{\leq u}}}^2}$) times will collect all the
``small'' elements with high probability.
Finally, we use simple post-processing to find a strong approximate minimum
index set (see \cref{sec:quantum-algo-for-strong-min} for more details).

\subsection{Applications}

Our algorithms for finding approximate minimum index sets can be applied to
several practical scenarios, some of which are formally discussed
in \cref{sec:applications}.

\subparagraph*{Approximate \texorpdfstring{$\bm{k}$}{k} Minima in Multiple Expectations.}
Given purified access to $n$ quantum states ${\{\rho_j\}}_{j = 1}^n$, and
block-encoding access to $n$ observables ${\{O_j\}}_{j=1}^n$, the task is to
approximately find the $k$ minimal values among expectations
$\tr \rbra{O_j\rho_j}$.
The expectation $\tr\rbra{O\rho}$ of an observable $O$ with respect to a quantum
state $\rho$ can be estimated, for example, by the Hadamard test~\cite{AJL09} or
quantum measurements.
The problem has been considered in the literature in various situations where,
in most of the cases, the same quantum state are used for different observables.
These include quantum SDP solvers~\cite{BKLLSW19,vAGGdW20,vAG19a}, shadow
tomography~\cite{HKP20, Aar20, BO21} and multiple expectation
estimation~\cite{HWM+23}.
By our quantum approximate $k$-minimum finding with amplitude estimation, we can
find an $\varepsilon$-approximate of the $k$ minimal expectations in
$\widetilde{O}\rbra{\sqrt{nk}/\varepsilon}$ time.

\subparagraph*{Approximate \texorpdfstring{$\bm{k}$}{k} Ground State Energies.}
Given block-encoding access to a Hamiltonian with the known eigenbasis, our
algorithm can be used to find an $\varepsilon$-approximate of the $k$ ground
state energies in $\widetilde{O}\rbra{\sqrt{nk}/\varepsilon}$ time, which could
be seen as a generalization of the approximate ground state energy
finding~\cite{vAGGdW20}.

In the special case when the Hamiltonian is diagonal (i.e., with the
computational basis as its eigenbasis), our quantum algorithm for approximate
$k$-minimum finding can be used in quantum multi-Gibbs sampling~\cite{GJLW23}.
See \cref{remark:fill-gaps-GJLW23} for a more detailed discussion.

\begin{remark}\label{remark:fill-gaps-GJLW23}
  For the special case where the approximate oracle is implemented by quantum
  amplitude estimation, a quantum $k$-minimum finding was claimed
  in~\cite{GJLW23}.
  However, an error in their Theorem C.3 was found due to their incorrect
  error-reduction of consistent quantum amplitude estimation.
  Nevertheless, using our quantum \emph{strong} approximate $k$-minimum finding
  with an approximate oracle, we can fill the gaps in their proof, and their
  result still holds.
\end{remark}

\section{Preliminaries}

\subsection{Notations}
For $n\in \mathbb{Z}_{+}$, we use $\sbra{n}$ to denote the set
$\{1, 2, \dots, n \}$, and we set $[0] = \emptyset$ for convenience.
$\mathbb{R}^{n}_{\ge 0}$ stands for the set of $n$-dimensional vectors with
non-negative entries.
For sets $A$ and $B$, we use $A-B$ to denote the set consisting of the elements
in $A$ but not in $B$.
We assume the readers are familiar with the basic knowledge of quantum computing
(see~\cite{NC10} for a reference).
In this paper, when discussing the query complexity of $U$, we assume having
query access to $U$, $U^{\dagger}$ and their controlled versions.

\subsection{Concepts in Quantum Computing}

We use a quantum data structure called $\mathsf{QSet}$ to store the found
indices in our algorithm.
The functionality of this data structure is characterized by the following
proposition.\footnote{If we allow the use of quantum(-read classical-write)
  random access memory (QRAM) (see~\cite{AdW22} for a more detailed discussion),
  all these operations could be implemented efficiently (i.e., in
  $\widetilde{O}(1)$ one- and two-qubit gates and QRAM read and write
  operations).}

\begin{proposition}
  There is a quantum data structure $\mathsf{QSet}$ that supports the following
  operations.
  \begin{itemize}
    \item Initialization $\mathsf{QSet.Initialize}(n)$: return an instance
          $\mathcal{S}$ of $\mathsf{QSet}$, with a set $S$ initialized to empty
          set which can store at most $n$ elements.
    \item Add $\mathsf{S.Add}(j)$ for an $j\in [n]$: add the index $j$ to $S$.
    \item Hide $\mathsf{S.Hide}()$: output a unitary satisfying
          \[
            \mathsf{S.Hide}()\ket{i}\ket{j} = \ket{i}\ket{2\times \mathbf{1}_S(i) + j}
          \]
          for any computational basis $\ket{j}$, where $\mathbf{1}_S(\cdot)$ is
          the indicator function of the set $S$, i.e, $\mathbf{1}_S(i) = 1$ if
          $i\in S$, and $\mathbf{1}_S(i) = 0$ if $i\in [n]-S$.
    \end{itemize}
\end{proposition}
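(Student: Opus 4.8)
The plan is to keep the set $S$ completely explicit: store it as its characteristic vector $b\in\{0,1\}^n$ in classical memory, where $b_i=1$ exactly when $i\in S$. With this representation the first two operations are immediate: $\mathsf{QSet.Initialize}(n)$ allocates the length-$n$ array and zeroes it, and $\mathsf{S.Add}(j)$ writes $b_j\gets 1$. All the substance is in realizing $\mathsf{S.Hide}()$ as a genuine unitary that reads this classical data coherently.

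For $\mathsf{S.Hide}()$ I would compute the membership bit $\mathbf{1}_S(i)$ into an ancilla, use it to conditionally shift the $j$-register, and then uncompute it. Concretely: introduce a one-qubit ancilla in $\ket{0}$; apply the table-lookup gate $R_b$ with $R_b\ket{i}\ket{c}=\ket{i}\ket{c\oplus b_i}$ (so $R_b=R_b^{-1}$, and its action depends only on the index register); apply the fixed unitary $A$ that, controlled on the ancilla, adds $2$ to the $j$-register, reading $+$ as addition modulo the dimension of that register so that $A$ permutes basis states and is unitary; and apply $R_b$ once more to clean the ancilla. Tracing a basis state,
\[
  \ket{i}\ket{0}\ket{j} \;\xrightarrow{\,R_b\,}\; \ket{i}\ket{b_i}\ket{j}
  \;\xrightarrow{\,A\,}\; \ket{i}\ket{b_i}\ket{j+2b_i}
  \;\xrightarrow{\,R_b\,}\; \ket{i}\ket{0}\ket{j+2b_i},
\]
and since $b_i=\mathbf{1}_S(i)$, discarding the now-disentangled ancilla yields exactly $\ket{i}\ket{j}\mapsto\ket{i}\ket{2\mathbf{1}_S(i)+j}$ for every computational basis state $\ket{j}$; the circuit is unitary because each factor is.

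For the efficiency remark in the footnote I would note that with QRAM each $R_b$ is a single read, $A$ is an $\widetilde{O}(1)$-gate constant adder, and $\mathsf{Initialize}$ and $\mathsf{Add}$ are $\widetilde{O}(1)$ writes, so every operation costs $\widetilde{O}(1)$. Without QRAM the same $R_b$, hence $\mathsf{S.Hide}()$, can still be built from $O(n)$ elementary gates — for each $i\in S$, compare the index register against $i$ and conditionally flip the ancilla — so the data structure exists unconditionally, which is all the proposition claims.

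I do not expect a real obstacle here: this is a bookkeeping construction. The one point worth stating carefully is that $\mathbf{1}_S(i)$ must be computed \emph{coherently} from the index register and then \emph{uncomputed} after the conditional adder, which is legitimate precisely because $\mathbf{1}_S(i)$ depends on $i$ alone, so the second application of $R_b$ restores the ancilla exactly; and that the $+2$ shift should be read modularly so that $\mathsf{S.Hide}()$ is a permutation of basis states.
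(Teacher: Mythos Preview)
Your construction is correct and is the standard compute--use--uncompute realization one would expect. The paper, however, does not actually supply a proof of this proposition: it is stated as a basic fact, with only the footnote remarking that under QRAM all three operations cost $\widetilde{O}(1)$. So there is nothing to compare against beyond that remark, and your argument fills in precisely the details the paper leaves implicit, including the QRAM cost accounting and the unconditional $O(n)$-gate fallback.
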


Block-encoding is a very useful concept in quantum algorithms proposed
in~\cite{GSLW19}, and we recall its the definition as follows.

\begin{definition}[Block-Encoding~{\cite[Definition 24]{GSLW19}}]
  Let $A$ be a linear operator acting on an $s$-qubit Hilbert space.
  For $\alpha, \varepsilon >0$ and integer $a$, we say an $(a+s)$-qubit unitary
  operator $U$ is an $(\alpha, a, \varepsilon)$-block-encoding of $A$, if
  \[
    \Abs{A-\alpha \bra{0}^{\otimes a} U \ket{0}^{\otimes a}} \le \varepsilon.
  \]
\end{definition}

When $a$ and $\varepsilon$ could be omitted (for instance, $a = O(1)$ and
$\varepsilon = 0$, or the values of $a$ and $\varepsilon$ have shown up in
previous context), we simply call $U$ an $\alpha$-block-encoding of $A$.

\section{Approximate Oracle}\label{sec:approximate-oracle}

\begin{definition}[Approximate Oracle]\label{def:approx-oracle}
  Suppose $v = (v_1, \ldots, v_n)\in \mathbb{R}^{n}_{\ge 0}$ is an
  $n$-dimensional vector with $v_j \in \interval{0}{1}$ for $j\in [n]$.
  For $\varepsilon \in \interval[open right]{0}{1/2}$ and
  $\delta\in \interval[open right]{0}{1/2}$, a unitary $V$ is called an
  $(\varepsilon, \delta)$-approximate oracle for $v$, if it satisfies
  $V\ket{i}\ket{j} = \ket{i}\ket{\Lambda_i+j}$ for all $i\in [n]$, where
  \begin{equation*}
    \ket{\Lambda_{i}+j} = \sqrt{p_{i}^{\varepsilon}}
    \ket{\Lambda_{i}^{\varepsilon}+j} + \sqrt{1-p_{i}^{\varepsilon}}
    \ket{\Lambda_{i}^{\varepsilon \perp}+j},
  \end{equation*}
  is a normalized state in a $d$-dimensional space,
  \begin{equation*}
    \ket{\Lambda_i^{\varepsilon}+j}
    = \sum_{\nu\colon \abs{\nu-v_i}\le \varepsilon}
    \alpha_{\nu}^{\rbra{i}} \ket{\nu+j},
 \end{equation*}
 is a superposition of values $\ket{\nu}$ $\varepsilon$-close to $v_{i}$,
 $\ket{\Lambda_{i}^{\varepsilon \perp}+j}$ is the state orthogonal to
 $\ket{\Lambda_{i}^{\varepsilon}+j}$, and $p^{\varepsilon}_{i} \ge 1- \delta$
 for all $i\in [n]$,\footnote{We also allow the case where $V$ produces
   ancillary states, namely
   $V \ket{i}\ket{j}\ket{0} = \ket{i} \ket{\Lambda_i + j} \ket{\psi_j}$, which
   will not influence our proofs.}
\end{definition}

\begin{remark}
  In simple words, the definition states that $V$ is an
  $(\varepsilon, \delta)$-approximate oracle for $v$, if for each $i\in[n]$, by
  querying $i$, with probability at least $1-\delta$, $V$ will return (a
  superposition) of estimated values in
  $\interval{v_i-\varepsilon}{v_i+\varepsilon}$.
  As an example, such a scenario shows up when the phase estimation procedure is
  used to estimate the values of $v_{i}$.
\end{remark}

By using the median trick~\cite{NWZ09}, we have the following proposition
stating that the success probability of the approximate oracle can be
efficiently amplified.

\begin{proposition}[Failure Probability Reduction of the Approximate
  Oracle]\label{prop:fail-prob-reduction-of-approximate-oracle}
  Suppose $v\in \mathbb{R}^n$ is an $n$-dimensional vector and $V$ is an
  $(\varepsilon, \delta)$-approximate oracle for $v$, for some
  $\varepsilon\in (0, 1/2)$ and $\delta \in (0, 1/3)$.
  Then, for any $\delta' \in \interval[open]{0}{\delta}$, one can construct an
  $(\varepsilon, \delta')$-approximate oracle for $v$, using
  $O\rbra{\log \rbra{\delta/\delta'}}$ queries to $V$.
\end{proposition}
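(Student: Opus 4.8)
The plan is to implement the standard ``median trick'' at the level of the oracle's output register, but carefully tracking where the $\varepsilon$-closeness and the $\delta$-error live in the structure of \cref{def:approx-oracle}. First I would fix the target failure probability $\delta'$ and pick an odd integer $m = \Theta(\log(\delta/\delta'))$; the precise constant will be pinned down by the Chernoff bound below. The new oracle $V'$ on input $\ket{i}\ket{j}$ will: (i) prepare $m$ independent copies of the estimate register by applying $V$ to $m$ fresh ancilla blocks, obtaining (up to the harmless ancillary states allowed by the footnote) a state of the form $\ket{i}\bigotimes_{\ell=1}^{m}\ket{\Lambda_i^{(\ell)}}$; (ii) coherently compute the median of the $m$ values held in those registers into a new register, adding $j$ to it; and (iii) this median register, tensored with the $m$ source registers viewed as garbage, is the output. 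The key point is that this whole procedure is a unitary, and it has exactly the syntactic shape $V'\ket{i}\ket{j} = \ket{i}\ket{\Lambda_i' + j}\ket{\text{garbage}}$ required by the definition (using the footnote's allowance for ancillary states).

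The heart of the argument is the analysis of the ``good'' subspace of the median register. Expand each $\ket{\Lambda_i^{(\ell)}}$ as $\sqrt{p_i^\varepsilon}\ket{\Lambda_i^{\varepsilon,(\ell)}} + \sqrt{1-p_i^\varepsilon}\ket{\Lambda_i^{\varepsilon\perp,(\ell)}}$, so the tensor product over $\ell$ expands into $2^m$ branches indexed by subsets $T\subseteq[m]$ of the copies that landed in the good ($\varepsilon$-close) part. For any branch in which a strict majority of the $m$ copies are good (i.e.\ $|T| > m/2$), the computed median is necessarily a value that lies between the smallest and largest \emph{good} values, hence is within $\varepsilon$ of $v_i$ — this is the classical ``median of mostly-correct estimates is correct'' observation, applied branchwise. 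The squared amplitude on the union of all such majority-good branches is exactly the probability that a $\mathrm{Binomial}(m, p_i^\varepsilon)$ random variable exceeds $m/2$; since $p_i^\varepsilon \ge 1-\delta > 2/3$, a Chernoff/Hoeffding bound gives that this is at least $1 - 2^{-\Omega(m)}$, and choosing the constant in $m = \Theta(\log(\delta/\delta'))$ appropriately makes it at least $1-\delta'$. Grouping the amplitude of the median register into the part supported on values $\varepsilon$-close to $v_i$ versus its orthogonal complement then exhibits $V'$ as an $(\varepsilon,\delta')$-approximate oracle, with $p_i'^{\,\varepsilon}\ge 1-\delta'$. The query count is $m = O(\log(\delta/\delta'))$ calls to $V$ by construction (controlled versions included, since the whole of $V'$ is built from $V$, $V^\dagger$ trivially if needed, and comparison/median arithmetic which uses no queries).

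I expect the main obstacle to be a matter of bookkeeping rather than a deep difficulty: one must be scrupulous that the ``good when majority good'' step is genuinely a statement about orthogonal \emph{subspaces} of the median register, not merely about a classical distribution. Concretely, the subtlety is that different $T$-branches with $|T|>m/2$ can produce the \emph{same} median value $\nu$ (and branches with $|T|\le m/2$ might also, coincidentally, produce a good $\nu$), so one cannot simply identify ``majority-good branches'' with ``the $\varepsilon$-close part of the median register.'' The clean way around this is a one-directional bound: let $\ket{\Lambda_i'}$ be the actual output state, decompose it as $\sqrt{p_i'^{\,\varepsilon}}\ket{\Lambda_i'^{\,\varepsilon}} + \sqrt{1-p_i'^{\,\varepsilon}}\ket{\Lambda_i'^{\,\varepsilon\perp}}$ by projecting onto $\mathrm{span}\{\ket{\nu}: |\nu - v_i|\le\varepsilon\}$, and observe that the projector onto this span, when applied to the pre-median state, \emph{dominates} (in the Löwner / squared-norm sense) the projector onto the span of majority-good branches — because every majority-good branch lands in the good span. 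Hence $p_i'^{\,\varepsilon} \ge \Pr[\mathrm{Bin}(m,p_i^\varepsilon) > m/2] \ge 1-\delta'$, which is all we need. Everything else — the exact Chernoff constant, the arithmetic circuit for the median, checking that controlled-$V'$ costs $O(\log(\delta/\delta'))$ controlled-$V$ queries — is routine and I would state it without grinding through it.
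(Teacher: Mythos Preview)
Your proposal is correct and is precisely the median-trick construction the paper invokes; in fact the paper does not give a detailed proof at all, merely citing~\cite{NWZ09}, so you have supplied more detail than the paper itself. Your care about the distinction between ``majority-good branches'' and ``the $\varepsilon$-close subspace of the median register,'' resolved via the one-directional domination inequality, is exactly the right way to make the coherent version of the argument rigorous.
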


\subsection{Comparisons of Different
  Oracles}\label{subsec:comparison-different-oracle}

In this part, we briefly investigate different input oracles considered in
previous literature for the quantum minimum finding problem, namely the
bounded-error oracle and the untrustworthy oracle.

\subparagraph*{Bounded-error Oracle.}
    As the name suggests, bounded-error considers the case when only bounded-error quantum query
    access to the input is provided for computation tasks, such as quantum
    search~\cite{HMdW03} and quantum minimum finding~\cite{WY24}.
    This kind of oracle model could return the \emph{correct exact answer} with
    bounded error (i.e., high probability), while the exact oracle returns the
    correct exact answer with certainty.
    Therefore, bounded-error oracle could be seen as a slight generalization of the
    exact oracle.
    Their computational powers are close, as the bounded-error oracle can simulate
    the exact oracle by majority voting with logarithmic repetitions.
    
\subparagraph*{Untrustworthy Oracle.}
    The precision issue for the oracle model motivates researchers to consider
    another oracle model~\cite{QCR20}, which we call the untrustworthy
    oracle.\footnote{The untrustworthy oracle is originally called ``noisy oracle''
    in~\cite{QCR20}.
    However, we renamed it according to its behavior, to avoid the conflict with
    the noisy oracle related to physical noises.}
    The untrustworthy oracle will return the correct comparison result between two
    elements if they are far away enough; and an unknown, possibly adversary (but
    deterministic) comparison result, if the distance between them is less than a
    fixed threshold $\varepsilon$.
    In this model, quantum minimum finding over $n$ elements could be done in time
    $\widetilde{O}\rbra{\sqrt{n(1+\Delta)}}$, where $\Delta$ is the upper bound on
    the number of elements within any interval of length $\varepsilon$.
    The result shows that, achieving quadratic speedup for the minimum finding in
    this model may pose strict requirements on the inputs.

We note that our definition of approximate oracle includes exact oracle and
bounded-error oracle as special cases.
In fact, a bounded-error oracle with bounded-error $\delta$ is a
$(0, \delta)$-approximate oracle by definition.
The approximate oracle model is generally incomparable to the untrustworthy
oracle, see \cref{remark:approximate-and-unstrustworty-oracle} for a detailed
discussion.

\begin{remark}\label{remark:approximate-and-unstrustworty-oracle}
  The untrustworthy oracle is incomparable with the approximate oracle.
  This can be seen from two perspectives.
  First, the latter cannot simulate the former, because the comparison between
  close elements by the approximate oracle is always in superposition.
  Second, the former reveals more information than the latter when comparing two
  elements $x_1, x_2$ with distance slightly above the threshold, say
  $\abs{x_1 - x_2} = 1.1\varepsilon$.
  In this case, the former returns the correct comparison with certainty, while
  the latter may not be able to even distinguish the two elements (in extreme
  cases, the approximate values of the two elements $x_1, x_2$ can be the same,
  i.e.
  $\rbra{x_1+x_2}/2$, with certainty).
\end{remark}

There are other types of input oracles that are quantum channels due to
imperfect implementations and noise; they were considered in the unstructured
search problem in the literature,
e.g.,~\cite{SBW03,ABNR12,KNR18,Ros23,HLS24,Ros24}.
In particular, in~\cite{Ros23,Ros24}, it was shown that no quadratic speedup
exists in the presence of (constant-rate) depolarizing or dephasing noise after
each application of the exact oracle.
By contrast, the bounded-error, untrustworthy and approximate oracles are all
unitary.

\section{Approximate \texorpdfstring{$\bm{k}$}{k}-Minimum Index Set:
Weak and Strong Definitions}%
\label{sec:approximate-k-minimum-set-define}

The problem of finding the smallest element from a set of $n$ numbers using an
oracle that provides an approximate value has been a topic of interest in
quantum computation studies, specifically quantum algorithms for optimization
tasks~\cite{vAGGdW20, CdW23}.
As discussed in \cref{sec:approximate-oracle}, a natural setup where the values
of vector $v$ are prepared involves employing the amplitude estimation procedure
with a limited precision parameter $\varepsilon$ using $V$.
Given the limitation of the estimation precision, it is reasonable to expect
that we could only find an \textit{approximate} minimum value that is
$O\rbra{\varepsilon}$-close to the true minimum.
We also observe that, instead of finding the approximate minimum value, it
suffices to find the index of the approximate minimum value.
Thus, motivated by prior results (Theorem 50 in~\cite{vAGGdW20} and Theorem 2.4
in~\cite{CdW23}), we present the following notion of $\varepsilon$-approximate
minimum index, which were implicit in those prior works.



\begin{definition}[Approximate Minimum Index]
  Let $v = \rbra{v_{1}, v_{2}, \dots, v_{n}}\in \mathbb{R}^{n}_{\ge 0} $ be an
  $n$-dimensional vector.
  Suppose the coordinates of $v$ are sorted from small to large as
  \[
    v_{s_{1}}\le v_{s_{2}} \le v_{s_{3}}\le \cdots \le v_{s_{n}},
  \]
  then, an $\varepsilon$-approximate minimum index for $v$ is an index
  $i\in [n]$ satisfying $ v_{i} \le v_{s_{1}}+ \varepsilon$.
\end{definition}

\begin{remark}
  For some $v$ and $\varepsilon$, there may exist multiple indices that are
  $\varepsilon$-approximate minimum indices of $v$.
  For instance, if $v = (1,1,\ldots, 1) \in \mathbb{R}^n$, then, for any
  $\varepsilon > 0$ and any $j\in [n]$, $j$ is an $\varepsilon$-approximate
  minimum index of $v$ by definition.
\end{remark}

In simple terms, $i$ is an $\varepsilon$-approximate minimum index for $v$ if
$ v_{i}\le \min_{j\in [n]} v_{j} + \varepsilon$.
Therefore, after identifying an $\varepsilon$-approximate minimum index $i$, one
can use amplitude estimation to obtain an $\varepsilon$-approximate estimate of
$v_i$, which is $O(\varepsilon)$-close to the actual minimum
$\min_{j\in [n]} v_j$.

Interestingly, generalizing this concept for finding $k$ smallest elements
approximately is challenging.
In this section, we introduce two definitions of the approximate $k$-minimum
index set, extending the notion of approximate minimum index in two distinct
ways.
Furthermore, we explore the relationships between these definitions.

We first propose the definition of weak $(k, \varepsilon)$-approximate minimum
index set, which requires that after sorting the elements in the set from small
to large, they are entry-wise $\varepsilon$-close to the real minima.

\begin{definition}[Weak Approximate Minimum Index
  Set]\label{definition:weak-approximate-min}
  Let $v = \rbra{v_{1}, v_{2}, \dots, v_{n}}\in \mathbb{R}^{n}_{\ge 0} $ be an
  $n$-dimensional vector.
  Suppose the coordinates of $v$ are sorted from small to large as
  \begin{equation}\label{eq:weak-approx-entry}
    v_{s_{1}}\le v_{s_{2}} \le v_{s_{3}}\le \cdots \le v_{s_{n}}.
  \end{equation}
  A set $S$ of size $k$ is called a \emph{weak $(k,\varepsilon)$-approximate
    minimum index set} for $v$ if there is an enumeration
  $i_{1}, i_{2}, \ldots, i_{k}$ of elements of $S$ such that
  $v_{s_{j}}\le v_{i_{j}} \le v_{s_{j}} + \varepsilon$ for all $j\in [k]$.
\end{definition}


The following incrementability property plays a very important role in designing
quantum algorithms for finding weak approximate minimum index sets.
Essentially, it states that from a weak $(k, \varepsilon)$-approximate
minimum index set, we can add an index whose value is no more than the
$(k+1)$-smallest value plus $\varepsilon$, yielding a weak
$(k+1, \varepsilon)$-approximate minimum index set.

\begin{proposition}[Error-Tolerant
  Incrementability]\label{prop:comp-weak-min-set}
  Let $v\in \mathbb{R}^n_{\ge 0}$ be an $n$-dimensional vector, and its
  coordinates can be sorted as
  \[
    v_{s_1} \le v_{s_2} \le \cdots \le v_{s_n}.
  \]
  Suppose $S$ is a weak $(k, \varepsilon)$-approximate minimum index set for a
  positive integer $k\in [n-1]$ and a positive real $\varepsilon$.
  Let $a_{k+1} \in [n]-S$ be an index satisfying
  $v_{a_{k+1}} \le v_{s_{k+1}} + \varepsilon$.
  Then, $S\cup \{a_{k+1}\}$ is a weak $(k+1, \varepsilon)$-approximate minimum
  index set for $v$.
\end{proposition}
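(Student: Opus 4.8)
The plan is to argue directly from the definition of a weak approximate minimum index set, performing a careful case analysis on where the new index $a_{k+1}$ sits in the sorted order. Let $i_1, i_2, \ldots, i_k$ be an enumeration of $S$ witnessing that $S$ is a weak $(k,\varepsilon)$-approximate minimum index set, so that $v_{s_j} \le v_{i_j} \le v_{s_j} + \varepsilon$ for all $j \in [k]$. We want to produce an enumeration of $S \cup \{a_{k+1}\}$ of length $k+1$ witnessing the weak $(k+1,\varepsilon)$ property. The natural candidate is to append $a_{k+1}$ at the end, i.e. set $i_{k+1} := a_{k+1}$, and hope that $v_{s_j} \le v_{i_j} \le v_{s_j}+\varepsilon$ continues to hold for all $j \in [k+1]$. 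The upper bound at position $k+1$ is exactly the hypothesis $v_{a_{k+1}} \le v_{s_{k+1}} + \varepsilon$, so the only thing that can fail is the lower bound $v_{a_{k+1}} \ge v_{s_{k+1}}$ — which need not hold, since $a_{k+1}$ could be one of the genuinely small elements that $S$ happened to miss.

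First I would handle the easy case: if $v_{a_{k+1}} \ge v_{s_{k+1}}$, then the appended enumeration works verbatim and we are done. Otherwise $v_{a_{k+1}} < v_{s_{k+1}}$, and I would look for a ``slot'' among the first $k$ positions into which $a_{k+1}$ fits, displacing the others. Concretely, let $\ell$ be the smallest index in $[k]$ with $v_{a_{k+1}} \le v_{s_\ell} + \varepsilon$; such an $\ell$ exists because $v_{a_{k+1}} < v_{s_{k+1}}$ forces $v_{a_{k+1}} \le v_{s_j}$ for some $j \le k$ (as $a_{k+1} \notin S$ and the multiset $\{v_{i_1},\ldots,v_{i_k}\}$ dominates $\{v_{s_1},\ldots,v_{s_k}\}$ entry-wise, the value $v_{a_{k+1}}$, being below $v_{s_{k+1}}$, must be at most one of the first $k$ sorted values). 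Then I would define the new enumeration by inserting $a_{k+1}$ at position $\ell$ and shifting $i_\ell, i_{\ell+1}, \ldots, i_k$ each one slot to the right, i.e. the new sequence is $i_1, \ldots, i_{\ell-1}, a_{k+1}, i_\ell, i_{\ell+1}, \ldots, i_k$. I must then verify the two-sided bound at every position. For positions $1, \ldots, \ell-1$ nothing changes. At position $\ell$: the upper bound is $v_{a_{k+1}} \le v_{s_\ell} + \varepsilon$ by choice of $\ell$, and the lower bound $v_{a_{k+1}} \ge v_{s_\ell}$ needs the minimality of $\ell$ — since $\ell$ is smallest with $v_{a_{k+1}} \le v_{s_\ell}+\varepsilon$, if $\ell \ge 2$ then $v_{a_{k+1}} > v_{s_{\ell-1}} + \varepsilon \ge v_{s_{\ell-1}}$, which gives $v_{a_{k+1}} \ge v_{s_{\ell-1}}$ but not immediately $v_{a_{k+1}} \ge v_{s_\ell}$; this gap is the crux and I will need to argue it more carefully (see below). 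For positions $\ell+1, \ldots, k+1$, the element formerly at sorted-position $j-1$ now sits at sorted-position $j$, so I need $v_{s_j} \le v_{i_{j-1}} \le v_{s_j} + \varepsilon$: the upper bound follows from $v_{i_{j-1}} \le v_{s_{j-1}} + \varepsilon \le v_{s_j} + \varepsilon$ using monotonicity of the sorted sequence, and the lower bound $v_{i_{j-1}} \ge v_{s_j}$ is again the delicate point.

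I expect the main obstacle to be exactly these lower-bound verifications after the shift: it is \emph{not} true in general that an element witnessing sorted-position $j-1$ also dominates $v_{s_j}$. The resolution is to choose the insertion point and the re-enumeration more cleverly rather than by a naive shift. The right move is a counting/pigeonhole argument: since $S \cup \{a_{k+1}\}$ has $k+1$ distinct indices and $v_{a_{k+1}} \le v_{s_{k+1}}+\varepsilon$ while each $v_{i_j} \le v_{s_j}+\varepsilon \le v_{s_{k+1}}+\varepsilon$, all $k+1$ values lie at or below $v_{s_{k+1}}+\varepsilon$; and a Hall-type / rearrangement argument shows that whenever a $(k{+}1)$-element index set has its values, when sorted as $w_1 \le \cdots \le w_{k+1}$, satisfying $w_j \le v_{s_j} + \varepsilon$ for each $j$, one automatically also gets $w_j \ge v_{s_j}$ — because the complement argument: among any $j$ of the chosen indices, at least one has value $\ge v_{s_j}$ is false in general, so instead I would directly prove the sorted-values statement. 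The clean formulation to prove is: \emph{if $T \subseteq [n]$, $|T|=m$, and the sorted values of $T$ are $w_1 \le \cdots \le w_m$ with $w_j \le v_{s_j}+\varepsilon$ for all $j \in [m]$, then also $w_j \ge v_{s_j}$ for all $j$, provided $m \le n$} — and this follows because $w_j$ is the $j$-th smallest among $m$ of the $v$'s, hence $w_j \ge v_{s_j}$ since at most $j-1$ of the global values are strictly below $v_{s_j}$'s position... I would need to be careful with ties, but the sorted-order inequality $w_j \ge v_{s_j}$ holds simply because $\{t \in T : v_t < v_{s_j}\}$ has size at most $j-1$ (there are at most $j-1$ indices in all of $[n]$ with value strictly less than $v_{s_j}$, counting multiplicity via the sort), so the $j$-th smallest element of $T$ has value $\ge v_{s_j}$. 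Thus the lower bounds are automatic for \emph{any} enumeration in sorted order, and the entire proposition reduces to checking the upper bounds $w_j \le v_{s_j}+\varepsilon$ for the sorted values of $S \cup \{a_{k+1}\}$, which is a short interlacing computation combining the hypothesis on $S$ with $v_{a_{k+1}} \le v_{s_{k+1}}+\varepsilon$. I would write the proof in that order: first the lemma that sorted-order placement automatically gives the lower bounds, then reduce to upper bounds, then a one-paragraph interlacing argument for the upper bounds.
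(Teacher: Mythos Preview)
Your proposal is correct, and the high-level strategy you converge on matches the paper's: both reduce the problem to the auxiliary fact that \emph{only the upper bounds $w_j \le v_{s_j}+\varepsilon$ need to be checked, because the lower bounds $w_j \ge v_{s_j}$ hold automatically for the sorted enumeration of any subset}. The paper isolates this as a separate Proposition (``A Sufficient Condition of Weak Approximate Minimum Index Set'') and then the incrementability proof is two lines: append $a_{k+1}$ to the given enumeration of $S$, observe that the upper bounds hold by hypothesis, and invoke the sufficient-condition proposition.

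Where you differ is in how you establish that auxiliary fact. The paper proves it by induction on $k$, doing a case split on where the newest element lands relative to the previously sorted ones and re-verifying both bounds in each case. Your route is a direct pigeonhole argument: at most $j-1$ indices in all of $[n]$ have value strictly below $v_{s_j}$, hence the $j$-th smallest value in any subset is at least $v_{s_j}$; and at least $j$ of the values $v_{i_1},\ldots,v_{i_{k+1}}$ are $\le v_{s_j}+\varepsilon$ (namely $v_{i_1},\ldots,v_{i_j}$, each bounded by $v_{s_l}+\varepsilon \le v_{s_j}+\varepsilon$), so the $j$-th smallest is $\le v_{s_j}+\varepsilon$. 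This is shorter and avoids the case analysis entirely; it also makes transparent that the lower bound needs no hypothesis at all on $T$. The initial shift-and-insert attempt in your write-up is a dead end (as you recognize), so in a clean version you should lead directly with the counting lemma and drop that exploration.
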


\begin{proof}
  See \cref{prop:comp-weak-min-set-appendix}.
\end{proof}

Next, we introduce a strong notion of approximate minimum index set, as the
weaker version might not be adequate for certain applications.
The definition builds upon the following characterization of a $k$-minimum index
set: For a set $S \subseteq [n]$ of size $k$ that constitutes a $k$-minimum
index set, it holds that $v_{j}\ge v_{i}$ for every $i \in S$ and
$j\in [n] - S$.
In other words, the inequality $\max_{i\in S} v_{i} \le v_{j}$ must be satisfied
for every $j\in [n] - S$.
If we relax these constraints to allow for an epsilon margin of error, we obtain
the following definition.

\begin{definition}[Strong Approximate Minimum Index
  Set]\label{definition:strong-approx-min}
  Let $v = \rbra{v_{1}, v_{2}, \dots, v_{n}} \in \mathbb{R}^{n}_{\ge 0} $ be an
  $n$-dimensional vector.
  Suppose we can sort the coordinates of $v$ from small to large as
  \begin{equation*}
    v_{s_{1}}\le v_{s_{2}} \le v_{s_{3}}\le \cdots \le v_{s_{n}},
  \end{equation*}
  Then, a set $S = \{i_{1}, i_{2}, \dots, i_{k}\} \subseteq [n]$ of size $k$ is
  called a \emph{strong $(k, \varepsilon)$-approximate minimum index set} for
  $v$, if for all $j\in [n] - S$, we have
  $\max_{i\in S} v_{i} \le v_{j} + \varepsilon$.
\end{definition}

As the name suggests, a strong $(k, \varepsilon)$-approximate minimum index set
is also a weak $(k, \varepsilon)$-approximate minimum index set, and we formally
state this proposition as follows.

\begin{proposition}\label{proposition:strong-approximate-implies-weak-main}
  For an $n$-dimensional vector $v\in \mathbb{R}^{n}_{\ge 0}$, let
  $S\subseteq [n]$ with size $k$ be a strong $(k,\varepsilon)$-approximate
  minimum index set for $v$.
  Then, $S$ is also a weak $(k,\varepsilon)$-approximate minimum index set for
  $v$.
\end{proposition}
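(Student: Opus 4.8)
The plan is to show that a strong approximate minimum index set satisfies the entry-wise condition of \cref{definition:weak-approximate-min} by comparing the sorted values of $v$ restricted to $S$ with the global sorted order $v_{s_1} \le \cdots \le v_{s_n}$. First I would fix the setup: let $S = \{i_1, \dots, i_k\}$ and, by relabeling, assume the enumeration is chosen so that $v_{i_1} \le v_{i_2} \le \cdots \le v_{i_k}$, i.e., we sort the values inside $S$. The goal is then to verify $v_{s_j} \le v_{i_j} \le v_{s_j} + \varepsilon$ for every $j \in [k]$.

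The left inequality $v_{s_j} \le v_{i_j}$ is the easy direction and holds with no error at all: since $i_1, \dots, i_j$ are $j$ distinct indices all having value at most $v_{i_j}$, the $j$-th smallest value overall $v_{s_j}$ can be no larger than $v_{i_j}$ — this is just the observation that among any $j$ indices, the largest value is at least the $j$-th order statistic of the whole vector. For the right inequality $v_{i_j} \le v_{s_j} + \varepsilon$, I would argue by contradiction or by a counting argument. Suppose $v_{i_j} > v_{s_j} + \varepsilon$ for some $j \in [k]$. Consider the $j$ indices $s_1, \dots, s_j$ realizing the $j$ smallest values. At least one of them, say $s_\ell$, must lie outside $S$: otherwise $\{s_1, \dots, s_j\} \subseteq S$, but then all of $i_1, \dots, i_k$ except for these would still need room, and in particular since $|S| = k \ge j$ we would need $\{s_1,\dots,s_j\}$ together with the remaining $k-j$ elements of $S$ to all fit in $S$ — actually the cleaner route is: if all of $s_1, \dots, s_j$ were in $S$, then in the sorted-within-$S$ order these $j$ elements occupy $j$ slots with values at most $v_{s_j}$, forcing $v_{i_j} \le v_{s_j}$, contradicting $v_{i_j} > v_{s_j} + \varepsilon$. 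Hence some $s_\ell \in [n] - S$ with $v_{s_\ell} \le v_{s_j}$. Now apply the strong approximation property to this $s_\ell$: $\max_{i \in S} v_i \le v_{s_\ell} + \varepsilon \le v_{s_j} + \varepsilon$. But $v_{i_j} \le \max_{i \in S} v_i$, so $v_{i_j} \le v_{s_j} + \varepsilon$, contradicting the assumption.

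The main obstacle — really the only point requiring care — is the pigeonhole step establishing that when $v_{i_j} > v_{s_j} + \varepsilon$, some index among $s_1, \dots, s_j$ must fall outside $S$; one must handle the bookkeeping about which sorted-within-$S$ slot each such index occupies, and be careful about ties (values equal to $v_{s_j}$). I would phrase this cleanly: the multiset $\{v_{i_1}, \dots, v_{i_k}\}$ sorted is entry-wise at least the multiset $\{v_{s_1}, \dots, v_{s_k}\}$ sorted (the left inequality above), and if additionally $\{s_1, \dots, s_j\} \subseteq S$ then the $j$-th smallest value inside $S$ is at most $v_{s_j}$, so $v_{i_j} \le v_{s_j}$. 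Everything else is a one-line application of \cref{definition:strong-approx-min}. Since ties only help (they make the "outside $S$" index easier to find), no delicate case analysis is needed. I would then conclude that the chosen enumeration witnesses $S$ as a weak $(k, \varepsilon)$-approximate minimum index set.
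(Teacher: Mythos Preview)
Your proposal is correct. The core idea---find some $s_\ell \notin S$ with $\ell \le j$ and then invoke the strong property to bound $\max_{i\in S} v_i \le v_{s_\ell}+\varepsilon \le v_{s_j}+\varepsilon$---is exactly the engine the paper uses as well. The difference is structural: the paper argues by induction on $k$, observing that $S\setminus\{i_k\}$ is itself a strong $(k-1,\varepsilon)$-approximate minimum index set for $v$, applying the inductive hypothesis to get the first $k-1$ inequalities, and then verifying only the top inequality $v_{s_k}\le v_{i_k}\le v_{s_k}+\varepsilon$ via the pigeonhole-plus-strong-property step. You instead run that same step directly for every $j\in[k]$, avoiding induction entirely. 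Your route is a bit more elementary and self-contained (no need to check that removing the largest element preserves the strong property), while the paper's induction isolates the single new inequality at each stage; unrolled, the two arguments coincide.
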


\begin{proof}
  See \cref{prop:strong-approximate-implies-weak}.
\end{proof}

However, a weak $(k,\varepsilon)$-approximate minimum index set is not
necessarily a strong one.
In particular, the strong approximate minimum index set is not error-tolerant
incrementable, which is shown in the following example.

\begin{example}[No Error-Tolerant Incrementability for Strong Approximate
  Minimum Index Set]\label{example:no-compos-strong-min-set}
  Consider a $5$-dimensional vector $v = (0.1, 0.2, 0.3, 0.4, 0.5)$.
  Let $\varepsilon = 0.15$.
  By definition, $\{2\}$ is a strong $(1, \varepsilon)$-approximate minimum
  index set for $v$.
  Also, for the index $3$, we know $v_3 = 0.3 \le v_2 + \varepsilon$.
  Therefore, by definition, $\{2, 3\}$ is a weak $(2, \varepsilon)$-approximate
  minimum index set for $v$.
  However, since $0.3 > 0.1 + \varepsilon$, $\{2, 3\}$ is \emph{not} a strong
  $(2, \varepsilon)$-approximate minimum index set for $v$.
  
  This illustration could be extended to encompass arbitrary values of 
  $n$ and $k$, demonstrating that for any fixed constant $c \ge 1$, 
  there exists a weak $(k, \varepsilon)$-approximate minimum index 
  set that is not a strong $(k, c\varepsilon)$-approximate minimum index set.
\end{example}

\section{Quantum Algorithms for Finding Weak Approximate Minimum Index
  Sets}\label{sec:quantum-algo-for-weak-minimum}

In this section, we will develop an approximate version of $k$-minimum finding
that enables us to find a weak $(k, \varepsilon)$-approximate minimum index set,
given access to an \textit{approximate oracle}.
The technique of generalized minimum finding from~\cite{vAGGdW20} is an
important subroutine in our algorithm, which could be stated as follows.

\begin{theorem}[Generalized Minimum Finding, Theorem 49
  in~\cite{vAGGdW20}]\label{thm:gen-min-find}
  Let $n$ be a positive integer, and $\delta \in \interval[open]{0}{1/3}$.
  Suppose $U$ is an $n\times n$ unitary operator satisfying
  \begin{equation*}
    U\ket{0} = \sum_{\ell = 1}^{n} \sqrt{p_\ell}
    \ket{\psi_\ell} \ket{x_\ell},
  \end{equation*}
  where $p_\ell \in \interval{0}{1}$, $\sum_{\ell = 1}^n p_\ell = 1$,
  $\ket{\psi_\ell}$ are normalized states, and $\set{x_\ell}{\ell \in \sbra{n}}$
  are distinct real numbers satisfying $x_1 < x_2 < \cdots < x_n$.
  Let $X$ denote a discrete random variable with
  $p_{\ell} = \Pr\rbra*{X = x_\ell}$, and $M \in \interval[open left]{0}{1}$ be
  a lower bound on the probability $\Pr \rbra{X\le m}$ for some
  $m\in \mathbb{R}$.

  Then, there exists a quantum algorithm $\mathsf{FindMin}\rbra{U, M, \delta}$
  that, with probability at least $1-\delta$, outputs a state
  $\ket{\psi_i}\ket{x_i}$ where $x_i \le m$, using
  $\widetilde O\rbra{1/\sqrt{M}}$ queries to $U$.
\end{theorem}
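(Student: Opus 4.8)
\emph{Proof strategy.}
The plan is to lift the D\"urr--H\o yer quantum minimum-finding algorithm~\cite{DH96} from the exact-oracle model to the unitary-input model, replacing Grover search over a uniform superposition by amplitude amplification with an \emph{unknown} amplitude. Throughout, maintain a threshold $y\in\mathbb{R}\cup\cbra{+\infty}$ and a candidate state of the form $\ket{\psi_i}\ket{x_i}$. Initialization: apply $U$ to $\ket{0}$ and measure the value register; since the $x_\ell$ are distinct this returns $\ket{\psi_{i_0}}\ket{x_{i_0}}$ with $i_0$ distributed as $X$, and we set $y\gets x_{i_0}$. Each subsequent \emph{round}: run amplitude amplification on $U$ with the ``good'' subspace $\mathcal{G}_y=\spanspace\set{\ket{\psi}\ket{x}}{x<y}$ and the reflection $U\rbra{2\ket{0}\bra{0}-I}U^{\dagger}$ about $U\ket{0}$, then measure the value register; if the outcome is some $\ket{\psi_j}\ket{x_j}$ with $x_j<y$, update the candidate to it and set $y\gets x_j$. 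The algorithm is given $M$ and $\delta$ but not $m$, so it simply runs rounds until a fixed budget $B=\widetilde{O}\rbra{1/\sqrt{M}}$ of queries is used up, and then outputs the current candidate.

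\emph{Correctness and complexity.}
Two facts do the work. First, the amplitude of $U\ket{0}$ on $\mathcal{G}_y$ equals $\sqrt{\Pr\rbra{X<y}}$, and whenever $y>m$ we have $\set{x}{x\le m}\subseteq\set{x}{x<y}$, hence $\Pr\rbra{X<y}\ge\Pr\rbra{X\le m}\ge M$; so, by the unknown-amplitude form of amplitude amplification~\cite{BHMT02,DH96}, one round performed while $y>m$ finds a marked state (hence strictly lowers $y$) using $O\rbra{\log(1/\gamma)\big/\sqrt{\Pr(X<y)}}\le O\rbra{\log(1/\gamma)/\sqrt{M}}$ queries to $U$ and $U^{\dagger}$, succeeding with probability at least $1-\gamma$. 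Moreover, $\Pr\rbra{X<y}<M$ can occur only once $y\le m$, so the mass $\Pr\rbra{X<y}$ dropping below $M$ is a \emph{certificate} of being done. Second, given the current $y$, the new threshold is a fresh draw $Y$ from the distribution of $X$ conditioned on $X<y$, and a short argument with the cumulative distribution function---split the marked mass at the point where the CDF crosses $\tfrac12\Pr\rbra{X<y}$, and note that even a heavy atom, once drawn as $Y$, cuts the mass strictly below it by at least half---shows $\Pr\rbra{X<Y}\le\tfrac12\Pr\rbra{X<y}$ with probability at least $\tfrac12$. Chaining the two: after $R=O\rbra{\log(1/M)+\log(1/\delta)}$ rounds, a Chernoff bound gives at least $\ceil{\log_2(1/M)}$ such halvings, so $\Pr\rbra{X<y}$ has fallen below $M$ and therefore $y\le m$, all with probability at least $1-\delta$ (take $\gamma=\Theta(\delta/R)$ and union-bound the per-round amplification failures). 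The total number of queries is $R\cdot\widetilde{O}\rbra{1/\sqrt{M}}=\widetilde{O}\rbra{1/\sqrt{M}}$. Since $y$ is non-increasing, the final candidate value is $\le m$, and it has the required form $\ket{\psi_i}\ket{x_i}$ because it is exactly the post-measurement state of $U\ket{0}$ produced in the last updating round (after uncomputing the amplification ancillas).

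\emph{Main obstacle.}
The technical heart is the query accounting in the second paragraph: making the ``halving with probability $\ge\tfrac12$'' claim airtight for a \emph{completely arbitrary} discrete distribution $X$ (handling ties and heavy atoms, and verifying that the ``bottom half'' event really carries probability $\ge\tfrac12$ rather than only $\tfrac12-o(1)$), and then cleanly combining (a) the per-round query bound, which is adaptive---$\widetilde{O}\rbra{1/\sqrt{\Pr(X<y)}}\le\widetilde{O}\rbra{1/\sqrt{M}}$---and is naturally stated in expectation, so it must be promoted to a high-probability bound and union-bounded over the $R$ rounds, with (b) the Chernoff bound controlling $R$. The remaining ingredients---the comparison reflection $x<y$ on the value register, the reflection about $U\ket{0}$, enforcing the global budget $B$, and ancilla management---are routine; as a sanity check, when $X$ is uniform on $n$ distinct values one has $1/\sqrt{M}=\sqrt{n}$ and recovers the $O\rbra{\sqrt n}$-query regime of~\cite{DH96}.
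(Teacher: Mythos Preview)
The paper does not give a proof of this theorem at all: it is imported verbatim as Theorem~49 of~\cite{vAGGdW20} and used only as a black-box subroutine in \cref{algo:approx-weak-min-find}. So there is no ``paper's own proof'' to compare against. Your sketch is the standard D\"urr--H\o yer threshold-descent argument lifted to the unitary-input model via amplitude amplification, which is precisely the approach of the cited reference; the halving claim you flag as the main obstacle is indeed correct for arbitrary discrete distributions (splitting at the first atom whose cumulative conditional mass exceeds $1/2$ gives the event probability $\ge 1/2$), and the budget-based stopping rule with $R=O\rbra{\log(1/M)+\log(1/\delta)}$ rounds and a union bound is sound.
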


Using this theorem, we can establish the following result for finding a weak
$(k,\varepsilon)$-approximate minimum index set, extending and generalizing the
approximate minimum finding procedure proposed in~\cite{vAGGdW20}.
To express the algorithm more compactly, let $\mathsf{Uniform}$ be a unitary
such that
\begin{equation*}
  \mathsf{Uniform}\ket{0} = \frac{1}{\sqrt{n}}\sum_{j=1}^{n}\ket{j},
\end{equation*}
which can be implemented by $\widetilde{O}(1)$
one- and two-qubit gates.

\begin{algorithm}
  \caption{$\mathsf{FindApproxWeakMin}\rbra{V, k, \varepsilon, \delta}$}%
  \label{algo:approx-weak-min-find}
  \begin{algorithmic}[1]
    \REQUIRE{} An $n$-dimensional unitary $V$, $k\in [n]$,
    $\varepsilon \in \interval[open]{0}{1/2}$,
    $\delta \in \interval[open]{0}{1}$, with $V$ being an
    $(\varepsilon, \delta)$-approximate oracle of an $n$-dimensional vector $v$.

    \ENSURE{} A weak $(k, \varepsilon)$-approximate minimum index set $S$ for $v$.

    \STATE{}
    $S\gets \emptyset, \mathcal{S}\gets \mathsf{QSet}.\mathsf{Initialize}(n)$.

    \FOR{$t = k, \ldots, 1$}

    \STATE{}
    $\ket{\psi_{\nu}}\ket{\nu}\gets
    \mathsf{FindMin}(\mathcal{S}.\mathsf{Hide}()\cdot V
    \cdot \mathsf{Uniform}, t/n, \delta/k)$.

    \STATE{} Perform computational measurement on the state $\ket{\psi_{\nu}}$,
    let the result be $w_{t}$.

    \STATE{} $S\gets S\cup \{w_{t}\}, \mathcal{S}.\mathsf{Add}(w_{t})$.

    \ENDFOR{}

    \STATE{} Return set $S$.
  \end{algorithmic}
\end{algorithm}

The following theorem analyzes the correctness and time complexity of
\cref{algo:approx-weak-min-find}.

\begin{theorem}[Approximate Weak Minimum Index Set Finding]\label{thm:approx-find-weak-min}
  Suppose $v = (v_1, \ldots, v_n)\in \mathbb{R}^{n}_{\ge 0}$ is an
  $n$-dimensional vector with $v_j \in \interval{0}{1}$ for $j\in [n]$.
  For $\varepsilon \in \interval[open right]{0}{1/2}$ and
  $\delta\in \interval[open right]{0}{1/2}$, let $V$ be a $d$-dimensional
  $(\varepsilon, \delta)$-approximate oracle for $v$, Then, there is a quantum
  algorithm $\mathsf{FindApproxWeakMin} \rbra{V, k, \varepsilon, \delta}$ that,
  with probability at least $1-\delta-\widetilde{O}(nd\delta\sqrt{kd})$, outputs
  a weak $(k, O(\varepsilon))$-approximate minimum index set $S$ for $v$, using
  $\widetilde{O}\rbra{\sqrt{nk}}$ queries to $V$.
\end{theorem}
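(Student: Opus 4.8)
The plan is to analyze \cref{algo:approx-weak-min-find} iteration by iteration, maintaining the invariant that after the iteration with loop counter $t$ has completed, the set $S$ (of size $k - t + 1$) is a weak $(k - t + 1, O(\varepsilon))$-approximate minimum index set for $v$. The base case is the empty set before the loop starts. For the inductive step, suppose at the start of the iteration with counter $t$ the current set $S$ has size $k - t$ and is a weak $(k-t, O(\varepsilon))$-approximate minimum index set. The key point is to understand the unitary $\mathcal{S}.\mathsf{Hide}()\cdot V\cdot \mathsf{Uniform}$ fed to $\mathsf{FindMin}$: it prepares $\frac{1}{\sqrt n}\sum_{j} \ket{\cdot}\ket{\Lambda_j}$ and then, via $\mathsf{Hide}$, shifts the value register by $2\cdot\mathbf 1_S(j)$, so that every index already in $S$ gets pushed to a value at least $2$ (hence out of contention, since all true values lie in $[0,1]$ and $\varepsilon < 1/2$). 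Modulo the $\delta$-probability failure branch of the approximate oracle, the resulting state is (close to) one of the form required by \cref{thm:gen-min-find}, with the "small" mass $\Pr(X \le m)$ being at least $t/n$ for a suitable threshold $m$: indeed there are at least $t$ indices among $[n] - S$ whose true value is $\le v_{s_{k-t+1}}$ (these are among $s_1,\dots,s_{k-t+1}$, at most $k-t$ of which can already be in $S$), and on the good branch each such index contributes amplitude on values $\le v_{s_{k-t+1}} + \varepsilon =: m$. So $\mathsf{FindMin}$ with parameter $t/n$ returns, with probability $\ge 1 - \delta/k$, an index $w_t \in [n]-S$ with $v_{w_t} \le v_{s_{k-t+1}} + \varepsilon$ (after accounting for the extra $\varepsilon$ slack: the reported value $\nu$ satisfies $\nu \le m$ and $|\nu - v_{w_t}| \le \varepsilon$ on the good branch). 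Then \cref{prop:comp-weak-min-set} (error-tolerant incrementability) immediately upgrades $S$ to a weak $(k-t+1, O(\varepsilon))$-approximate minimum index set, closing the induction. After the iteration with $t = 1$, $S$ has size $k$ and is a weak $(k, O(\varepsilon))$-approximate minimum index set, as desired.

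For the query complexity, the $t$-th iteration issues $\widetilde O(1/\sqrt{t/n}) = \widetilde O(\sqrt{n/t})$ queries to the composed unitary, each of which uses $O(1)$ queries to $V$ (plus $\mathsf{QSet}$ operations, which are free in the QRAM model per the footnote). Summing, $\sum_{t=1}^{k}\sqrt{n/t} = O(\sqrt{nk})$ since $\sum_{t=1}^k t^{-1/2} = O(\sqrt k)$; the polylogarithmic factors hidden in each $\mathsf{FindMin}$ call aggregate into the $\widetilde O$, giving $\widetilde O(\sqrt{nk})$ queries overall.

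For the success probability, I would track two sources of error. First, across the $k$ iterations, the failure events of $\mathsf{FindMin}$ contribute at most $k \cdot (\delta/k) = \delta$. Second, and this is the delicate part, the approximate oracle $V$ is only an $(\varepsilon,\delta)$-oracle, so the state $\mathcal{S}.\mathsf{Hide}()\cdot V\cdot \mathsf{Uniform}\ket{0}$ is not exactly of the clean form assumed in \cref{thm:gen-min-find}: it has a "bad" component of amplitude $O(\sqrt\delta)$ per index coming from the $\sqrt{1 - p_i^\varepsilon}\ket{\Lambda_i^{\varepsilon\perp}}$ terms, which can masquerade as small values and corrupt the output of $\mathsf{FindMin}$. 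The plan is to bound the trace distance between the actual state and an idealized state in which every $\ket{\Lambda_i}$ is replaced by its good part, renormalized; this distance is $O(\sqrt{n\delta})$ before amplitude amplification, but $\mathsf{FindMin}$ internally performs $\widetilde O(1/\sqrt{M}) = \widetilde O(\sqrt{n/t})$ rounds of amplitude amplification, which can amplify the perturbation. Carefully propagating a per-query perturbation of size $O(\sqrt{d\delta})$ (the $d$-dimensional oracle can spread the bad mass over $d$ basis states) through $\widetilde O(\sqrt{nd})$-ish queries — and this is precisely where the stated failure term $\widetilde O(nd\delta\sqrt{kd})$ comes from — is the main obstacle and the step I expect to require the most care; I would handle it via a hybrid argument, replacing $V$ by its idealized good version one query at a time and bounding the accumulated error by the triangle inequality, then noting that on the idealized state the analysis of the previous paragraph applies verbatim.
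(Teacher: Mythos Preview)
Your loop invariant is not actually maintained by the algorithm, and the counting argument you give for the $t/n$ mass bound is arithmetically wrong. You write that among $s_1,\dots,s_{k-t+1}$, at most $k-t$ can lie in $S$, hence ``at least $t$'' lie outside $S$; but $(k-t+1)-(k-t)=1$, not $t$. To actually get mass $\ge t/n$ below the threshold one must take the threshold to be the $t$-th smallest value \emph{among the remaining indices} $[n]-S$ (this is what the paper does via \cref{proposition:min-probability-lower-bound}). With that correct threshold, however, your invariant collapses: in the very first iteration ($t=k$, $S=\emptyset$), $\mathsf{FindMin}$ is only guaranteed to return some $w_k$ with $v_{w_k}\le v_{s_k}+2\varepsilon$, and in particular it may return $s_k$ itself. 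The singleton $\{s_k\}$ is \emph{not} a weak $(1,O(\varepsilon))$-approximate minimum index set unless $v_{s_k}-v_{s_1}=O(\varepsilon)$, so you cannot feed it into \cref{prop:comp-weak-min-set} at the next step. The intermediate sets produced by \cref{algo:approx-weak-min-find} are simply not weak approximate minimum index sets in general.

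The paper avoids this by not maintaining your invariant at all. Instead, it proves a dedicated structural lemma (\cref{thm:algo-construct-weak-min}, proved as \cref{prop:loop-invariant-for-weak-construct}) showing that if at step $j$ one finds an index whose value is within $\varepsilon$ of the $(k-j+1)$-th smallest \emph{among the currently remaining} coordinates, then the final collection of all $k$ found indices is a weak $(k,\varepsilon)$-set. The induction there case-splits on whether the first element picked lies among the bottom $k-1$ global indices or not, and only establishes the weak-set property at the end, not step by step. You need an argument of this kind; error-tolerant incrementability (\cref{prop:comp-weak-min-set}) alone is not enough, because its hypothesis compares to the global $(k'+1)$-th smallest, which the algorithm does not control. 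Your treatment of the oracle-error term via a hybrid/operator-norm argument is in the right spirit and matches the paper's handling in \cref{theorem:approx-weak-min-find-appendix}.
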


The full proof of this theorem is given in
\cref{theorem:approx-weak-min-find-appendix} of
\cref{appendix:proof-for-weak-min} and we only present a proof sketch below.

\begin{proof}[Proof Sketch]
  First, we consider the special case when $V$ is an
  $(\varepsilon, 0)$-approximate oracle for $v$.
  In this case, assuming that the sub-procedure $\mathsf{FindMin}$ always
  succeed, we can construct a loop-invariant
  for \cref{algo:approx-weak-min-find} (see
  \cref{prop:loop-invariant-for-weak-min-find}).
  The loop invariant states that during every loop, the algorithm maintains a
  weak approximate minimum index set.
  Then, we bound the failure probability of the algorithm
  $\mathsf{FindApproxWeakMin}$ by analyzing the error of replacing
  the $(\varepsilon, 0)$-approximate oracle with an
  $(\varepsilon, \delta)$-approximate oracle, and then taking a union bound.
\end{proof}

We further note that, the failure probability could be made to $1-O(\delta)$ by
introducing an extra logarithmic multiplicative term in query complexity
as \cref{prop:fail-prob-reduction-of-approximate-oracle} suggests.

\section{Quantum Algorithms for Finding Strong Approximate Minimum Index
  Set}\label{sec:quantum-algo-for-strong-min}

In this section, we present a quantum algorithm, $\mathsf{FindApproxStrongMin}$,
which returns a strong approximate minimum index set.
This algorithm builds on the method used for finding a weak approximate minimum
index set.

The core idea of our algorithm is to first employ the algorithm
$\mathsf{FindApproxWeakMin}$ for an estimate $v_{\textup{g}}$ of the $k$-th
smallest entry of $v$, and then repeatedly sample from an ``amplified''
approximate oracle which facilitates querying entries that are approximately
less than $v_{\textup{g}} - 5\varepsilon$.
The algorithm, explicitly given in \cref{algo:approx-find-strong-min} uses two
key subroutines $\mathsf{QCount}(V, c, \delta)$ and
$\mathsf{AmpSamp}(V, c, \delta)$ where $V$ is an approximate oracle for the
vector $v$, $c$ is a positive number, and $\delta \in \interval{0}{1/2}$.
The goal of $\mathsf{QCount}(V, c, \delta)$ is to approximately estimate the
amplitude of all indices whose estimated value is no more than $c$, while the
goal of $\mathsf{AmpSamp}(V, c, \delta)$ is to perform amplitude amplification
on this state to achieve constant success probability.
The formal statements of these procedures can be found in
\cref{prop:quantum-amplitude-estimation-counting} and
\cref{prop:amplitude-amplification-for-sampling}.

To describe the algorithm concisely,
we use $\mathsf{ApproxQuery}\rbra{V, i}$ to denote 
a single computational measurement result of the state $V\ket{i}\ket{0}$.
For an index set $S \subseteq [n]$, $\mathsf{ApproxQuery}\rbra{V, S}$
represents the collection of computational
measurement outcomes of $V\ket{i}\ket{0}$
for all $i\in S$.

\begin{algorithm}
\caption{$\mathsf{FindApproxStrongMin}\rbra{V, k, \varepsilon, \delta}$}%
\label{algo:approx-find-strong-min}
\begin{algorithmic}[1]
  \REQUIRE{} An $n$-dimensional unitary $V$, $k\in [n]$,
  $\varepsilon \in \interval[open]{0}{1/2}$,
  $\delta, \delta_0 \in \interval[open]{0}{1}$, with $V$ being an
  $(\varepsilon, \delta_0)$-approximate oracle of an $n$-dimensional vector $v$.

  \ENSURE{} A set $S$ of size $k$ being a strong $(k, O(\varepsilon))$-approximate
  minimum index set for $v$.

	\STATE{}
  $S_0 \gets \mathsf{FindApproxWeakMin} \rbra{V, k, \varepsilon, \delta/10}$

        \STATE{} 
        $v^{\prime} \gets \mathsf{ApproxQuery}(V, S_0)$.

	\STATE{} $v_{\textup{g}} \gets \max_{i\in S_0} v_i^{\prime}$.

	\STATE{}
  $\ell \gets \mathsf{QCount} \rbra{U, v_{\textup{g}}- 5\varepsilon, \delta/10}$, $R\gets \emptyset$.

	\FOR{$t=1,2, \ldots,  \Theta\rbra{\ell \log \rbra{k}\log\rbra{10/\delta}}$}

	\STATE{}
  $\ket{\phi}\gets \mathsf{AmpSamp} \rbra{V, v_{\textup{g}}-5\varepsilon, \delta/5n\ell}$,
  measure the first register of the state $\ket{\phi}$ in computational basis,
  and add the result to the set $R$.

	\ENDFOR{}


        \STATE{} 
        $\widetilde{v} \gets \mathsf{ApproxQuery}(V, R\cup S_{0})$.
        

	\STATE{} $S$ be the index set of the $k$ smallest elements in
  $\set{\widetilde v_i}{i\in R\cup S_0}$.

	\STATE{} Return set $S$.

\end{algorithmic}
\end{algorithm}

The following theorem states the correctness and the time efficiency of this
algorithm.

\begin{theorem}[Approximate Strong Minimum Index Set Finding]\label{thm:approx-find-strong-min}
  Suppose $v = (v_1, \ldots, v_n)\in \mathbb{R}^{n}_{\ge 0}$ is an
  $n$-dimensional vector with $v_j \in \interval{0}{1}$ for $j\in [n]$.
  For $\varepsilon \in \interval[open right]{0}{1/2}$,
  $\delta\in \interval[open right]{0}{1/2}$, and
  $\delta_0 \in \interval[open right]{0}{1/2}$, let $V$ be a $d$-dimensional
  $(\varepsilon, \delta_0)$-approximate oracle for $v$, Then, there is a quantum
  algorithm $\mathsf{FindApproxStrongMin} \rbra{V, k, \varepsilon, \delta}$
  that, with probability at least $1-\delta-\widetilde{O}(nd\delta_0\sqrt{kd})$,
  outputs a strong $(k, O(\varepsilon))$-approximate minimum index set $S$ for
  $v$, using $\widetilde{O}\rbra{\sqrt{nk}}$ queries to $V$.
\end{theorem}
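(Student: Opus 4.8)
The plan is to establish correctness of \cref{algo:approx-find-strong-min} by tracking the quantities it computes and proving that the final set $S$ satisfies $\max_{i\in S} v_i \le v_j + O(\varepsilon)$ for all $j \in [n] - S$. First I would isolate the ideal case $\delta_0 = 0$ and assume all randomized subroutines (namely $\mathsf{FindApproxWeakMin}$, $\mathsf{QCount}$, and each call to $\mathsf{AmpSamp}$) succeed; the general case follows by replacing the $(\varepsilon,\delta_0)$-approximate oracle with its $(\varepsilon,0)$-idealization, bounding the induced trace-distance error as in the weak case, and a union bound over all subroutine failures (accounting for the $\widetilde O(\ell \log k \log(1/\delta))$ iterations, each contributing its own small failure probability, which is why the per-call parameter is $\delta/(5n\ell)$).

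\textbf{Step 1 (the threshold is sandwiched around the $k$-th smallest value).} By \cref{thm:approx-find-weak-min}, $S_0$ is a weak $(k, O(\varepsilon))$-approximate minimum index set, so after querying each $i \in S_0$ to within $\varepsilon$, the value $v_{\textup{g}} = \max_{i \in S_0} v_i'$ satisfies $v_{s_k} - \varepsilon \le v_{\textup{g}} \le v_{s_k} + O(\varepsilon)$ (the lower bound because $v_{s_k} \le \max_{i\in S_0} v_i$ by the sorting-interleaving property of a weak set, minus the query error; the upper bound because every index in a weak $(k,c\varepsilon)$-set has value at most $v_{s_k} + c\varepsilon$). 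Consequently the cutoff $u := v_{\textup{g}} - 5\varepsilon$ lies strictly below $v_{s_k}$ once the constants are chosen so that $5\varepsilon$ dominates the slack, and it lies above $v_{s_k} - O(\varepsilon)$. I would record the precise constant bookkeeping here so that the final guarantee comes out as $7\varepsilon$ as stated in the informal \cref{thm:main-intro}.

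\textbf{Step 2 (all "dangerous" small elements are collected).} Let $T = \{ j \in [n] : v_j \le u - \varepsilon \}$ be the set of elements that are \emph{definitely} below the cutoff even after the worst-case $\varepsilon$ estimation error; note $T \subseteq \{s_1,\dots,s_{k-1}\}$ since $u < v_{s_k}$, so $|T| \le k$. The state $\ket{\Phi_{\le u}}$ has squared norm $\|\ket{\Phi_{\le u}}\|^2 \ge |T|/n$, and $\mathsf{QCount}$ returns $\ell$ with $\ell = \Theta(n\|\ket{\Phi_{\le u}}\|^2) = \Theta(\text{(number of }(j,\nu)\text{ pairs with }\nu \le u)/1)$ up to a constant factor (this is where \cref{prop:quantum-amplitude-estimation-counting} is invoked). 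Crucially, every $j \in T$ contributes \emph{all} of its amplitude mass to $\ket{\Phi_{\le u}}$, so conditioned on measuring a value $\le u$, the probability of getting index $j$ is at least $(1/n)/\|\ket{\Phi_{\le u}}\|^2 = 1/(n\|\ket{\Phi_{\le u}}\|^2) = \Theta(1/\ell)$. A coupon-collector argument over $|T| \le k$ coupons then shows that $\Theta(\ell \log k \log(1/\delta))$ draws from $\mathsf{AmpSamp}$ hit every element of $T$ with probability $\ge 1 - \delta/\mathrm{poly}$; each draw costs $\widetilde O(1/\|\ket{\Phi_{\le u}}\|) = \widetilde O(\sqrt{n/\ell})$ queries by \cref{prop:amplitude-amplification-for-sampling}, for a total of $\widetilde O(\ell \cdot \sqrt{n/\ell}) = \widetilde O(\sqrt{n\ell}) = \widetilde O(\sqrt{nk})$ queries, plus the $\widetilde O(\sqrt{nk})$ from Step 1 and $O(k)$ final queries. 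I expect this coupon-collector / amplitude-amplification balancing to be the main obstacle: one must argue that $\ell$ is simultaneously (i) small enough that $\ell \cdot \sqrt{n/\ell} = \widetilde O(\sqrt{nk})$, which needs $\ell = \widetilde O(k)$ — true because elements with value $> u$ contribute at most $\varepsilon$-width tails, but making "$\ell = \widetilde O(k)$" rigorous requires that the approximate oracle does not smear too much mass below $u$; this is exactly where one uses that only $O(k)$ elements can have $v_j \le u + \varepsilon < v_{s_k} + \varepsilon$ — and (ii) large enough that the per-index hitting probability $\Theta(1/\ell)$ makes the coupon collector terminate in $\widetilde O(\ell)$ rounds.

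\textbf{Step 3 (post-processing gives a strong set).} After collecting $R \cup S_0$ (of size $\widetilde O(k)$), re-query every index to precision $\varepsilon$ and take $S$ = indices of the $k$ smallest estimates $\widetilde v_i$. To verify the strong property, fix $j \in [n] - S$; I must show $\max_{i \in S} v_i \le v_j + O(\varepsilon)$. If $v_j \ge v_{s_k}$ there is little to prove since $\max_{i\in S} v_i \le v_{s_k} + O(\varepsilon)$ provided $S$ is (at worst) a weak set — and it is, because $S_0 \subseteq R\cup S_0$ guarantees the $k$ chosen estimates are all $\le v_{\textup{g}} + \varepsilon \le v_{s_k} + O(\varepsilon)$. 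If instead $v_j < v_{s_k}$, then $j \in \{s_1,\dots,s_{k-1}\}$; if moreover $v_j \le u - \varepsilon$ then $j \in T \subseteq R$ by Step 2, so $j$ was available to the selection in line 9, and since $j \notin S$ every element of $S$ has estimate $\le \widetilde v_j \le v_j + \varepsilon$, hence $\max_{i\in S} v_i \le v_j + 2\varepsilon$. The remaining sliver $u - \varepsilon < v_j < v_{s_k} \le u + 5\varepsilon + O(\varepsilon)$ is an $O(\varepsilon)$-width band, so $v_j \ge v_{s_k} - O(\varepsilon) \ge \max_{i\in S} v_i - O(\varepsilon)$ directly. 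Combining the cases and summing all the $\varepsilon$-slacks yields the claimed $\max_{i\in S} v_i \le v_j + 7\varepsilon$. Finally I would assemble the query count ($\widetilde O(\sqrt{nk})$ as tallied in Step 2) and the total failure probability $\delta + \widetilde O(nd\delta_0\sqrt{kd})$ from the oracle-idealization error (identical in form to \cref{thm:approx-find-weak-min}) plus the union bound over subroutine failures, each set to $\delta/\mathrm{poly}$.
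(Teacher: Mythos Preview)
Your proposal is correct and follows essentially the same approach as the paper's proof: reduce to the $(\varepsilon,0)$-oracle case, sandwich $v_{\textup{g}}$ around $v_{s_k}$ via the weak finder, use quantum counting to bound the relevant amplitude by $k/n$, amplify-and-sample with a coupon-collector argument to collect all indices with $v_j \le v_{\textup{g}} - 6\varepsilon$ (your set $T$ is exactly the paper's $S_1$), and then classically select the $k$ smallest estimates from $R\cup S_0$. The only cosmetic difference is that your Step~3 case analysis splits on the value of $v_j$ relative to $v_{s_k}$ and $u-\varepsilon$, whereas the paper splits on whether $j \in (R\cup S_0)-S$ or $j \in [n]-(R\cup S_0)$; the two decompositions are equivalent since $j\notin R\cup S_0$ forces $v_j > v_{\textup{g}}-6\varepsilon$ and $j\in (R\cup S_0)-S$ gives the $2\varepsilon$ bound directly from the sort, which subsumes both your ``$v_j \ge v_{s_k}$'' and ``$j\in T$'' sub-cases.
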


Due to the length limit, we only give a proof sketch here and leave a complete
proof to \cref{appendix:proof-find-strong-min}.
\begin{proof}[Proof Sketch]
  As before, we first consider the case when $V$ is an
  $(\varepsilon, 0)$-approximate oracle for $v$, the general case could be
  reduced to this one by allowing a larger failure probability.
  First, using the result (\cref{thm:approx-find-weak-min}) of approximate weak
  minimum index set finding, we know that with high probability,
  $v_g\in \interval{v_{s_k}-\varepsilon}{v_{s_k}+3\varepsilon}$.
  Next, the procedure $\mathsf{QCount}$ will estimate the amplitude of the state
  with estimate results less than $v_g- 5\varepsilon$
  (see \cref{prop:quantum-amplitude-estimation-counting} for more details).
  Then, the algorithm executes the $\mathsf{AmpSamp}$, which uses amplitude
  amplification to enable efficiently sampling for the indices with estimate
  values less than $v_g-5\varepsilon$
  (see \cref{prop:amplitude-amplification-for-sampling} for a more detailed
  description).
  We argue that, similar to the coupon collector's lemma, all the indices with
  values less than $v_g-6\varepsilon$ will be collected with high probability
  after sampling enough times (see \cref{lemma:balls-into-bins} for more
  details''.
  Finally, we sort and return the indices set with smallest values, and it can
  be shown that this set is indeed a strong $(k, 7\varepsilon)$ minimum index
  set.
\end{proof}

\section{Applications}\label{sec:applications}

\subsection{Approximate \texorpdfstring{$\bm{k}$}{k}-Minimum Expectations}

As an application of our quantum approximate $k$-minimum finding algorithms, we
address the problem of finding approximate $k$ minima in multiple expectations.
This can be directly solved by combining our algorithms with quantum amplitude
estimation.
The formal statement of the problem is as follows.

\begin{definition}[Approximate $k$-Minimum Expectations]
  For a positive integer $n$, let $\cbra{\rho_i}_{i = 1}^n$ be a set of
  $s$-qubit density matrices, and $\cbra{O_i}_{i = 1}^n$ be a set of $s$-qubit
  positive semi-definite observables with $\Abs{O_i} \le 1$ for $i\in [n]$.
  Given the unitary $U$ being the purified access to $\rho_i$, i.e.,
  \[
    U\ket{i}_{A} \ket{0}_{B}\ket{0}_{C} = \ket{i}_{A} \ket{\rho_i}_{BC}, \text{such
      that } \tr_C \rbra*{\ket{\rho_i}\bra{\rho_i}} = \rho_i,
  \]
  and a unitary
  \[
    V = \sum_{j\in [n]} \ket{j}\bra{j} \otimes V_j,
  \]
  where $V_j$ is a $1$-block-encoding of the observable $O_j$ for each
  $j\in [n]$.
  The \emph{approximate $k$-minimum expectations} problem requires to
  find a strong $(k, \varepsilon)$-approximate minimum index set for the vector
  $v$ with coordinates $v_i = \tr \rbra{O_i \rho_i}$.
\end{definition}

By leveraging our algorithms for finding strong approximate minimum index sets
alongside quantum expectation estimation algorithms, we present the following
theorem, which effectively addresses the problem of finding approximate
$k$-minimum expectations.

\begin{theorem}\label{thm:approx-k-minima-in-multiple-expectation}
  For a positive integer $n$, let $\cbra{\rho_i}_{i = 1}^n$ be a set of
  $s$-qubit density matrices with purified access $U_{\rho}$, and let
  $\cbra{O_i}_{i = 1}^n$ be a set of $s$-qubit positive semi-definite
  observables with block-encoding access $V$ where $\Abs{O_i} \le 1$ for
  $i\in [n]$.
  Then, there is a quantum algorithm
  $\mathsf{FindApproxExpectationMin} \rbra{U_{\rho}, V, k, \varepsilon, \delta}$
  that, with probability at least $1-\delta$, finds a strong
  $(k, \varepsilon)$-approximate minimum index set for the vector $v$ with
  coordinates $v_i = \tr \rbra{O_i \rho_i}$, using
  $\widetilde{O}\rbra{\sqrt{nk}/\varepsilon}$ queries to $U_{\rho}$ and $V$.
\end{theorem}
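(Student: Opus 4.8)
The plan is to reduce this problem to the strong approximate minimum index set finding result (\cref{thm:approx-find-strong-min}) by constructing an approximate oracle for the vector $v$ with $v_i = \tr(O_i \rho_i)$ out of the given purified-state access $U_\rho$ and block-encoding access $V$. The main work is therefore to show that quantum expectation estimation, combined with the purification and block-encoding inputs, yields an $(\varepsilon', \delta_0)$-approximate oracle in the sense of \cref{def:approx-oracle}, with a precision/failure trade-off that can be pushed to whatever level \cref{thm:approx-find-strong-min} requires. First I would recall (or cite) a quantum algorithm that, given a $1$-block-encoding $V_j$ of an observable $O_j$ with $\Abs{O_j}\le 1$ and a purification $\ket{\rho_j}$ of $\rho_j$, produces a quantum state that is, with probability at least $1-\delta_0$, a superposition of estimates $\nu$ with $\abs{\nu - \tr(O_j\rho_j)} \le \varepsilon'$, using $\widetilde O(1/\varepsilon')$ queries to $U_\rho$ and $V$; this is precisely the behavior of amplitude-estimation-based expectation estimation (e.g.\ via the Hadamard test applied to the block-encoding acting on $\ket{\rho_j}$, followed by amplitude estimation), and it matches the form $W\ket{i}\ket{0} = \ket{i}\ket{\Lambda_i}\ket{\psi_i}$ appearing in \cref{def:approx-oracle}, including the permitted ancilla state. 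Call this unitary $W$; it is an $(\varepsilon', \delta_0)$-approximate oracle for $v$ with query cost $\widetilde O(1/\varepsilon')$ per application.

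Next I would invoke \cref{thm:approx-find-strong-min} with this oracle $W$, choosing $\varepsilon'$ to be a small constant multiple of $\varepsilon$ (so that the output $(k, O(\varepsilon'))$-approximate set is a $(k,\varepsilon)$-approximate set after renaming constants — alternatively, run with $\varepsilon' = \varepsilon/7$ to land exactly at the stated $\varepsilon$). The theorem then outputs, with probability at least $1 - \delta - \widetilde O(nd\,\delta_0 \sqrt{kd})$, a strong approximate minimum index set, using $\widetilde O(\sqrt{nk})$ queries to $W$. To make the overall failure probability at most $\delta$, I would first set $\delta_0$ small enough — by \cref{prop:fail-prob-reduction-of-approximate-oracle}, or equivalently by taking $O(\log(\cdot))$ repetitions and a median inside the expectation estimator — that $\widetilde O(nd\,\delta_0\sqrt{kd}) \le \delta/2$; this costs only a logarithmic overhead, which is absorbed into the $\widetilde O(\cdot)$ notation. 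Multiplying the $\widetilde O(\sqrt{nk})$ outer queries by the $\widetilde O(1/\varepsilon')= \widetilde O(1/\varepsilon)$ cost of each call to $W$ gives the claimed total query complexity $\widetilde O(\sqrt{nk}/\varepsilon)$ to $U_\rho$ and $V$.

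The step I expect to require the most care is verifying that expectation estimation genuinely delivers an object satisfying \cref{def:approx-oracle} — in particular that the ``good'' branch is a \emph{superposition supported entirely on $\varepsilon'$-close estimates} with amplitude $\ge \sqrt{1-\delta_0}$, rather than merely a state that is $\ell_2$-close to such a superposition. Standard amplitude estimation has exactly this structure (the estimate register is a superposition over a small window of outcomes, with the bulk of the weight on outcomes within the target accuracy and a controllable tail), so the argument goes through, but I would be explicit about which variant of amplitude estimation I use and about how the $\Abs{O_j}\le 1$ and block-encoding normalization ensure the relevant amplitude lies in $[0,1]$ and is estimated on the right scale. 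A minor additional point is bookkeeping the ancilla register $\ket{\psi_i}$ allowed in the footnote of \cref{def:approx-oracle}, and noting that the dimension parameter $d$ of the constructed oracle is $\poly$-bounded (indeed $\widetilde O(1/\varepsilon')$-bounded) so that the $nd\sqrt{kd}$ factor in the failure probability stays polynomially controlled and is killed by the choice of $\delta_0$.
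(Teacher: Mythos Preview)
Your proposal is correct and mirrors the paper's proof essentially step for step: the paper also builds an $(\varepsilon',\delta_0)$-approximate oracle for $v$ via a coherent expectation-estimation primitive (it uses square-root amplitude estimation applied to $U = (I\otimes U_\rho^\dagger)(V\otimes I)(I\otimes U_\rho)$ rather than a Hadamard test, but this is the same idea), then calls $\mathsf{FindApproxStrongMin}$ with $\varepsilon' = \varepsilon/7$ and $\delta_0 = O(\delta/n^3\sqrt{k})$ so that the extra failure term in \cref{thm:approx-find-strong-min} is absorbed. Your cautionary remarks about verifying the superposition structure of the good branch and bounding the oracle dimension $d$ are exactly the points the paper handles via its explicit $\mathsf{SqrtAmpEst}$ statement and the choice of $\delta_0$.
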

\begin{proof}
    See~\cref{thm:approx-k-minima-in-multiple-expectation-appendix} in~\cref{appendix:proof-of-k-min-exp}.
\end{proof}

\subsection{Approximate \texorpdfstring{$\bm{k}$}{k}-Ground Energy Problem}

We now turn to the approximate $k$-ground state energy problem, a generalization
of the task of approximating the ground state energy of a Hamiltonian.
The formal definition of the problem is as follows.

\begin{definition}[Approximate $k$-Ground State Energy]\label{def:k-ground-state-energy}
  Let $H$ be a Hamiltonian acting on $n$-dimensional Hilbert space satisfying
  $\Abs{H} \le \beta$, with a spectral decomposition
  \[
    H = \sum_{j = 1}^n \lambda_j \ket{\varphi_j}\bra{\varphi_j}.
  \]
  Assume that the eigenvalues ${\{\lambda_j\}}_{j = 1}^n$ of $H$ can be sorted as
  $\lambda_{s_1} \le \lambda_{s_2} \le \cdots \le \lambda_{s_n}$.
  Let a unitary $U_{\textup{enc}}$ be an $1$-block encoding of $H$, and
  $U_{\textup{basis}}$ be a unitary satisfying
  \[
    U_{\textup{basis}} \ket{j} \ket{0} = \ket{j} \ket{\varphi_j}.
  \]
  For a positive integer $k$ and a positive real number $\varepsilon$, the
  \emph{approximate $(k, \varepsilon)$-ground state energy} problem is to find a strong
  $(k, \varepsilon)$-approximate minimum index set for the vector $v$ with
  coordinates $v_i = \lambda_i$.
\end{definition}

By integrating our quantum
algorithms for finding strong approximate minimum index sets
with quantum phase estimation, we present the following
theorem that effectively solves the problem of finding approximate
$k$-ground state energy.

\begin{theorem}\label{thm:algo-for-approx-ground-state-find}
  For $k\in\mathbb{N}^+$ and $\varepsilon, \delta\in \interval[open]{0}{1/3}$,
  there is a quantum algorithm that, with probability at least $1-\delta$,
  solves the problem defined in \cref{def:k-ground-state-energy} using
  $\widetilde{O}\rbra{\beta\sqrt{nk}/\varepsilon}$ queries to $U_{\textup{enc}}$
  and $U_{\textup{basis}}$.
\end{theorem}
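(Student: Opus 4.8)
The plan is to reduce the approximate $k$-ground state energy problem to the approximate $k$-minimum expectations problem (or, equivalently, directly to strong approximate $k$-minimum finding via an approximate oracle) by using quantum phase estimation to construct an $(\varepsilon', \delta')$-approximate oracle for the eigenvalue vector $v$ with $v_i = \lambda_i$. First I would rescale: since $\Abs{H} \le \beta$, the operator $\frac{1}{2}(I + H/\beta)$ is positive semidefinite with spectral values in $[0,1]$, and a $1$-block-encoding of it is obtained from $U_{\textup{enc}}$ by a single ancilla-controlled linear-combination-of-unitaries step (or directly by the standard affine transformation of block-encodings), costing $O(1)$ queries. Call the rescaled eigenvalues $\mu_i = \frac{1}{2}(1 + \lambda_i/\beta) \in [0,1]$; note that an $\varepsilon/(2\beta)$-approximation of $\mu_i$ yields an $\varepsilon$-approximation of $\lambda_i$, and that the sorted order of the $\mu_i$ matches the sorted order of the $\lambda_i$, so a strong $(k, \varepsilon/(2\beta))$-approximate minimum index set for $\mu$ is exactly a strong $(k,\varepsilon)$-approximate minimum index set for $v$.

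Next I would build the approximate oracle. Using $U_{\textup{basis}}$, on input $\ket{i}\ket{0}$ we can prepare $\ket{i}\ket{\varphi_i}$, and then run quantum phase estimation on the block-encoding of $\frac{1}{2}(I + H/\beta)$ (or on $e^{iH/\beta}$ built from $U_{\textup{enc}}$ via Hamiltonian simulation / qubitization, e.g.\ Theorem-level results of \cite{GSLW19}) with the state $\ket{\varphi_i}$ as eigenstate. Since $\ket{\varphi_i}$ is a genuine eigenstate, phase estimation returns a superposition of estimates concentrated within $\varepsilon/(2\beta)$ of $\mu_i$ with probability at least $1 - \delta_0$ using $\widetilde{O}(\beta/\varepsilon)$ applications of the simulation unitary, hence $\widetilde{O}(\beta/\varepsilon)$ queries to $U_{\textup{enc}}$ and $O(1)$ queries to $U_{\textup{basis}}$ per oracle call; this matches \cref{def:approx-oracle} with parameters $(\varepsilon/(2\beta), \delta_0)$, and one may uncompute $\ket{\varphi_i}$ with $U_{\textup{basis}}^\dagger$, leaving only a benign ancillary state as permitted by the footnote to that definition. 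I would choose $\delta_0$ to be inverse-polynomially small (as \cref{thm:approx-find-strong-min} requires $\delta$ sufficiently small), which only multiplies the per-query cost by a logarithmic factor via \cref{prop:fail-prob-reduction-of-approximate-oracle}.

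Finally I would invoke \cref{thm:approx-find-strong-min} with this $(\varepsilon/(2\beta),\delta_0)$-approximate oracle and a constant rescaling of $O(\varepsilon)$ to an exact target tolerance (the theorem produces a strong $(k, O(\varepsilon))$-approximate set, so I would run it with internal precision parameter $\varepsilon/(2c\beta)$ for the appropriate absolute constant $c$ to land at a strong $(k,\varepsilon/(2\beta))$-set for $\mu$, hence a strong $(k,\varepsilon)$-set for $v$). The query count is $\widetilde{O}(\sqrt{nk})$ calls to the oracle, each costing $\widetilde{O}(\beta/\varepsilon)$ queries to $U_{\textup{enc}}$ and $\widetilde{O}(1)$ to $U_{\textup{basis}}$, giving the claimed $\widetilde{O}(\beta\sqrt{nk}/\varepsilon)$ total; the success probability is $1 - \delta - \widetilde{O}(nd\delta_0\sqrt{kd})$ which is $\ge 1-\delta$ after setting $\delta_0$ small enough and absorbing $d = \poly(\beta/\varepsilon)$ into the polynomial bound on $1/\delta_0$. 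The main obstacle is the bookkeeping around phase estimation: ensuring that the ancilla/garbage registers produced by Hamiltonian simulation and by the $U_{\textup{basis}}$-uncomputation genuinely fit the approximate-oracle template (so that \cref{thm:approx-find-strong-min} applies verbatim), and tracking how the multiplicative constants in ``$O(\varepsilon)$'' compose across the rescaling, the phase-estimation precision, and the strong-set guarantee so that the final tolerance is exactly $\varepsilon$ rather than a constant multiple of it.
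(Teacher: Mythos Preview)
Your proposal is correct and follows essentially the same route as the paper: the paper's \cref{prop:phase-est-hamiltonian-energy} builds an $(\varepsilon,\delta)$-approximate oracle for the eigenvalue vector by sandwiching phase estimation on the simulated unitary $e^{\mathrm{i}2\pi H}$ (obtained from $U_{\textup{enc}}$ via block-Hamiltonian simulation) between $U_{\textup{basis}}$ and $U_{\textup{basis}}^{\dagger}$, and then feeds this oracle to $\mathsf{FindApproxStrongMin}$ with $\delta_0$ chosen inverse-polynomially small. The only cosmetic differences are that the paper absorbs your affine rescaling into its normalization conventions and explicitly tracks that Hamiltonian-simulation error makes the constructed unitary only $O(\delta)$-close in operator norm to a genuine approximate oracle---precisely the bookkeeping you flagged as the main obstacle.
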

\begin{proof}
    See \cref{thm:algo-for-approx-ground-state-find-appendix} in \cref{appendix:proof-of-k-ground-state}. 
\end{proof}

\begin{remark}
  The algorithm described in \cref{thm:algo-for-approx-ground-state-find} can be
  employed in the case when the Hamiltonian is diagonal (with respect to the
  computational basis, i.e.,
  $U_{\textup{basis}} \ket{j}\ket{0} = \ket{j}\ket{j}$).
  This scenario aligns precisely with the particular case discussed in Line 2 of
  Algorithm 2 in~\cite{GJLW23}.
  We therefore address the issue noted in \cref{remark:fill-gaps-GJLW23}.
\end{remark}

\section{Discussion}

In this paper, we carefully examined quantum algorithms for $k$-minimum finding
with access to approximate oracles for function values.
We introduced two distinct notions of approximate minimum index sets.
The weak notion is defined entry-wise and possesses an error-tolerant
incrementability property, which facilitates the design of efficient algorithms.
However, this notion may be insufficient for certain applications.
To address this, we introduced a stronger notion of approximate minimum index
sets and developed algorithms by combining the weak $k$-minimum finding
algorithm with a coupon collector argument.
We also discussed two potential applications related to the expectations of
observables and the ground energy of Hamiltonians.

Several intriguing open problems remain as a result of our work.

\begin{itemize}
  \item The query complexity of the approximate minimum index set finding
        problem is not optimal yet.
        Is it possible to remove the extra logarithmic factors from the query
        complexity in \cref{thm:main-intro}?
  \item Can we approximate the $k$ ground state energies of a Hamiltonian
        without knowing its eigenbasis?
  \item Can we find more types of input oracles for the minimum finding problem
        which are reasonable abstractions of practical scenarios?
        Is it possible to design algorithms for minimum finding using these
        oracles with optimal query complexity?
  \item Can we find more applications of quantum approximate $k$-minimum
        finding?
        We believe that our quantum algorithms in \cref{thm:main-intro} could be
        applicable in other scenarios.
\end{itemize}

\section*{Acknowledgements}
We thank Mingsheng Ying and Fran\c{c}ois Le Gall for helpful suggestions and
discussions.
Minbo Gao would like to thank Fran\c{c}ois Le Gall for supporting his visit to
Nagoya University where part of the work was completed.
The work is supported by National Key Research and Development Program of China
(Grant No.\ 2023YFA1009403), National Natural Science Foundation of China (Grant
No.\ 12347104 and No.\ 61832015), Beijing Natural Science Foundation (Grant No.\
Z220002), Engineering and Physical Sciences Research Council (Grant No.\
\mbox{EP/X026167/1}), and a startup funding from Tsinghua University.

\bibliographystyle{alphaurl}
\bibliography{main}

\newcommand{\etalchar}[1]{$^{#1}$}
\begin{thebibliography}{vAGGdW20}

\bibitem[Aar20]{Aar20}
Scott Aaronson.
\newblock Shadow tomography of quantum states.
\newblock {\em SIAM Journal on Computing}, 49(5):STOC18--368--STOC18--394, 2020.
\newblock \href {https://doi.org/10.1137/18M120275X} {\path{doi:10.1137/18M120275X}}.

\bibitem[ABG13]{ABG13}
Esma A{\"{i}}meur, Gilles Brassard, and S{\'{e}}bastien Gambs.
\newblock Quantum speed-up for unsupervised learning.
\newblock {\em Machine Learning}, 90(2):261--287, 2013.
\newblock \href {https://doi.org/10.1007/s10994-012-5316-5} {\path{doi:10.1007/s10994-012-5316-5}}.

\bibitem[ABNR12]{ABNR12}
Andris Ambainis, Art{\-{u}}rs Ba{\v{c}}kurs, Nikolajs Nahimovs, and Alexander Rivosh.
\newblock Grover's algorithm with errors.
\newblock In {\em Proceedings of the 8th International Doctoral Workshop on Mathematical and Engineering Methods in Computer Science}, pages 180--189, 2012.
\newblock \href {https://doi.org/10.1007/978-3-642-36046-6_17} {\path{doi:10.1007/978-3-642-36046-6_17}}.

\bibitem[AdW22]{AdW22}
Simon Apers and Ronald de~Wolf.
\newblock Quantum speedup for graph sparsification, cut approximation, and {Laplacian} solving.
\newblock {\em SIAM Journal on Computing}, 51(6):1703--1742, 2022.
\newblock \href {https://doi.org/10.1137/21M1391018} {\path{doi:10.1137/21M1391018}}.

\bibitem[AJ23]{AJ23}
Shyan Akmal and Ce~Jin.
\newblock Near-optimal quantum algorithms for string problems.
\newblock {\em Algorithmica}, 85(8):2260--2317, 2023.
\newblock \href {https://doi.org/10.1007/s00453-022-01092-x} {\path{doi:10.1007/s00453-022-01092-x}}.

\bibitem[AJL09]{AJL09}
Dorit Aharonov, Vaughan Jones, and Zeph Landau.
\newblock A polynomial quantum algorithm for approximating the {Jones} polynomial.
\newblock {\em Algorithmica}, 55(3):395--421, 2009.
\newblock \href {https://doi.org/10.1007/s00453-008-9168-0} {\path{doi:10.1007/s00453-008-9168-0}}.

\bibitem[BHMT02]{BHMT02}
Gilles Brassard, Peter H{\o}yer, Michele Mosca, and Alain Tapp.
\newblock Quantum amplitude amplification and estimation.
\newblock In {\em Quantum Computation and Information}, volume 305, pages 53--74. 2002.
\newblock \href {https://doi.org/10.1090/conm/305/05215} {\path{doi:10.1090/conm/305/05215}}.

\bibitem[BKL{\etalchar{+}}19]{BKLLSW19}
Fernando G. S.~L. Brand{\~{a}}o, Amir Kalev, Tongyang Li, Cedric Yen-Yu Lin, Krysta~M. Svore, and Xiaodi Wu.
\newblock Quantum {SDP} solvers: Large speed-ups, optimality, and applications to quantum learning.
\newblock In {\em Proceedings of the 46th International Colloquium on Automata, Languages, and Programming}, volume 132, pages 27:1--27:14, 2019.
\newblock \href {https://doi.org/10.4230/LIPIcs.ICALP.2019.27} {\path{doi:10.4230/LIPIcs.ICALP.2019.27}}.

\bibitem[BO21]{BO21}
Costin B\u{a}descu and Ryan O'Donnell.
\newblock Improved quantum data analysis.
\newblock In {\em Proceedings of the 53rd Annual ACM SIGACT Symposium on Theory of Computing}, page 1398–1411. Association for Computing Machinery, 2021.
\newblock \href {https://doi.org/10.1145/3406325.3451109} {\path{doi:10.1145/3406325.3451109}}.

\bibitem[BWP{\etalchar{+}}17]{BWP+17}
Jacob Biamonte, Peter Wittek, Nicola Pancotti, Patrick Rebentrost, Nathan Wiebe, and Seth Lloyd.
\newblock Quantum machine learning.
\newblock {\em Nature}, 549(7671):195--202, 2017.
\newblock \href {https://doi.org/10.1038/nature23474} {\path{doi:10.1038/nature23474}}.

\bibitem[CdW23]{CdW23}
Yanlin Chen and Ronald de~Wolf.
\newblock {Quantum Algorithms and Lower Bounds for Linear Regression with Norm Constraints}.
\newblock In {\em Proceedings of the 50th International Colloquium on Automata, Languages, and Programming}, volume 261, pages 38:1--38:21, 2023.
\newblock \href {https://doi.org/10.4230/LIPIcs.ICALP.2023.38} {\path{doi:10.4230/LIPIcs.ICALP.2023.38}}.

\bibitem[DH96]{DH96}
Christoph D{\"{u}}rr and Peter H{\o}yer.
\newblock A quantum algorithm for finding the minimum.
\newblock ArXiv e-prints, 1996.
\newblock \href {https://arxiv.org/abs/quant-ph/9607014} {\path{arXiv:quant-ph/9607014}}.

\bibitem[DHHM06]{DHHM06}
Christoph D{\"{u}}rr, Mark Heiligman, Peter H{\o}yer, and Mehdi Mhalla.
\newblock Quantum query complexity of some graph problems.
\newblock {\em SIAM Journal on Computing}, 35(6):1310--1328, 2006.
\newblock \href {https://doi.org/10.1137/050644719} {\path{doi:10.1137/050644719}}.

\bibitem[dW23]{dW23}
Ronald de~Wolf.
\newblock Quantum computing: Lecture notes, 2023.
\newblock \href {https://arxiv.org/abs/1907.09415} {\path{arXiv:1907.09415}}.

\bibitem[GJLW23]{GJLW23}
Minbo Gao, Zhengfeng Ji, Tongyang Li, and Qisheng Wang.
\newblock Logarithmic-regret quantum learning algorithms for zero-sum games.
\newblock In {\em Advances in Neural Information Processing Systems}, volume~36, pages 31177--31203, 2023.
\newblock URL: \url{https://proceedings.neurips.cc/paper_files/paper/2023/hash/637df18481a6aa74238bd2cafff94cb9-Abstract-Conference.html}.

\bibitem[GP22]{GP22}
András Gilyén and Alexander Poremba.
\newblock Improved quantum algorithms for fidelity estimation, 2022.
\newblock \href {https://arxiv.org/abs/2203.15993} {\path{arXiv:2203.15993}}.

\bibitem[Gro96]{Gro96}
Lov~K. Grover.
\newblock A fast quantum mechanical algorithm for database search.
\newblock In {\em Proceedings of the 28th Annual ACM Symposium on Theory of Computing}, pages 212--219, 1996.
\newblock \href {https://doi.org/10.1145/237814.237866} {\path{doi:10.1145/237814.237866}}.

\bibitem[GSLW19]{GSLW19}
Andr\'{a}s Gily\'{e}n, Yuan Su, Guang~Hao Low, and Nathan Wiebe.
\newblock Quantum singular value transformation and beyond: exponential improvements for quantum matrix arithmetics.
\newblock In {\em Proceedings of the 51st Annual ACM SIGACT Symposium on Theory of Computing}, pages 193--204, 2019.
\newblock \href {https://doi.org/10.1145/3313276.3316366} {\path{doi:10.1145/3313276.3316366}}.

\bibitem[HKP20]{HKP20}
Hsin-Yuan Huang, Richard Kueng, and John Preskill.
\newblock Predicting many properties of a quantum system from very few measurements.
\newblock {\em Nature Physics}, 16(10):1050--1057, 2020.
\newblock \href {https://doi.org/10.1038/s41567-020-0932-7} {\path{doi:10.1038/s41567-020-0932-7}}.

\bibitem[HLS24]{HLS24}
Yassine Hamoudi, Qipeng Liu, and Makrand Sinha.
\newblock The {NISQ} complexity of collision finding.
\newblock In {\em Proceedings of the 43rd Annual International Conference on the Theory and Applications of Cryptographic Techniques}, pages 3--32, 2024.
\newblock \href {https://doi.org/10.1007/978-3-031-58737-5_1} {\path{doi:10.1007/978-3-031-58737-5_1}}.

\bibitem[HMdW03]{HMdW03}
Peter H{\o}yer, Michele Mosca, and Ronald de~Wolf.
\newblock Quantum search on bounded-error inputs.
\newblock In {\em Proceedings of the 30th International Colloquium on Automata, Languages, and Programming}, pages 291--299, 2003.
\newblock \href {https://doi.org/10.1007/3-540-45061-0_25} {\path{doi:10.1007/3-540-45061-0_25}}.

\bibitem[HWM{\etalchar{+}}22]{HWM+23}
William~J. Huggins, Kianna Wan, Jarrod McClean, Thomas~E. O'Brien, Nathan Wiebe, and Ryan Babbush.
\newblock Nearly optimal quantum algorithm for estimating multiple expectation values.
\newblock {\em Physical Review Letters}, 129:240501, Dec 2022.
\newblock \href {https://doi.org/10.1103/PhysRevLett.129.240501} {\path{doi:10.1103/PhysRevLett.129.240501}}.

\bibitem[JN24]{JN24}
Ce~Jin and Jakob Nogler.
\newblock Quantum speed-ups for string synchronizing sets, longest common substring, and $k$-mismatch matching.
\newblock {\em ACM Transactions on Algorithms}, 20(4):32:1--32:36, 2024.
\newblock \href {https://doi.org/10.1145/3672395} {\path{doi:10.1145/3672395}}.

\bibitem[Kit95]{Kit95}
A.~Yu. Kitaev.
\newblock Quantum measurements and the {Abelian} stabilizer problem.
\newblock ArXiv e-prints, 1995.
\newblock \href {https://arxiv.org/abs/quant-ph/9511026} {\path{arXiv:quant-ph/9511026}}.

\bibitem[KNR18]{KNR18}
Dmitry Kravchenko, Nikolajs Nahimovs, and Alexander Rivosh.
\newblock Grover's search with faults on some marked elements.
\newblock {\em International Journal of Foundations of Computer Science}, 29(4):647--662, 2018.
\newblock \href {https://doi.org/10.1142/S0129054118410095} {\path{doi:10.1142/S0129054118410095}}.

\bibitem[LC17]{LC17}
Guang~Hao Low and Isaac~L. Chuang.
\newblock Optimal hamiltonian simulation by quantum signal processing.
\newblock {\em Physical Review Letters}, 118:010501, Jan 2017.
\newblock \href {https://doi.org/10.1103/PhysRevLett.118.010501} {\path{doi:10.1103/PhysRevLett.118.010501}}.

\bibitem[LC19]{LC19}
Guang~Hao Low and Isaac~L. Chuang.
\newblock Hamiltonian {S}imulation by {Q}ubitization.
\newblock {\em {Quantum}}, 3:163, July 2019.
\newblock \href {https://doi.org/10.22331/q-2019-07-12-163} {\path{doi:10.22331/q-2019-07-12-163}}.

\bibitem[LMR13]{LMR13}
Seth Lloyd, Masoud Mohseni, and Patrick Rebentrost.
\newblock Quantum algorithms for supervised and unsupervised machine learning.
\newblock ArXiv e-prints, 2013.
\newblock \href {https://arxiv.org/abs/1307.0411} {\path{arXiv:1307.0411}}.

\bibitem[NC10]{NC10}
Michael~A. Nielsen and Isaac~L. Chuang.
\newblock {\em Quantum Computation and Quantum Information: 10th Anniversary Edition}.
\newblock Cambridge University Press, 2010.
\newblock \href {https://doi.org/10.1017/CBO9780511976667} {\path{doi:10.1017/CBO9780511976667}}.

\bibitem[NWZ09]{NWZ09}
Daniel Nagaj, Pawel Wocjan, and Yong Zhang.
\newblock Fast amplification of {QMA}.
\newblock {\em Quantum Information and Computation}, 9(11--12):1053–1068, November 2009.
\newblock \href {https://doi.org/10.26421/QIC9.11-12-8} {\path{doi:10.26421/QIC9.11-12-8}}.

\bibitem[QCR20]{QCR20}
Yihui Quek, Clement Canonne, and Patrick Rebentrost.
\newblock Robust quantum minimum finding with an application to hypothesis selection.
\newblock ArXiv e-prints, 2020.
\newblock \href {https://arxiv.org/abs/2003.11777} {\path{arXiv:2003.11777}}.

\bibitem[Ral20]{Pat20}
Patrick Rall.
\newblock Quantum algorithms for estimating physical quantities using block encodings.
\newblock {\em Phys. Rev. A}, 102:022408, Aug 2020.
\newblock \href {https://doi.org/10.1103/PhysRevA.102.022408} {\path{doi:10.1103/PhysRevA.102.022408}}.

\bibitem[Ros23]{Ros23}
Ansis Rosmanis.
\newblock Quantum search with noisy oracle.
\newblock ArXiv e-prints, 2023.
\newblock \href {https://arxiv.org/abs/2309.14944} {\path{arXiv:2309.14944}}.

\bibitem[Ros24]{Ros24}
Ansis Rosmanis.
\newblock Addendum to "quantum search with noisy oracle".
\newblock ArXiv e-prints, 2024.
\newblock \href {https://arxiv.org/abs/2405.11973} {\path{arXiv:2405.11973}}.

\bibitem[RV03]{RV03}
H.~Ramesh and V.~Vinay.
\newblock String matching in $\widetilde{O}(\sqrt{n}+\sqrt{m})$ quantum time.
\newblock {\em Journal of Discrete Algorithms}, 1(1):103--110, 2003.
\newblock \href {https://doi.org/10.1016/S1570-8667(03)00010-8} {\path{doi:10.1016/S1570-8667(03)00010-8}}.

\bibitem[SBW03]{SBW03}
Neil Shenvi, Kenneth~R. Brown, and K.~Birgitta Whaley.
\newblock Effects of a random noisy oracle on search algorithm complexity.
\newblock {\em Physical Review A}, 68(5):052313, 2003.
\newblock \href {https://doi.org/10.1103/PhysRevA.68.052313} {\path{doi:10.1103/PhysRevA.68.052313}}.

\bibitem[SSP15]{SSP15}
Maria Schuld, Ilya Sinayskiy, and Francesco Petruccione.
\newblock An introduction to quantum machine learning.
\newblock {\em Contemporary Physics}, 56(2):172--185, 2015.
\newblock \href {https://doi.org/10.1080/00107514.2014.964942} {\path{doi:10.1080/00107514.2014.964942}}.

\bibitem[vAG19]{vAG19a}
Joran van Apeldoorn and Andr{\'{a}}s Gily{\'{e}}n.
\newblock Improvements in quantum {SDP}-solving with applications.
\newblock In {\em Proceedings of the 46th International Colloquium on Automata, Languages, and Programming}, volume 132, pages 99:1--99:15, 2019.
\newblock \href {https://doi.org/10.4230/LIPIcs.ICALP.2019.99} {\path{doi:10.4230/LIPIcs.ICALP.2019.99}}.

\bibitem[vAGGdW20]{vAGGdW20}
Joran van Apeldoorn, Andr{\'{a}}s Gily{\'{e}}n, Sander Gribling, and Ronald de~Wolf.
\newblock Quantum {SDP}-{S}olvers: {B}etter upper and lower bounds.
\newblock {\em {Quantum}}, 4:230, 2020.
\newblock \href {https://doi.org/10.22331/q-2020-02-14-230} {\path{doi:10.22331/q-2020-02-14-230}}.

\bibitem[Wan24]{Wan24}
Qisheng Wang.
\newblock Optimal trace distance and fidelity estimations for pure quantum states.
\newblock {\em IEEE Transactions on Information Theory}, 70(12):8791--8805, 2024.
\newblock \href {https://doi.org/10.1109/tit.2024.3447915} {\path{doi:10.1109/tit.2024.3447915}}.

\bibitem[WKS15]{WKS15}
Nathan Wiebe, Ashish Kapoor, and Krysta~M. Svore.
\newblock Quantum algorithms for nearest-neighbor methods for supervised and unsupervised learning.
\newblock {\em Quantum Information and Computation}, 15(3--4):316--356, 2015.
\newblock \href {https://doi.org/10.26421/QIC15.3-4-7} {\path{doi:10.26421/QIC15.3-4-7}}.

\bibitem[WY24]{WY24}
Qisheng Wang and Mingsheng Ying.
\newblock Quantum algorithm for lexicographically minimal string rotation.
\newblock {\em Theory of Computing Systems}, 68(1):29--74, 2024.
\newblock \href {https://doi.org/10.1007/s00224-023-10146-8} {\path{doi:10.1007/s00224-023-10146-8}}.

\end{thebibliography}

\appendix

\section{Proofs of Properties for Weak and Strong Approximate Minimum Index
  Set}\label{sec:proofs-of-weak-strong-definition}

We first present an alternative characterization weak approximate minimum index
sets which are much easier to work with.
Note that in the new characterization, the lower bound condition
$v_{s_{j}} \le v_{i_{j}}$ is dropped.

\begin{proposition}[A Sufficient Condition of Weak Approximate Minimum Index
  Set]\label{prop:suff-cond-weak-approx-min-set}
  Let $v\in \mathbb{R}^n_{\ge 0}$ be an $n$-dimensional vector, and its
  coordinates can be sorted as
  \[
    v_{s_1} \le v_{s_2} \le \cdots \le v_{s_n}.
  \]
  Suppose there is an enumeration $i_{1}, i_{2}, \ldots, i_{k}$ of $S$ that
  satisfies $v_{i_j}\le v_{s_j} + \varepsilon$, for all $j\in [k]$.
  Then, $S$ is a weak $(k, \varepsilon)$-approximate minimum index set for $v$.
  Furthermore, for the enumeration $a_1, a_2, \ldots, a_k$ of $S$ satisfying
  \[
    v_{a_1} \le v_{a_2} \le \cdots \le v_{a_k},
  \]
  it holds that $v_{s_j}\le v_{a_j} \le v_{s_j}+\varepsilon$ for all $j\in [k]$.
\end{proposition}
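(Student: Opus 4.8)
The plan is to prove the ``furthermore'' part first, because an enumeration $a_1,\dots,a_k$ of $S$ with $v_{s_j}\le v_{a_j}\le v_{s_j}+\varepsilon$ for all $j\in[k]$ is, by \cref{definition:weak-approximate-min}, precisely a witness that $S$ is a weak $(k,\varepsilon)$-approximate minimum index set. So I fix the enumeration $a_1,\dots,a_k$ of $S$ ordered so that $v_{a_1}\le v_{a_2}\le\cdots\le v_{a_k}$, and establish the upper and lower inequalities separately by elementary counting arguments.

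For the upper bound $v_{a_j}\le v_{s_j}+\varepsilon$: fix $j\in[k]$ and look at the $j$ indices $i_1,\dots,i_j$ coming from the hypothesized enumeration. For each $\ell\le j$ we have $v_{i_\ell}\le v_{s_\ell}+\varepsilon\le v_{s_j}+\varepsilon$, using that the sorted sequence $v_{s_1}\le\cdots\le v_{s_n}$ is nondecreasing. Thus $S$ contains at least $j$ distinct indices whose value is at most $v_{s_j}+\varepsilon$; since $v_{a_1}\le\cdots\le v_{a_k}$ lists the values of $S$ in nondecreasing order, the $j$-th smallest of them satisfies $v_{a_j}\le v_{s_j}+\varepsilon$.

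For the lower bound $v_{s_j}\le v_{a_j}$: this is purely a fact about sorted sequences and does not involve $\varepsilon$. The indices $a_1,\dots,a_j$ are $j$ distinct elements of $[n]$, and $v_{a_j}=\max_{\ell\le j}v_{a_\ell}$. Since $v_{s_j}$ is the $j$-th smallest value among all $n$ coordinates, at most $j-1$ indices $i\in[n]$ satisfy $v_i<v_{s_j}$; hence among the $j$ distinct indices $a_1,\dots,a_j$ at least one, say $a_m$, has $v_{a_m}\ge v_{s_j}$, and therefore $v_{a_j}\ge v_{a_m}\ge v_{s_j}$. Combining the two bounds yields $v_{s_j}\le v_{a_j}\le v_{s_j}+\varepsilon$ for every $j\in[k]$, which proves the ``furthermore'' claim and, by the definition, that $S$ is a weak $(k,\varepsilon)$-approximate minimum index set.

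The one place to be careful — and the only real obstacle — is the bookkeeping around ties in $v_{s_1}\le\cdots\le v_{s_n}$. I would phrase every counting step as ``there are at least $j$ indices with value $\le c$'' and ``there are at most $j-1$ indices with value $<c$'', rather than identifying a specific index by its rank, so that repeated values never cause an off-by-one issue. With that convention in place the argument is routine and requires no further machinery.
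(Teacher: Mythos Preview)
Your proof is correct and takes a genuinely different, more direct route than the paper's. The paper proceeds by induction on $k$: given the result for $S' = \{i_1,\dots,i_{k-1}\}$, it inserts $i_k$ and splits into two cases according to whether $v_{i_k}$ is at least $v_{a_{k-1}}$ or not, in the latter case locating the position $l$ where $v_{i_k}$ slots into the sorted list and re-verifying the two-sided bounds index by index. Your argument instead handles the upper and lower bounds globally and separately via two pigeonhole-style counts: for the upper bound you observe that $\{i_1,\dots,i_j\}$ already supplies $j$ elements of $S$ with value at most $v_{s_j}+\varepsilon$, and for the lower bound you note that at most $j-1$ indices of $[n]$ have value strictly below $v_{s_j}$. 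This avoids induction and case analysis entirely and is robust to ties, as you point out; the paper's approach, by contrast, meshes more naturally with its subsequent proof of error-tolerant incrementability (\cref{prop:comp-weak-min-set-appendix}), which is itself framed as ``add one element and re-sort.'' Either route is fine here, but yours is the shorter and more transparent one.
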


\begin{proof}
  We prove this by induction on $k$.
  The base case when $k = 1$ is straightforward by noting that $v_{i_1}$ can not
  be smaller than $v_{s_1}$, as $v_{s_1}$ is the smallest among $v$'s
  coordinates.
  Suppose the proposition holds for $k = k_0$, we now consider the case
  $k = k_0 + 1$.
  For the set $S = \{i_1, i_2, \dots, i_{k-1}, i_k\}$,
  we have $v_{i_j}\le v_{s_j} + \varepsilon$ for all $j\in [k]$.
  Thus, consider the set
  $S' = \{i_1, i_2, \dots, i_{k-1}\}$ with 
  the condition
  $v_{i_j}\le v_{s_j} + \varepsilon$ for all $j\in [k-1]$.
  By induction hypothesis,
  we know 
  $S'$ is a weak $(k, \varepsilon)$-approximate minimum index set
  and for the sorted enumeration $a_{1}, a_{2}, \ldots, a_{k-1}$ of $S'$
  satisfying
  \[
    v_{a_1} \le v_{a_2} \le \dots \le v_{a_{k-1}},
  \]
  we have
  $v_{s_j}\le v_{a_j}\le v_{s_j} + \varepsilon$ for all $j\in [k-1]$.
  For $v_{i_k}$,
  we know $v_{i_k}\le v_{s_k} + \varepsilon$,
  we now consider two cases.
  
  \textbf{Case 1}: $v_{i_k} \ge v_{a_{k-1}}$.
  Since $v_{i_k}$ is no smaller than $k-1$ elements among $v$'s coordinates, we
  have $v_{i_k}\ge v_{s_k}$.
  Combining this with $v_{i_k}\le v_{s_k} + \varepsilon$,
   \[
    v_{a_1} \le v_{a_2} \le \dots \le v_{a_{k-1}} \le v_{i_k},
  \]
  and $v_{s_j}\le v_{a_j}\le v_{s_j} + \varepsilon$ for $j\in [k-1]$,
  we know the claim holds in this case.

  \textbf{Case 2}: $v_{i_k} < v_{a_{k-1}}$.
  Let $l \in [k-1]$ be the index such that $v_{a_{l-1}} < v_{i_k} \le v_{a_l}$
  (if there is no such $l$, then it means $v_{i_k} \le v_{a_j}$ for
  $j\in [k-1]$.
  By setting $l = 0$, our following argument still holds.)
  Thus we know
  \[
    v_{a_1} \le v_{a_2} \le \dots \le v_{a_{l-1}} < v_{i_k} \le v_{a_l} \le \dots \le v_{a_{k-1}}.
  \]
 
  We first prove $v_{s_l} \le v_{i_k} \le v_{s_l} + \varepsilon$.
  Since $v_{i_k}$ is no smaller than $l-1$ elements among $v$'s coordinates, we
  have $v_{i_k}\ge v_{s_l}$.
  Also, note that $v_{a_l} \le v_{s_l} + \varepsilon$, 
  thus we have
  $v_{s_l} \le v_{i_k} \le v_{s_l} + \varepsilon$.
  We then prove for all $j\in \{l, l+1, \dots, k\}$,
  $v_{s_{j+1}} \le v_{a_{j}} \le v_{s_{j+1}} + \varepsilon$.
  For the left hand side, observe that $v_{a_m} \le v_{a_j}$ for all
  $m\in [j-1]$ and $v_{a_j} \ge v_{i_k}$, $v_{a_j}$ is no smaller than $j$
  elements among $v$'s coordinates, meaning that $v_{a_j}\ge v_{s_{j+1}}$.
  For the right hand side, by induction hypothesis we have
  $v_{a_j} \le v_{s_j} + \varepsilon$, combined with $v_{s_j} \le v_{s_{j+1}}$,
  we know $v_{a_j} \le v_{s_{j+1}} + \varepsilon$.
  Thus our claim holds in this case.
\end{proof}

Next, we present the error-tolerant composability property for weak approximate
minimum index sets, which plays a crucial role in our algorithmic construction
of weak approximate minimum index set.
We also note that for strong approximate minimum index sets, the error-tolerant
composability property does not hold anymore.

\begin{proposition}[Error-tolerant Composability for Weak Approximate Minimum
  Index Sets]\label{prop:comp-weak-min-set-appendix}
  Let $v\in \mathbb{R}^n_{\ge 0}$ be an $n$-dimensional vector, and its
  coordinates can be sorted as
  \[
    v_{s_1} \le v_{s_2} \le \dots \le v_{s_n}.
  \]
  For a positive integer $k\in [n-1]$ and a positive real number $\varepsilon$,
  suppose $S$ is a weak $(k, \varepsilon)$-approximate minimum index set for
  $v$.
  Let $i_{k+1} \in [n]-S$ be an index satisfying
  $v_{i_{k+1}} \le v_{s_{k+1}} + \varepsilon$.
  Then, $S\cup \{i_{k+1}\}$ is a weak $(k+1, \varepsilon)$-approximate minimum
  index set for $v$.
\end{proposition}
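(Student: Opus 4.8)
The plan is to reduce this statement directly to \cref{prop:suff-cond-weak-approx-min-set}, which tells us that to certify a size-$m$ set as a weak $(m,\varepsilon)$-approximate minimum index set it is enough to exhibit \emph{one} enumeration of it whose $j$-th element has value at most $v_{s_j}+\varepsilon$ — the matching lower bounds $v_{s_j}\le v_{(\cdot)}$ then come for free from the ``furthermore'' part of that proposition.

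First I would unpack the hypothesis that $S$ is a weak $(k,\varepsilon)$-approximate minimum index set. By \cref{definition:weak-approximate-min} (equivalently, by the ``furthermore'' conclusion of \cref{prop:suff-cond-weak-approx-min-set}), there is an enumeration $a_1,a_2,\ldots,a_k$ of $S$, which we may take to be sorted so that $v_{a_1}\le\cdots\le v_{a_k}$, satisfying $v_{s_j}\le v_{a_j}\le v_{s_j}+\varepsilon$ for all $j\in[k]$. In particular $v_{a_j}\le v_{s_j}+\varepsilon$ for every $j\in[k]$.

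Next I would form the list $a_1,a_2,\ldots,a_k,i_{k+1}$. This is a genuine enumeration of the size-$(k+1)$ set $S\cup\{i_{k+1}\}$ precisely because $i_{k+1}\in[n]-S$, so it differs from each $a_j$. For the first $k$ entries we already have $v_{a_j}\le v_{s_j}+\varepsilon$, and for the final entry the hypothesis gives exactly $v_{i_{k+1}}\le v_{s_{k+1}}+\varepsilon$. Hence this enumeration of $S\cup\{i_{k+1}\}$ satisfies the entry-wise upper bound required by \cref{prop:suff-cond-weak-approx-min-set} for all $j\in[k+1]$. Applying that proposition to $v$ and the set $S\cup\{i_{k+1}\}$ with this enumeration yields that $S\cup\{i_{k+1}\}$ is a weak $(k+1,\varepsilon)$-approximate minimum index set for $v$, as desired.

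There is essentially no obstacle here beyond invoking \cref{prop:suff-cond-weak-approx-min-set}: the only points needing a moment's care are that the appended index is distinct from the elements of $S$ (guaranteed by $i_{k+1}\notin S$) and that we should use the \emph{one-sided} sufficient condition, so that we are not obliged to re-verify the lower bounds $v_{s_j}\le v_{i_j}$ by hand. If one wanted a fully self-contained argument, the alternative would be a direct induction on $k$ mirroring the case split in the proof of \cref{prop:suff-cond-weak-approx-min-set} (according to whether $v_{i_{k+1}}$ lies above $v_{a_k}$ or inserts somewhere into the sorted order $v_{a_1}\le\cdots\le v_{a_k}$), but routing through \cref{prop:suff-cond-weak-approx-min-set} is cleaner and avoids repeating that bookkeeping.
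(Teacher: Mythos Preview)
Your proposal is correct and follows essentially the same approach as the paper: extract an enumeration $i_1,\ldots,i_k$ of $S$ satisfying the one-sided bounds $v_{i_j}\le v_{s_j}+\varepsilon$, append $i_{k+1}$, and apply \cref{prop:suff-cond-weak-approx-min-set} directly. The paper's proof is terser but identical in substance.
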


\begin{proof}
  By the definition of weak approximate minimum index set, we can have an
  enumeration $i_{1}, i_{2}, \ldots, i_{k}$ of elements of $S$ such that
  $ v_{i_j} \le v_{s_j} + \varepsilon$.
  Then, combined with $v_{i_{k+1}} \le v_{s_{k+1}} + \varepsilon$, the result
  follows by applying \cref{prop:suff-cond-weak-approx-min-set} directly.
\end{proof}

The weak approximate minimum index set allows for efficient algorithmic
construction.
For simplicity, we first define a notion called ``hidden index set''.
As the name suggests, it is used for the situation where an algorithm finds an
index each time, and then disregards previously found indices in subsequent
executions.

\begin{definition}[Hidden Index Set]
  Let $v\in \mathbb{R}^n_{\ge 0}$ be an $n$-dimensional vector.
  For $S \subseteq [n]$, when we say $v$'s coordinate with the hidden index set
  $S$, we mean only the index set $[n]-S$ for $v$ are considered.
\end{definition}

From the proofs, it is direct to see that for a vector $v$ with some hidden
index set $S$, \cref{prop:suff-cond-weak-approx-min-set}
and \cref{prop:comp-weak-min-set} still hold.

\begin{theorem}[Algorithmic Construction of Weak Approximate Minimum Index
  Set]\label{thm:algo-construct-weak-min}
  Let $v\in \mathbb{R}^n_{\ge 0}$ be an $n$-dimensional vector, and its
  coordinates can be sorted as
  \[
    v_{s_1} \le v_{s_2} \le \dots \le v_{s_n}.
  \]
  For a positive integer $k\in [n]$, suppose there is an algorithm $\mathsf{A}$
  that iterates for $k$ times, and at the $j$-th iteration, finds an index
  $a_{j}\in [n]-\{a_{1}, \ldots, a_{j-1}\}$ (if $j = 1$, we simply require
  $a_{1} \in [n]$) satisfying
  $v_{a_{j}} \le v_{s_{k-j-1}}^{\rbra{j}} + \varepsilon$, where
  $v_{s_{k-j-1}}^{\rbra{j}}$ is the $(k-j-1)$-th smallest element of $v$'s
  coordinate with the hidden index set $\{a_{1}, \ldots, a_{j-1}\}$.
  Then, the set $\{a_1, a_2, \ldots, a_k\}$ is a weak
  $(k, \varepsilon)$-approximate minimum index set for $v$.
\end{theorem}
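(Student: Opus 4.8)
The plan is to prove this by induction on $k$ (uniformly over the dimension $n \ge k$), reducing each step to the sufficient condition of \cref{prop:suff-cond-weak-approx-min-set}. It is cleanest to forget the algorithm and reason about the sequence it outputs: I will show that whenever $a_1, \ldots, a_k$ are indices with $a_j \in [n] - \{a_1, \ldots, a_{j-1}\}$ such that $v_{a_j}$ is at most $\varepsilon$ above the $(k-j+1)$-th smallest coordinate of $v$ restricted to $[n] - \{a_1, \ldots, a_{j-1}\}$, then $\{a_1, \ldots, a_k\}$ is a weak $(k,\varepsilon)$-approximate minimum index set for $v$. The base case $k = 1$ is immediate: the hypothesis gives $v_{a_1} \le v_{s_1} + \varepsilon$, and $v_{a_1} \ge v_{s_1}$ holds trivially since $v_{s_1} = \min_i v_i$, so $\{a_1\}$ meets \cref{definition:weak-approximate-min}.

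For the inductive step, I would peel off the first iteration. It produces $a_1$ with $v_{a_1} \le v_{s_k} + \varepsilon$, since the threshold rank at $j = 1$ is $k$. Let $w \in \mathbb{R}^{n-1}_{\ge 0}$ be $v$ with its $a_1$-th coordinate deleted, and $v_{s'_1} \le \cdots \le v_{s'_{n-1}}$ its induced sorted order on $[n] - \{a_1\}$. The key bookkeeping observation is that $a_2, \ldots, a_k$ is itself a length-$(k-1)$ valid sequence for $w$: the ``not-yet-selected'' constraint in the original iterations $2, \ldots, k$ restricts exactly to $w$'s coordinates minus the previously found ones, and the threshold rank $k - j + 1$ of original iteration $j$ equals $(k-1) - (j-1) + 1$, the threshold rank of iteration $j - 1$ of the sub-sequence. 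By the inductive hypothesis, $\{a_2, \ldots, a_k\}$ is a weak $(k-1, \varepsilon)$-approximate minimum index set for $w$, so there is an enumeration $c_1, \ldots, c_{k-1}$ of it with $v_{s'_m} \le v_{c_m} \le v_{s'_m} + \varepsilon$ for all $m \in [k-1]$. Next I would record the rank shift caused by deleting one coordinate: fixing the position $r$ of $a_1$ in $v$'s sorted order (so $a_1 = s_r$), one has $v_{s'_m} = v_{s_m}$ for $m < r$ and $v_{s'_m} = v_{s_{m+1}}$ for $m \ge r$. Inserting $a_1$ into the $c$-enumeration at slot $\min(r,k)$, shifting the later $c_m$'s by one, yields an enumeration $i_1, \ldots, i_k$ of $\{a_1, \ldots, a_k\}$; a short case check using $v_{a_1} \le v_{s_k} + \varepsilon$ and the rank identities shows $v_{i_m} \le v_{s_m} + \varepsilon$ for every $m \in [k]$. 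Applying \cref{prop:suff-cond-weak-approx-min-set} then closes the induction.

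The hard part is structural, not computational: the algorithm discovers indices ``from the top'' — the first iteration targets an approximation of the $k$-th smallest value and the last targets the smallest — which is the reverse of the order in which error-tolerant incrementability (\cref{prop:comp-weak-min-set}) naturally applies, so one cannot chain that property iteration by iteration. One might instead hope to assign each slot $j$ a single iteration whose output lands in $\{i : v_i \le v_{s_j} + \varepsilon\}$, but which iteration ``covers'' slot $j$ depends on the run (for instance, when the first iteration already returns the true minimum), so no fixed assignment works. The peeling induction sidesteps this, and its only delicate point is the rank-shift accounting when $a_1$ has rank $r > k$ in $v$, where $a_1$ must go into the last slot and the bound $v_{a_1} \le v_{s_k} + \varepsilon$ from the first iteration is exactly what is needed.
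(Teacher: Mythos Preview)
Your proposal is correct and follows essentially the same argument as the paper: induction on $k$, peel off the first iteration, apply the inductive hypothesis to the remaining $k-1$ iterations on the restricted vector, perform the rank-shift bookkeeping depending on the position of the peeled index in the sorted order, and conclude via \cref{prop:suff-cond-weak-approx-min-set}. The only cosmetic difference is that the paper packages the inductive strengthening as a separate proposition (\cref{prop:loop-invariant-for-weak-construct}) parameterized by a hidden index set $S_0$, whereas you achieve the same effect by deleting the coordinate and inducting uniformly over the ambient dimension; your unified insertion at slot $\min(r,k)$ is exactly the paper's Case~1/Case~2 split.
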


\begin{proof}
  This is a direct consequence of \cref{prop:loop-invariant-for-weak-construct}
  by setting $S_0 = \emptyset$.
\end{proof}

We will use induction on a stronger version to prove the original theorem (see
the following proposition \cref{prop:loop-invariant-for-weak-construct}).
For simplicity, we first define a notion that consider the weak approximate
minimum index set when excluding some hidden index subset $S_0$.

\begin{definition}
  Let $v = (v_{1}, v_{2}, \ldots, v_{n})$ be an $n$-dimensional vector and
  $v_{i}\in \interval{0}{1}$ for all $i\in [n]$.
  Let $k$ be an integer that satisfies $k\le n$, and $S_{0}\subseteq [n]$ be a
  set satisfying $k\le n -\abs{S_{0}}$.
  For $\varepsilon \in \interval[open]{0}{1}$, we say a set $S\subseteq [n]-S_0$
  of size $k$ is a weak $(k,\varepsilon)$-approximate minimum index set for $v$
  with the hidden index set $S_{0}$, if
  for the sorted enumeration $s_{1}, s_{2}, \ldots, s_{n-\abs{S_{0}}}$ of $[n] - S_{0}$, 
  there is an enumeration $r_{1}, r_{2}, \ldots, r_{k}$ of $S$ 
  satisfying $ v_{s_{i}} \le v_{r_{i}} \le v_{s_{i}} + \varepsilon $,
  for all $i\in [k]$.
  
  Specifically, if $S = \emptyset$, we will omit ``with the hidden index set
  $\emptyset$''.
\end{definition}

Now we prove the stronger version
(\cref{prop:loop-invariant-for-weak-construct}) of
\cref{thm:algo-construct-weak-min} as follows.

\begin{proposition}\label{prop:loop-invariant-for-weak-construct}
  Let $v\in \mathbb{R}^n_{\ge 0}$ be an $n$-dimensional vector.
  For a positive integer $k\in [n]$, let $S_0$ be a set of size $\ell$ with
  $\ell \le n-k$.
  Assume that the set $[n]-S_0$ can be written as
  $[n]-S_0 = \cbra{w_1, w_2, \ldots, w_{n-\ell}}$ with
  \[
    v_{w_1} \le v_{w_2} \le \dots \le v_{w_{n-\ell}}.
  \]
  Suppose there is an algorithm $\mathsf{A}$ that iterates for $k$ times, and at
  $j$-th iteration, finds an index
  $a_{k-j+1}\in [n]-S_0-\{a_1, \ldots, a_{k-j+2}\}$ (when $j = 1$, we simply
  require $a_k \in [n]- S_0$.)
  satisfying $v_{a_{k-j+1}} \le v_{w_{k-j+1}}^{\rbra{j}} + \varepsilon$, where
  $v_{w_{k-j+1}}^{\rbra{j}}$ is the $(k-j+1)$-th smallest elements among $v$'s
  coordinate with hidden index set $S_0\cup \{a_k, \ldots, a_{k-j+2}\}$.
  Then, the set $\cbra{a_1, a_2, \ldots, a_k}$ is a weak
  $(k, \varepsilon)$-approximate minimum index set for $v$ with the hidden index
  set $S_0$.
\end{proposition}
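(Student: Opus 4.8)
The plan is to prove \cref{prop:loop-invariant-for-weak-construct} by induction on the number of iterations $j$, showing that after the $j$-th iteration the set $\{a_k, a_{k-1}, \ldots, a_{k-j+1}\}$ is a weak $(j, \varepsilon)$-approximate minimum index set for $v$ with the hidden index set $S_0$ (restricted to the appropriate sub-ranking). This is the natural loop-invariant formulation, and it reduces the whole claim to a single inductive step that is essentially an application of the error-tolerant incrementability property (\cref{prop:comp-weak-min-set-appendix}), suitably adapted to the hidden-index-set setting as remarked after \cref{thm:algo-construct-weak-min}.

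First I would set up the induction. For the base case $j = 1$, the algorithm finds $a_k \in [n] - S_0$ with $v_{a_k} \le v_{w_k}^{(1)} + \varepsilon = v_{w_k} + \varepsilon$; since this is a single-element set, \cref{prop:suff-cond-weak-approx-min-set} (in its hidden-index-set form) immediately gives that $\{a_k\}$ is a weak $(1, \varepsilon)$-approximate minimum index set for $v$ with hidden index set $S_0$ — indeed, being below the $k$-th smallest remaining value certainly implies being below the largest remaining value, so the required bound $v_{w_1} \le v_{a_k} \le v_{w_1} + \varepsilon$ holds. Actually, to match the indexing cleanly, I would phrase the invariant so that after iteration $j$ the found set is a weak $(j,\varepsilon)$-approximate minimum index set, which is automatic here.

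For the inductive step, suppose after iteration $j$ the set $T_j := \{a_k, \ldots, a_{k-j+1}\}$ is a weak $(j, \varepsilon)$-approximate minimum index set for $v$ with hidden index set $S_0$. In iteration $j+1$ the algorithm finds $a_{k-j} \in [n] - S_0 - T_j$ with $v_{a_{k-j}} \le v_{w_{k-j}}^{(j+1)} + \varepsilon$, where $v_{w_{k-j}}^{(j+1)}$ is the $(k-j)$-th smallest value of $v$ restricted to index set $[n] - S_0 - T_j$. The key observation is that, because $T_j$ is a weak $(j,\varepsilon)$-approximate minimum index set for the restriction to $[n]-S_0$, its $j$ members occupy the bottom $j$ "slots" loosely enough that the $(k-j)$-th smallest element of $[n]-S_0-T_j$ is at most the $(k-j+j) = k$-th... no — here I must be careful: removing the (approximately) smallest $j$ elements shifts the ranking, and I need that $v_{w_{k-j}}^{(j+1)} \le v_{s_{j+1}}$ where $s_{j+1}$ is relative to the full hidden set $S_0$; this follows because at most $j$ of the elements ranked at or below position $k$ in $[n]-S_0$ were removed, so the $(k-j)$-th smallest of what remains is bounded by the $(k)$-th... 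Let me just say: one shows $v_{a_{k-j}} \le v_{s_{j+1}}^{(S_0)} + \varepsilon$ (the $(j+1)$-th smallest in $[n]-S_0$) using that $T_j$ is weak-approximate, hence contains no element ranked strictly below position... and then applies the hidden-index-set version of \cref{prop:comp-weak-min-set-appendix} to conclude $T_j \cup \{a_{k-j}\} = T_{j+1}$ is a weak $(j+1, \varepsilon)$-approximate minimum index set for $v$ with hidden index set $S_0$. Taking $j = k$ yields the theorem.

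The main obstacle I anticipate is precisely this bookkeeping in the inductive step: showing that the threshold $v_{w_{k-j}}^{(j+1)}$ supplied by the algorithm at iteration $j+1$ is dominated by $v_{s_{j+1}}$, the $(j+1)$-th smallest coordinate of $v$ over $[n]-S_0$. This requires a careful counting argument: since $T_j$ is a weak $(j,\varepsilon)$-approximate minimum index set, each $a_i \in T_j$ has $v_{a_i}$ at most $v_{s_j}^{(S_0)}+\varepsilon$, but more usefully, by the "furthermore" clause of \cref{prop:suff-cond-weak-approx-min-set} the sorted values of $T_j$ interleave the true sorted values $v_{s_1}, \ldots, v_{s_j}$ to within $\varepsilon$; in particular none of them can exceed $v_{s_{j+1}}$, so deleting $T_j$ deletes at most $j$ of the smallest $k$ coordinates, forcing $v_{w_{k-j}}^{(j+1)} \le v_{s_k}^{(S_0)}$ — and here I realize the correct bound is against $v_{s_{j+1}}$ only if we are careful, so the cleanest route is to invoke \cref{prop:comp-weak-min-set} directly with $a_{k+1} := a_{k-j}$ after verifying its hypothesis $v_{a_{k-j}} \le v_{s_{j+1}} + \varepsilon$, which in turn follows since $v_{w_{k-j}}^{(j+1)} \le v_{s_{j+1}}^{(S_0)}$: the first $j$ positions of $[n]-S_0$ are "used up" (up to $\varepsilon$-slack, but position-wise they are genuinely used) by $T_j$, so the smallest surviving coordinate sits at true position $j+1$ or later, and the $(k-j)$-th surviving one... is at position $k$ or later — so I would instead weaken the algorithm's guarantee to $v_{a_{k-j}} \le v_{s_k}+\varepsilon \le$ no. The resolution is that $v_{w_{k-j}}^{(j+1)}$ with $k-j$ decreasing is the right quantity: after removing $j$ elements that occupy positions among the top, the $(k-j)$-th smallest remaining is at most the original $k$-th smallest, but we need $(j+1)$-th. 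I would resolve this by proving the invariant on the \emph{entire} tail simultaneously (as the proposition statement already does) rather than incrementally, using \cref{prop:suff-cond-weak-approx-min-set} once at the end on the enumeration $a_1, \ldots, a_k$: for each $j$, $v_{a_{k-j+1}} \le v_{w_{k-j+1}}^{(j)} + \varepsilon$ and $v_{w_{k-j+1}}^{(j)} \le v_{s_{k-j+1}}^{(S_0)} + (\text{nothing})$ because only $j-1$ elements (namely $a_k,\ldots,a_{k-j+2}$), each of true rank $\le k$, have been removed, so the $(k-j+1)$-th smallest of the remainder has true rank $\le k$; combined with an inductive claim that those removed ranks are exactly the top $j-1$ (which the weak-set structure gives up to the usual $\varepsilon$-interleaving), one gets $v_{a_{k-j+1}} \le v_{s_{k-j+1}} + \varepsilon$ for every $j$, and then \cref{prop:suff-cond-weak-approx-min-set} applied to the full enumeration finishes the proof in one stroke.
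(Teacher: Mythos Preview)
Your proposed loop invariant is simply false. After iteration $j=1$ the algorithm has found $a_k$ with only $v_{a_k}\le v_{w_k}+\varepsilon$; for $\{a_k\}$ to be a weak $(1,\varepsilon)$-approximate minimum index set with hidden set $S_0$ you would need $v_{a_k}\le v_{w_1}+\varepsilon$, which does not follow (take $k=2$, $v_{w_1}=0$, $v_{w_2}=1$, $\varepsilon=0.1$, $a_2=w_2$). The sentence ``being below the $k$-th smallest remaining value certainly implies \ldots\ so the required bound $v_{w_1}\le v_{a_k}\le v_{w_1}+\varepsilon$ holds'' is incorrect. Your fallback at the end---proving $v_{a_{k-j+1}}\le v_{w_{k-j+1}}+\varepsilon$ for every $j$ and then invoking \cref{prop:suff-cond-weak-approx-min-set} on the enumeration $(a_1,\ldots,a_k)$---fails for the same reason: if $a_k=w_1$, then the minimum over $[n]-S_0-\{a_k\}$ is $v_{w_2}$, so iteration $2$ only guarantees $v_{a_{k-1}}\le v_{w_2}+\varepsilon$, not $v_{a_{k-1}}\le v_{w_1}+\varepsilon$. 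More generally, removing previously-found indices can only \emph{raise} order statistics, i.e.\ $v_{w_{k-j+1}}^{(j)}\ge v_{w_{k-j+1}}$, which is the wrong direction for your argument. The enumeration witnessing the weak-$(k,\varepsilon)$ property is \emph{not} $(a_1,\ldots,a_k)$ in general.

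The paper's proof inducts on $k$ rather than on $j$. It peels off the \emph{first} iteration (which produces $a_k$), applies the induction hypothesis to the remaining $k-1$ iterations with the enlarged hidden set $S_0\cup\{a_k\}$---this is why the proposition is stated with an arbitrary hidden set $S_0$---to conclude that $\{a_1,\ldots,a_{k-1}\}$ is a weak $(k-1,\varepsilon)$-approximate minimum index set for $v$ with hidden set $S_0\cup\{a_k\}$, and then case-splits on the true rank of $a_k$ in $[n]-S_0$. If $a_k\in\{w_k,\ldots,w_{n-\ell}\}$, hiding $a_k$ does not affect the bottom $k-1$ ranks, so $\{a_1,\ldots,a_{k-1}\}$ is already a weak $(k-1,\varepsilon)$-set with hidden set $S_0$, and \cref{prop:comp-weak-min-set} adds $a_k$. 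If $a_k=w_i$ with $i<k$, one writes the sorted enumeration $b_1,\ldots,b_{k-1}$ of $\{a_1,\ldots,a_{k-1}\}$, reads off $v_{b_j}\le v_{w_j}+\varepsilon$ for $j<i$ and $v_{b_j}\le v_{w_{j+1}}+\varepsilon$ for $j\ge i$ from the induction hypothesis, inserts $a_k$ (with $v_{a_k}=v_{w_i}$) in position $i$, and applies \cref{prop:suff-cond-weak-approx-min-set}. The case analysis is exactly what builds the correct witness enumeration that your approach was missing.
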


\begin{proof}
  We use induction on $k$ to prove this proposition.
  For $k$ = 1, we have $v_{a_1}\le v_{w_1} + \varepsilon$, so the claim holds by
  definition.

  Now, suppose the claim holds for $k = k_0$, consider the case for
  $k = k_0 + 1$.
  The algorithm $\mathsf{A}$ could regarded as two parts: first, it finds an
  index $a_k$ satisfying $v_{a_k}\le v_{w_k} + \varepsilon$; then, it executes
  the remaining $k - 1 = k_0$ iterations, with hidden index set
  $S_0\cup \{a_k\}$ output $a_{k-1}, a_{k-2}, \ldots, a_1$.
  By the induction hypothesis, we know that the set
  $\{a_1, a_2, \ldots, a_{k-1}\}$ is a weak $(k, \varepsilon)$-approximate
  minimum index set with hidden index set $S_0\cup \{a_k\}$.
  We now consider two cases:

  \textbf{Case 1}: $a_k \in \{w_k, w_{k+1}, \ldots, w_{n-\ell}\}$.
  In this case, the set $\{a_1, a_2, \ldots, a_{k-1}\}$ is a weak
  $(k, \varepsilon)$-approximate minimum index set with hidden index set $S_0$.
  Thus, the claim holds by directly applying \cref{prop:comp-weak-min-set}.

  \textbf{Case 2}: $a_k \in \{w_1, w_{2}, \ldots, w_{k-1}\}$.
  Let us assume that $a_k = w_i$ for some $i\in [k-1]$.
  Since the set $S = \{a_1, a_2, \ldots, a_{k-1}\}$ is a weak
  $(k, \varepsilon)$-approximate minimum index set with hidden index set
  $S_0\cup \{a_k\}$, we can write $S = \{b_1, b_2, \ldots, b_{k-1}\}$ with
  \[
    v_{b_1} \le v_{b_2} \le \cdots \le v_{b_k-1},
  \]
  and $v_{w_j} \le v_{b_j} \le w_{w_j} + \varepsilon$ for $j\in [i-1]$,
  $v_{w_{j+1}} \le v_{b_j} \le v_{w_{j+1}} +\varepsilon$ for
  $j\in \{i, i+1, \ldots, k-1 \}$.
  Combined with $v_{a_k} = v_{w_i} \le v_{w_i} + \varepsilon$, we know the claim
  holds by applying \cref{prop:suff-cond-weak-approx-min-set}.
\end{proof}

\begin{proposition}\label{prop:strong-approximate-implies-weak}
  For an $n$-dimensional vector $v\in \mathbb{R}^{n}_{\ge 0}$,
  let $S\subseteq [n]$ of size $k$ be a strong $(k,\varepsilon)$-approximate
  minimum index set for $v$. Then, $S$ is  a weak $(k,\varepsilon)$-approximate
  minimum index set for $v$.
\end{proposition}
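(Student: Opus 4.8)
The plan is to invoke \cref{prop:suff-cond-weak-approx-min-set}, which says that to certify $S$ is a weak $(k,\varepsilon)$-approximate minimum index set it is enough to exhibit \emph{some} enumeration $i_1,\dots,i_k$ of $S$ with $v_{i_j}\le v_{s_j}+\varepsilon$ for all $j\in[k]$ (the matching lower bounds $v_{s_j}\le v_{i_j}$ then come for free). I would take the enumeration $a_1,\dots,a_k$ of $S$ sorted by value, $v_{a_1}\le v_{a_2}\le\cdots\le v_{a_k}$, and show $v_{a_j}\le v_{s_j}+\varepsilon$ for every $j\in[k]$.

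This I would prove by contradiction with a short counting argument. Assume $v_{a_j}>v_{s_j}+\varepsilon$ for some $j$. Because the $a$'s are sorted, each of the $k-j+1$ distinct indices $a_j,a_{j+1},\dots,a_k\in S$ has value $>v_{s_j}+\varepsilon$, so at most $j-1$ indices of $S$ can have value $\le v_{s_j}+\varepsilon$. On the other hand the $j$ distinct indices $s_1,\dots,s_j$ all satisfy $v_{s_\ell}\le v_{s_j}\le v_{s_j}+\varepsilon$, so they cannot all lie in $S$; pick $\ell\le j$ with $s_\ell\in[n]-S$. The defining property of a strong $(k,\varepsilon)$-approximate minimum index set (\cref{definition:strong-approx-min}), applied to $s_\ell$, gives $\max_{i\in S}v_i\le v_{s_\ell}+\varepsilon\le v_{s_j}+\varepsilon$; but $\max_{i\in S}v_i\ge v_{a_j}>v_{s_j}+\varepsilon$, a contradiction. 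Hence $v_{a_j}\le v_{s_j}+\varepsilon$ for all $j\in[k]$, and \cref{prop:suff-cond-weak-approx-min-set} yields the conclusion.

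I do not expect a genuine obstacle; the only point requiring care is ties in $v$, where $s_1,\dots,s_j$ and $a_j,\dots,a_k$ must be read as the fixed distinct indices coming from the chosen sorted enumerations (not as multisets of values), so that ``at least $j$ indices of $S$ have value $\le v_{s_j}+\varepsilon$'' genuinely contradicts the bound ``at most $j-1$''. The edge case $k=n$, where $[n]-S=\emptyset$, needs no argument, since then $a_j=s_j$ and the target inequality is trivial; the contradiction step is only entered when $[n]-S$ is nonempty, which is exactly where a suitable $s_\ell$ is guaranteed to exist.
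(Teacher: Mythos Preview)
Your argument is correct and takes a genuinely different route from the paper's proof. The paper proceeds by induction on $k$: after sorting $S$ as $v_{i_1}\le\cdots\le v_{i_k}$, it observes that $S\setminus\{i_k\}$ is still a strong $(k-1,\varepsilon)$-approximate minimum index set (removing the element of largest value cannot violate the defining inequality), applies the induction hypothesis to obtain the weak property for the first $k-1$ slots, and then separately verifies $v_{s_k}\le v_{i_k}\le v_{s_k}+\varepsilon$---the upper bound by taking the smallest $j\in[k]$ with $s_j\notin S$ (if any) and invoking the strong condition at that index. You instead avoid induction entirely: you reduce to \cref{prop:suff-cond-weak-approx-min-set} so that only the upper bounds $v_{a_j}\le v_{s_j}+\varepsilon$ need checking, and establish all of them at once via a single pigeonhole-style contradiction. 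The two arguments share the same core mechanism---applying the strong condition at some $s_\ell\notin S$ with small $\ell$ to cap $\max_{i\in S}v_i$---but your packaging is more compact and handles all $j$ simultaneously, while the paper's inductive framing is self-contained in that it does not rely on the auxiliary \cref{prop:suff-cond-weak-approx-min-set}.
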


\begin{proof}
  We use induction on $k$ to prove that this proposition holds for any
  positive integers $n$ and $k$ when $k\le n$.
  Suppose the vector $v = (v_{1}, v_{2}, \dots, v_{n})$ can be sorted as 
  \[
    v_{s_{1}}\le v_{s_{2}} \le \dots \le v_{s_{n}}.
  \]
  Consider the case for $k = 1$ and any positive integer $n$.
  Since $S$ is a strong $(1, \varepsilon)$-approximate minimum index set, we can
  write $S = \{i_{1}\}$, with $v_{i_{1}}\le v_{j}+\varepsilon$,
  $\forall j\in [n]- \{i_{1}\}$.
  Noting that $v_{i_{1}}\le v_{i_{1}}+\varepsilon$, we know for all $i\in [n]$,
  $v_{i_{1}}\le v_{j}+\varepsilon$, which means that
  $v_{i_{1}}\le v_{s_{1}}+ \varepsilon$.
  Since $v_{s_{1}}$ is the smallest value among $v$'s coordinates, we must have
  $v_{s_{1}}\le v_{i_{1}}$.
  Therefore, $S$ is a weak $(1, \varepsilon)$-approximate minimum index set for
  $v$.
  This means that the proposition holds for $k= 1$ (and any $n\ge 1$).

  Suppose that the proposition holds for $k = k_{0}$.
  we now consider the case when $k = k_{0}+1$.
  Let $S=\{i_{1}, i_{2}, i_{3}, \ldots, i_{k}\}$.
  Without loss of generality, we can assume that
  \[
    v_{i_{1}}\le v_{i_{2}}\le \dots \le v_{i_{k}}.
  \]
  Note that by definition, $S-\{v_{i_{k}}\}$ is a strong
  $(k-1,\varepsilon)$-approximate minimum index set for $v$.
  By induction hypothesis, $S-\{v_{i_{k}}\}$ is also a weak
  $(k-1, \varepsilon)$-approximate minimum index set for $v$.

  Now it remains to prove $v_{s_{k}}\le v_{i_{k}}\le v_{s_{k}}+\varepsilon$.
  Noting that $v_{s_{k}}$ is the $k$-smallest value among $v$'s coordinates, and
  $v_{i_{k}}$ is no less than at least $k-1$ values among $v$'s coordinates, it
  holds that $v_{s_{k}}\le v_{i_{k}}$.
  If for all $i\in[k]$, $v_{s_{i}}\in S$, then it is clear that
  $v_{i_{k}}\le v_{s_{k}}+\varepsilon$, and the proposition holds for
  $k = k_{0}+1$.
  If not, let $j$ be the smallest number in $[k]$ such that $v_{s_{j}}\notin S$.
  By the definition of the strong $(k,\varepsilon)$-approximate minimum index
  set, we know
  $v_{i_{k}} = \max_{i\in S} v_{i} \le v_{s_{j}} + \varepsilon \le v_{s_{k}}+\varepsilon$,
  meaning in this case the proposition holds.

  Therefore, by the principle of induction, we know that the proposition holds
  for any positive integers $k$ and $n\ge k$.
\end{proof}

\section{Proofs of Quantum Algorithms for Weak Approximate Minimum Index
  Set}\label{appendix:proof-for-weak-min}

The property below illustrates the reason of setting $M = t/n$ in the
sub-procedure $\mathsf{FindMin}$ of our algorithm.
\begin{proposition}\label{proposition:min-probability-lower-bound}
  Let $v\in \mathbb{R}^{n}_{\ge 0}$ be an $n$-dimensional vector, which could be
  sorted as
  \[
  	v_{s_1} \le v_{s_2} \le \dots \le v_{s_n},
  \]
  and $V$ be a $(\varepsilon, 0)$-approximate oracle for $v$.\footnote{Here we
    do not require $v_{i}$'s to be in $\interval{0}{1}$.}
  Let $U = V (\mathsf{Uniform}\otimes I)$, i.e.,
  \begin{equation*}
    U\ket{0} = V \rbra*{\sum_{j = 1}^{n}\frac{1}{\sqrt{n}}\ket{j}\ket{0}}
    = \sum_{j}\sum_{\nu\colon\abs{\nu-v_{j}}\le\varepsilon}
    \frac{\alpha_{\nu}^{(j)}}{\sqrt{n}}\ket{j}\ket{\nu}.
  \end{equation*}
  Let $X$ be a discrete random variable satisfying
  \begin{equation*}
    \Pr \sbra*{X = \nu} = \frac{1}{n}\sum_{j\colon \abs{\nu-v_{j}}\le \varepsilon}
    \abs*{\alpha_{\nu}^{(j)}}^{2},
  \end{equation*}
  representing the computational-basis measurement result on the second register
  of $U\ket{0}$.
  Then, for any integer $\ell \le n$, it holds that
  \begin{equation*}
    \Pr \sbra*{X\le v_{s_{\ell}}+\varepsilon} \ge \frac{\ell}{n}.
  \end{equation*}
\end{proposition}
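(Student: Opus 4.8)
The plan is to identify which branches of the uniform superposition are guaranteed to contribute to the event $\cbra{X \le v_{s_\ell}+\varepsilon}$. Since $V$ is an $(\varepsilon,0)$-approximate oracle, $p_j^\varepsilon = 1$ for every $j$, so querying index $j$ (after the $\mathsf{Uniform}$ step) produces a state supported entirely on values $\nu$ with $\abs{\nu-v_j}\le\varepsilon$; in particular such a $\nu$ satisfies $\nu \le v_j+\varepsilon$ with certainty, and $\sum_{\nu\,:\,\abs{\nu-v_j}\le\varepsilon}\abs{\alpha_\nu^{(j)}}^2 = 1$. For the $\ell$ indices $s_1,\dots,s_\ell$ realizing the $\ell$ smallest coordinates we have $v_{s_i}\le v_{s_\ell}$ for $i\le\ell$, so on querying any of them the returned value is at most $v_{s_\ell}+\varepsilon$.

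Concretely, I would expand $\Pr\sbra{X\le v_{s_\ell}+\varepsilon} = \frac1n\sum_{j\in[n]}\sum_{\nu\,\le\,v_{s_\ell}+\varepsilon,\ \abs{\nu-v_j}\le\varepsilon}\abs{\alpha_\nu^{(j)}}^2$ (reordering the sum over $\nu$ and $j$, legitimate since all terms are nonnegative), and then discard every term except those with $j\in\cbra{s_1,\dots,s_\ell}$. For each such $j$ the side constraint $\nu\le v_{s_\ell}+\varepsilon$ is automatically implied by $\abs{\nu-v_j}\le\varepsilon$, so the inner sum collapses to $\sum_{\nu\,:\,\abs{\nu-v_j}\le\varepsilon}\abs{\alpha_\nu^{(j)}}^2 = 1$ by normalization. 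Summing over the $\ell$ admissible values of $j$ gives $\Pr\sbra{X\le v_{s_\ell}+\varepsilon}\ge \ell/n$, which is the claim.

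There is essentially no hard step here; the only point requiring minor care is the bookkeeping when $v$ has repeated coordinates, since then the sorted labels $s_1,\dots,s_n$ are defined only up to permuting equal values. The argument uses solely the inequalities $v_{s_i}\le v_{s_\ell}$ for $i\le\ell$, which hold for every valid ordering, so this is harmless. I would also flag that the exactness of the inner sum equal to $1$ is precisely where the $(\varepsilon,0)$ hypothesis (as opposed to $(\varepsilon,\delta)$ with $\delta>0$) is used; the general case is not handled by this proposition but is absorbed later through the union-bound / error-reduction argument (cf.\ \cref{prop:fail-prob-reduction-of-approximate-oracle} and the proof sketch of \cref{thm:approx-find-weak-min}).
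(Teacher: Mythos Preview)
Your proposal is correct and matches the paper's own proof essentially line for line: the paper formalizes your ``discard all $j\notin\cbra{s_1,\dots,s_\ell}$'' step by introducing index sets $S_1=\set{(j,\nu)}{\nu\le v_{s_\ell}+\varepsilon,\ \abs{v_j-\nu}\le\varepsilon}$ and $S_2=\set{(j,\nu)}{j\in\cbra{s_1,\dots,s_\ell},\ \abs{\nu-v_j}\le\varepsilon}$, observes $S_2\subseteq S_1$, and concludes by the same normalization argument. Your additional remarks on ties and on where the $\delta=0$ hypothesis enters are accurate and go slightly beyond what the paper spells out.
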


\begin{proof}
  By definition, we have:
  \begin{equation*}
    \Pr \sbra*{X\le v_{s_{\ell}}+\varepsilon} 
    = \sum_{\nu \le v_{s_{\ell}}+\varepsilon}\Pr \sbra*{X=\nu} 
    = \sum_{\nu\colon \nu\le v_{s_{\ell}}+\varepsilon}
    \sum_{j\colon\abs{\nu-v_{j}}\le\varepsilon}
    \frac{1}{n}\abs*{\alpha_{\nu}^{(j)}}^{2}.
  \end{equation*}

  Let
  $S_{1} = \{(j,\nu)\vert \nu\le v_{s_{\ell}} + \varepsilon, \abs{v_{j}-\nu}\le \varepsilon\}$,
  and
  $S_{2} =\{(j,\nu)\vert j\in \{s_1, s_2, \ldots, s_{\ell}\}, \abs{\nu-v_{j}}\le \varepsilon\}$.
  Since $v_{s_i}\le v_{s_{\ell}}$ for $i\le \ell$, we know
  $S_{2}\subseteq S_{1}$.
  Thus, noting that $\abs{\alpha_{\nu}^{(j)}}^{2}\ge 0$ for all $j$ and $\nu$,
  we have
  \begin{equation*}
    \Pr \sbra*{X\le v_{s(\ell)} +\varepsilon} 
    = \sum_{(j,\nu)\in S_{1}}
    \frac{1}{n}\abs*{\alpha_{\nu}^{(j)}}^{2}
    \ge \sum_{(j,\nu)\in S_{2}}
    \frac{1}{n}\abs*{\alpha_{\nu}^{(j)}}^{2}
     = \frac{\ell}{n}.
  \end{equation*}
\end{proof}

Now we present and prove a loop invariant for the weak approximate minimum index
set finding algorithm.
\begin{proposition}\label{prop:loop-invariant-for-weak-min-find}
  Let $v\in \mathbb{R}^n_{\ge 0}$ be an $n$-dimensional vector with
  $v_i \in \interval{0}{1}$, with an $(\varepsilon, 0)$-approximate oracle $V$
  for $v$, for some $\varepsilon \in \interval[open]{0}{1/2}$.
  If \cref{algo:approx-weak-min-find} is run with the hidden index set $S$
  initialized as $S_{0}\subseteq [n]$ satisfying $\abs{S_{0}}\le n - k$, and the
  $\mathsf{QSet}$ data structure $\mathcal{S}$ storing exactly all the elements
  of $S_{0}$.
  Then, conditioning on that the procedure $\mathsf{FindMin}$ always succeed,
  the algorithm will return a set $S_{0}\cup S_1$, where $S_1$ is a weak
  $(k, 2\varepsilon)$-approximate minimum index set for
  $v = (v_{1}, v_{2}, \ldots, v_{n})$ with the hidden index set $S_{0}$.
\end{proposition}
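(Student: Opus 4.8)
The plan is to analyze \cref{algo:approx-weak-min-find} one iteration at a time, show that the $t$-th iteration appends to $S$ an index that lies outside the currently hidden set and whose value exceeds the $t$-th smallest currently visible coordinate by at most $2\varepsilon$, and then invoke \cref{prop:loop-invariant-for-weak-construct} with its error parameter replaced by $2\varepsilon$.

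First I would make precise what $\mathsf{FindMin}$ is called on in iteration $t$. Write $H_t := S_0 \cup \cbra{w_k, w_{k-1}, \dots, w_{t+1}}$ for the set stored in $\mathcal{S}$ at the start of iteration $t$, so $H_k = S_0$ and $\abs{H_t} = \abs{S_0} + (k-t) \le n - t$; hence the visible set $[n]-H_t$ has at least $t$ elements. Since $V$ is an $(\varepsilon,0)$-approximate oracle, $V\ket{j}\ket{0} = \ket{j}\sum_{\nu\colon\abs{\nu-v_j}\le\varepsilon}\alpha_\nu^{(j)}\ket{\nu}$ with $\sum_\nu\abs{\alpha_\nu^{(j)}}^2 = 1$, and $\mathcal{S}.\mathsf{Hide}()$ then shifts the value register by $2\cdot\mathbf{1}_{H_t}(j)$, so
\[
  \mathcal{S}.\mathsf{Hide}()\cdot V\cdot\mathsf{Uniform}\,\ket{0}
  = \frac{1}{\sqrt{n}}\sum_{j=1}^{n}\;\sum_{\nu\colon\abs{\nu-v_j}\le\varepsilon}\alpha_\nu^{(j)}\ket{j}\ket{\nu + 2\cdot\mathbf{1}_{H_t}(j)}.
\]
Regrouping these terms by the distinct values appearing in the second register puts the state into the form required by \cref{thm:gen-min-find}. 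The one numerical point that makes hiding work is that $v_j\in\interval{0}{1}$ together with $\varepsilon < 1/2$ forces every visible value ($j\notin H_t$) into $(-1/2,3/2)$ and every hidden value ($j\in H_t$) into $(3/2,7/2)$; thus every hidden value strictly exceeds every visible value.

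Next I would carry out the per-iteration estimate. Let $v_{p_t}$ be the $t$-th smallest coordinate of $v$ with hidden index set $H_t$. Restricting the counting argument of \cref{proposition:min-probability-lower-bound} to the (at least $t$) visible indices $j$ with $v_j \le v_{p_t}$ — each of which, being unshifted and $\varepsilon$-approximated, contributes probability $\frac{1}{n}$ to outcomes $\le v_{p_t}+\varepsilon$ — shows that the measurement $X$ of the value register satisfies $\Pr\sbra{X \le v_{p_t}+\varepsilon} \ge t/n$. Hence $v_{p_t}+\varepsilon$ is an admissible threshold $m$ for the lower bound $M = t/n$, and conditioning on $\mathsf{FindMin}$ succeeding, \cref{thm:gen-min-find} returns a state $\ket{\psi_\nu}\ket{\nu}$ with $\nu \le v_{p_t}+\varepsilon < 3/2$. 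Because $\nu < 3/2$, the register $\ket{\psi_\nu}$ is supported only on indices $j\notin H_t$ whose estimate equals $\nu$, so $\abs{\nu - v_j}\le\varepsilon$ for every such $j$; therefore, whatever the measured $w_t$ is, we have $w_t \in [n]-H_t$ (so $w_t$ is a genuinely new index, and all the $w_t$ are distinct) and $v_{w_t} \le \nu + \varepsilon \le v_{p_t} + 2\varepsilon$.

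Finally I would assemble the invariant. The sequence $w_k, w_{k-1}, \dots, w_1$ produced by this modified run meets the hypotheses of \cref{prop:loop-invariant-for-weak-construct} with error parameter $2\varepsilon$: at the iteration indexed $t$ it outputs $w_t \in [n]-S_0-\cbra{w_k,\dots,w_{t+1}}$ whose value is at most $2\varepsilon$ above the $t$-th smallest coordinate with hidden index set $S_0\cup\cbra{w_k,\dots,w_{t+1}}$. Hence \cref{prop:loop-invariant-for-weak-construct} yields that $S_1 := \cbra{w_1,\dots,w_k}$ is a weak $(k,2\varepsilon)$-approximate minimum index set for $v$ with hidden index set $S_0$, and since the algorithm returns $S = S_0\cup S_1$, the proposition follows. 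I expect the main obstacle to be not any single inequality but the careful bookkeeping of the hidden-set indexing across the loop, together with the regrouping needed so that the degenerate (repeated-value) output of $\mathsf{Hide}\cdot V\cdot\mathsf{Uniform}$ fits the distinct-values hypothesis of \cref{thm:gen-min-find} — in particular, verifying that the factor-of-$2$ shift in $\mathsf{Hide}$ genuinely places all hidden indices beyond the reach of $\mathsf{FindMin}$, which is exactly where the assumptions $v_j\in\interval{0}{1}$ and $\varepsilon<1/2$ are used.
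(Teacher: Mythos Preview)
Your proposal is correct and follows essentially the same route as the paper: both arguments introduce the shifted vector (your $H_t$-indexed shift by $2$ is exactly the paper's $v^{(t)}$), use \cref{proposition:min-probability-lower-bound} to get the $t/n$ lower bound, deduce from a successful $\mathsf{FindMin}$ that the returned value satisfies $\nu \le v_{p_t}+\varepsilon$ and hence $v_{w_t}\le v_{p_t}+2\varepsilon$ with $w_t$ necessarily outside the hidden set, and then conclude via \cref{prop:loop-invariant-for-weak-construct}. Your explicit interval separation $(-1/2,3/2)$ versus $(3/2,7/2)$ and your remark about regrouping for the distinct-values hypothesis of \cref{thm:gen-min-find} are details the paper glosses over, but the structure is identical.
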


\begin{proof}
  For the $t$-th iteration, let $S = S_0 \cup S_1$ be the current hidden index
  set.

  By the definitions of approximate oracle and $\mathsf{QSet}$ data structure we
  have, for $i\in [n]- S$
  \[
    \mathcal{S}.\mathsf{Hide}() \cdot V \ket{i} \ket{0}
    =\ket{i} \ket{\Lambda_i^{\varepsilon}},
  \]
  and
  \[
    \mathcal{S}.\mathsf{Hide}() \cdot V \ket{i} \ket{0}
    =\ket{i} \ket{\Lambda_i^{\varepsilon} + 2}
  \]
  for $i\in S$, where
  \begin{equation*}
    \ket{\Lambda_{i}^{\varepsilon}+2} 
    = \sum_{\nu\colon \abs{\nu-v_{i}}\le \varepsilon} 
    \alpha_{\nu}^{(i)}\ket{\nu+2}. 
  \end{equation*}
  This $\mathsf{Hide}$ operation guarantees that the indices in $S$ will not be
  returned by the $\mathsf{FindMin}$ procedure when $\abs{S} \le n - t$.
    
  Let $v^{(t)}$ denote the vector with coordinates $v_{i}^{(t)} = v_{i}$ for
  $i\in [n]-S$ and $v_{i}^{(t)} = v_{i} + 2$ for $i\in S$.
  Thus, applying \cref{proposition:min-probability-lower-bound}, we know that
  for the discrete random variable $X^{(t)}$ being the computational basis
  measurement result of the second register of
  $\mathcal{S}.\mathsf{Hide}()\cdot V \cdot (\mathsf{Uniform}\otimes I)\ket{0}\ket{0}$,
  it holds that
  \begin{equation*}
    \Pr \sbra*{X^{(t)}
    \le v_{s_{t}}^{(t)} + \varepsilon}
    \ge \frac{t}{n},
  \end{equation*}
  where $v_{s_t}^{(t)}$ is the $t$-th smallest element among $v^{(t)}$'s
  coordinates (it is also the $t$-th smallest value among $v$'s coordinates with
  the hidden index set $S$).
    
  Consider
  $\mathsf{FindMin} (\mathcal{S}.\mathsf{Hide}()\cdot V\cdot \mathsf{Uniform}, t/n, \delta/k)$,
  conditioning on its success, by \cref{thm:gen-min-find} we know the state
  $\ket{\psi_{\nu}}\ket{\nu}$ it returns satisfies
  \[
    \nu \le v_{s_{t}}^{\rbra{t}} + \varepsilon.
  \]
  Therefore, measuring the state $\ket{\psi_{\nu}}$, we will get an index
  $a_t \in [n]- S$ satisfying $\abs{v_{a_t} - \nu}\le \varepsilon$, meaning that
  \[
    v_{a_t} \le v_{s_t}^{(t)} + 2\varepsilon.
  \]
  Thus, applying \cref{prop:loop-invariant-for-weak-construct}, we know the set
  $\{a_k, a_{k-1}, \ldots, a_1\}$ that the algorithm finds is a
  $(k, 2\varepsilon)$-approximate minimum index set for $v$ with hidden index
  set $S_0$.
\end{proof}

Having established the loop invariant above, we are now prepared to demonstrate
the original theorem.
Let us begin by examining the scenario where the input $V$ is an
$(\varepsilon, 0)$-approximate oracle.

\begin{theorem}\label{theorem:weak-k-min-finding-perfect}
  Let $v\in \mathbb{R}^n_{\ge 0}$ be an $n$-dimensional vector with positive
  entries.
  For $\varepsilon \in \interval[open]{0}{1/2}$, suppose $V$ is an
  $(\varepsilon, 0)$-approximate oracle for $v$.
  Then, there is a quantum algorithm, namely
  $\mathsf{FindApproxWeakMin} \rbra{V, k, \varepsilon, \delta}$, that with
  probability at least $1-\delta$, outputs a weak
  $(k, 2\varepsilon)$-approximate minimum index set for $v$, using
  $\widetilde{O}\rbra{\sqrt{nk}}$ queries to $V$.
\end{theorem}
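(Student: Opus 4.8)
The plan is to analyze \cref{algo:approx-weak-min-find} directly in the error-free case $\delta = 0$, using the loop invariant already established in \cref{prop:loop-invariant-for-weak-min-find} as the correctness engine. The key observation is that the algorithm, initialized with $S_0 = \emptyset$ and an empty $\mathsf{QSet}$, is precisely the special case of the setting in \cref{prop:loop-invariant-for-weak-min-find} where $|S_0| = 0 \le n - k$. First I would invoke that proposition to conclude: conditioned on every one of the $k$ calls to $\mathsf{FindMin}$ succeeding, the returned set $S$ is a weak $(k, 2\varepsilon)$-approximate minimum index set for $v$. This handles correctness entirely, modulo the conditioning event.

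The second step is to discharge the conditioning by a union bound over the $k$ iterations. In iteration $t$, the algorithm calls $\mathsf{FindMin}(\mathcal{S}.\mathsf{Hide}() \cdot V \cdot \mathsf{Uniform}, t/n, \delta/k)$; by \cref{thm:gen-min-find} this subroutine fails with probability at most $\delta/k$, provided the threshold $M = t/n$ is a genuine lower bound on $\Pr[X^{(t)} \le m]$ for the relevant $m$ — which is exactly what \cref{proposition:min-probability-lower-bound} guarantees (applied to the hidden-index-modified vector $v^{(t)}$, giving $\Pr[X^{(t)} \le v^{(t)}_{s_t} + \varepsilon] \ge t/n$). Note also that $V$ being an $(\varepsilon,0)$-approximate oracle means $\mathcal{S}.\mathsf{Hide}() \cdot V \cdot \mathsf{Uniform}$ has exactly the pure form required as input to $\mathsf{FindMin}$ — the $p_i^\varepsilon = 1$ case — so no additional error enters here. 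Summing the $k$ failure probabilities of $\delta/k$ each gives total failure probability at most $\delta$, hence success with probability at least $1 - \delta$.

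The third step is the query-complexity bookkeeping. Iteration $t$ costs $\widetilde{O}(1/\sqrt{t/n}) = \widetilde{O}(\sqrt{n/t})$ queries to the unitary $\mathcal{S}.\mathsf{Hide}() \cdot V \cdot \mathsf{Uniform}$, each application of which costs $O(1)$ queries to $V$ (plus $\widetilde{O}(1)$ auxiliary gates for $\mathsf{Uniform}$ and the $\mathsf{QSet}$ operations, which do not count as queries to $V$). Summing over $t = k, k-1, \ldots, 1$ gives $\sum_{t=1}^{k} \widetilde{O}(\sqrt{n/t}) = \widetilde{O}(\sqrt{n} \sum_{t=1}^k t^{-1/2}) = \widetilde{O}(\sqrt{n} \cdot \sqrt{k}) = \widetilde{O}(\sqrt{nk})$, using $\sum_{t=1}^k t^{-1/2} = \Theta(\sqrt{k})$. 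This completes the proof.

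**The main obstacle** I anticipate is not any single hard step but rather making sure the reduction to the $\mathsf{FindMin}$ input model is airtight: one must verify that $\mathcal{S}.\mathsf{Hide}() \cdot V \cdot \mathsf{Uniform}$ applied to $\ket{0}\ket{0}$ really does produce a state of the form $\sum_\ell \sqrt{p_\ell} \ket{\psi_\ell}\ket{x_\ell}$ with the $x_\ell$ distinct and real — which requires the $\mathsf{Hide}$ shift by $2$ to cleanly separate hidden indices (whose shifted estimates lie in $[2, 3]$) from live ones (estimates in $[-\varepsilon, 1+\varepsilon] \subseteq [0,1]$ up to the $\varepsilon$ slack), with no collision, and that ties among equal estimate values are handled by the convention already implicit in \cref{thm:gen-min-find}. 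Since the excerpt's earlier propositions are already phrased to absorb exactly these subtleties (the hidden-index-set machinery and \cref{proposition:min-probability-lower-bound}), the proof should reduce to citing them in the right order; the write-up is essentially a careful assembly rather than a new argument.
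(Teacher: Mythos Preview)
Your proposal is correct and matches the paper's own proof essentially step for step: invoke \cref{prop:loop-invariant-for-weak-min-find} with $S_0 = \emptyset$ for correctness conditioned on all $\mathsf{FindMin}$ calls succeeding, apply a union bound over the $k$ calls (each failing with probability at most $\delta/k$) to get overall success probability $\ge 1-\delta$, and sum $\sum_{t=1}^{k}\widetilde{O}(\sqrt{n/t}) = \widetilde{O}(\sqrt{nk})$ for the query count. Your additional discussion of why the $\mathsf{Hide}$ shift and the $(\varepsilon,0)$-oracle assumption make the input to $\mathsf{FindMin}$ well-formed is already absorbed into the proof of \cref{prop:loop-invariant-for-weak-min-find}, so it need not be repeated here.
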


\begin{proof}
The correctness of the algorithm 
can be directly obtained by setting
$S_{0}= \emptyset$ 
in \cref{proposition:min-probability-lower-bound},
and applying a union bound for the success probability,
which is no less than
\begin{equation*}
  1-\sum_{t=1}^{k}\frac{\delta}{k} = 1-\delta.
\end{equation*}
For the time complexity of the algorithm, 
noting that
$\mathsf{FindMin}(U, M, \delta)$ 
will use $\widetilde{O}(1/\sqrt{M})$ queries to $U$ 
and run in $\widetilde{O}(1/\sqrt{M})$ time, 
we know that 
the total query complexity for our algorithm,
$\mathsf{FindApproxWeakMin}(V,k, \varepsilon,\delta)$, is
\begin{equation*}
  \widetilde{O} 
  \left( \sqrt{\frac{n}{k}}+ 
  \sqrt{\frac{n}{k-1}}+\cdots + \sqrt{n} \right)
  = \widetilde{O} \left( \sqrt{nk} \right)
\end{equation*}
queries to $V$. 
\end{proof}

\begin{theorem}[Weak Approximate Minimum Index Set Finding]%
  \label{theorem:approx-weak-min-find-appendix}
  Suppose $v\in \mathbb{R}^{n}_{\ge 0}$ 
  is an $n$-dimensional vector
  with coordinates
  $v_1, \ldots, v_n \in \interval{0}{1}$.
  For $\varepsilon \in \interval[open]{0}{1/2}$
  and $\delta_{0}\in \interval[open]{0}{1/2}$, 
  suppose there is a unitary $\widetilde{V}$ 
  which is a $(\varepsilon, \delta_0)$-approximate oracle
  for $v$.

  Then, there is a quantum algorithm
  $\mathsf{FindApproxWeakMin}
  \rbra{\widetilde{V}, k, \varepsilon, \delta}$ that,
  with probability at least 
  $1-\delta-\widetilde{O}(nd\delta_{0}\sqrt{kd})$,
  outputs a weak $(k, 2\varepsilon)$-approximate
  minimum index set $S$ for $v$,
  using $\widetilde{O}\rbra{\sqrt{nk}}$ queries to
  $\widetilde{V}$.
\end{theorem}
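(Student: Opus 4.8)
The plan is to reduce the statement for a $(\varepsilon,\delta_0)$-approximate oracle $\widetilde V$ to the already-established ideal case, \cref{theorem:weak-k-min-finding-perfect}, which handles an $(\varepsilon,0)$-approximate oracle. The key conceptual point is that $\widetilde V$ acts exactly like some $(\varepsilon,0)$-oracle $V$ on the ``good'' part of each query's output state, and differs only on a part of amplitude at most $\sqrt{\delta_0}$. So I would first fix, for each $i\in[n]$, the decomposition $\widetilde V\ket{i}\ket{0} = \sqrt{p_i^\varepsilon}\ket{\Lambda_i^\varepsilon} + \sqrt{1-p_i^\varepsilon}\ket{\Lambda_i^{\varepsilon\perp}}$ from \cref{def:approx-oracle}, and define an idealized unitary $V$ that agrees with $\widetilde V$ but whose "bad" branch is replaced so that $V$ is a genuine $(\varepsilon,0)$-approximate oracle for $v$. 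One must be slightly careful that such a $V$ exists as a unitary — but since we only need $V$ to match $\widetilde V$ on the subspace spanned by $\ket{i}\ket{0}$, $i\in[n]$, and be a valid $(\varepsilon,0)$-oracle, this is a routine completion argument.

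Next I would track how many times $\widetilde V$ (and $\widetilde V^\dagger$, and their controlled versions) are invoked over the whole run of $\mathsf{FindApproxWeakMin}$. By \cref{theorem:weak-k-min-finding-perfect} the algorithm makes $\widetilde O(\sqrt{nk})$ queries; more precisely, since the sub-procedure $\mathsf{FindMin}(U,M,\delta/k)$ makes $\widetilde O(1/\sqrt M)$ queries to $U=\mathcal S.\mathsf{Hide}()\cdot\widetilde V\cdot\mathsf{Uniform}$ and $M=t/n$, the total is $\widetilde O(\sum_{t=1}^k\sqrt{n/t})=\widetilde O(\sqrt{nk})$. Each such query touches one copy of $\widetilde V$ acting on a $d$-dimensional value register, so the relevant per-query ``defect'' between running with $\widetilde V$ versus with the ideal $V$ is, in operator norm on the reachable state, at most $O(\sqrt{\delta_0})$ — or, accounting for the $d$-dimensional register and the fact that the algorithm's intermediate states are themselves superpositions over the $n$ indices and $d$ values, a bound of the form $\widetilde O(\sqrt{d\,\delta_0})$ per query appears natural (this is where the $\sqrt{kd}$ and the stray $nd$ factors in the claimed failure probability $1-\delta-\widetilde O(nd\,\delta_0\sqrt{kd})$ come from). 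A hybrid argument then shows that the output distribution of the real algorithm and that of the ideal algorithm differ in total variation by at most (number of queries) $\times$ (per-query defect) $=\widetilde O(\sqrt{nk})\cdot\widetilde O(\delta_0\sqrt{d})$, which I would bound by $\widetilde O(nd\,\delta_0\sqrt{kd})$ after absorbing polylogarithmic and crude polynomial slack into the $\widetilde O$. Combining with the $1-\delta$ success guarantee of the ideal algorithm (\cref{theorem:weak-k-min-finding-perfect}, with failure $\delta$ split as $\delta/k$ per iteration via the union bound there) yields overall success probability $1-\delta-\widetilde O(nd\,\delta_0\sqrt{kd})$, and the query count is unchanged.

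The main obstacle is making the hybrid/perturbation argument clean and uniform across all the places $\widetilde V$ is used — in particular, $\mathsf{FindMin}$ uses $\widetilde V$ inside amplitude-amplification-style reflections (so $\widetilde V^\dagger$ and controlled $\widetilde V$ appear), and a single ``bad'' branch can get amplified; one has to argue that the total $\ell_2$-error accumulated is still linear in the number of oracle calls times $\sqrt{\delta_0}$ (up to the $\sqrt d$ register factor), using the standard ``errors add under unitary evolution'' telescoping. A secondary subtlety is that after getting the state $\ket{\psi_\nu}\ket{\nu}$, the algorithm measures to obtain $a_t$ and then relies on $|v_{a_t}-\nu|\le\varepsilon$ — but in the $(\varepsilon,\delta_0)$ case the measured branch might be a ``bad'' one; this is exactly the event the hybrid bound controls, so one argues that \emph{conditioned} on all branches being good throughout (an event of probability $\ge 1-\widetilde O(nd\,\delta_0\sqrt{kd})$) the analysis of \cref{proposition:min-probability-lower-bound} and \cref{prop:loop-invariant-for-weak-min-find} goes through verbatim with the ideal oracle $V$. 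I would therefore structure the proof as: (i) construct the ideal $V$; (ii) state and prove the per-query $\ell_2$ closeness of the evolutions under $\widetilde V$ and $V$; (iii) apply the hybrid lemma to bound the output TV distance; (iv) invoke \cref{theorem:weak-k-min-finding-perfect} on $V$; (v) union-bound and read off the query complexity.
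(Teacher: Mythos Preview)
Your proposal is essentially the paper's proof: construct the ideal $(\varepsilon,0)$-oracle $V$, bound $\|\widetilde V - V\|$ in operator norm (the paper does this by expanding over an arbitrary normalized $\ket{x}=\sum_{i,j}\beta_{i,j}\ket{i}\ket{j}$ and applying Cauchy--Schwarz to get $O(\delta_0\sqrt{nd})$), pass to the diamond norm, multiply by the $\widetilde O(\sqrt{nk})$ query count, and union-bound with \cref{theorem:weak-k-min-finding-perfect}---your ``hybrid argument'' is exactly this diamond-norm accumulation. The one wobble is that your per-query defect drifts between $O(\sqrt{\delta_0})$, $O(\sqrt{d\delta_0})$, and $O(\delta_0\sqrt d)$ without settling; the paper commits to a concrete computation here (though its stated bound $\Abs{\ket{\Lambda_i}-\ket{\Lambda_i^\varepsilon}}=2-2\sqrt{p_i^\varepsilon}\le\frac32\delta_0$ is in fact the \emph{squared} norm, so your first instinct of $O(\sqrt{\delta_0})$ is the honest one).
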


\begin{proof}
  Let $V$ be the unitary satisfying
  \begin{equation*}
    V\ket{i}\ket{j} = \ket{i}\ket{\Lambda_{i}^{\varepsilon}+j},
  \end{equation*}
  we will show $\Abs{V-\widetilde{V}}\le 2\sqrt{nd}\delta_{0}$.

  Notice that
  \begin{equation*}
    \Abs{\ket{\Lambda_{j}}-\ket{\Lambda_{j}^{\varepsilon}}} 
    = 2-2\sqrt{p_{i}^{\varepsilon}}\le \frac{3}{2} \delta_{0},
  \end{equation*}
  for $\delta_{0}\in \interval[open]{0}{1/2}$. 
  Therefore, let
  $\ket{x} = \sum_{i,j}\beta_{i,j}\ket{i}\ket{j}$ 
  be any normalized state with
  $\sum_{i,j}\abs{\beta_{i,j}}^{2} = 1$, we have
  \begin{equation*}
    \Abs{\rbra{\widetilde{V}-V }\ket{x}} 
    = \Abs[\Big]{\sum_{i,j}\beta_{i,j}\ket{i}\rbra*{\ket{\Lambda_{j}}-\ket{\Lambda_{j}^{\varepsilon}}}}
    \le \abs[\Big]{\sum_{i,j}\beta_{i,j}}\cdot \frac{3\delta_{0}}{2} 
    \le 2\delta_{0}\sqrt{nd}.
  \end{equation*}
  Then, regarding $V$ and $\widetilde{V}$ as quantum channels, we have:
  \begin{equation*}
    \Abs{V-\widetilde{V}}_{\diamond}
    \le 2\Abs{V-\widetilde{V}} \le 4\delta_{0}\sqrt{nd}.
  \end{equation*}
  
  Therefore, 
  as $\mathsf{ApproxWeakFindMin}(V, k, \varepsilon, \delta)$ 
  uses $\widetilde{O}(\sqrt{nk})$ queries to $V$, 
  the diamond norm
  between two channels 
  $\mathsf{FindApproxWeakMin}(V, k, \varepsilon, \delta)$ and
  $\mathsf{FindApproxWeakMin}
  (\widetilde{V}, k, \varepsilon, \delta)$ is no more
  than $\widetilde{O} (\sqrt{nk}\cdot \delta_0\sqrt{nd})$, 
  meaning that the measurement result will remain unchanged 
  with probability at least
  $1-\widetilde{O}(n\delta_0\sqrt{kd})$. 
  Thus, by union bound, 
  the algorithm will succeed 
  with probability at least 
  $1-\delta-\widetilde{O}(n\delta_{0}\sqrt{kd})$.
\end{proof}

\section{Proofs of Quantum Algorithms for Finding Strong Approximate Minimum
  Index Set}%
\label{appendix:proof-find-strong-min}

We present the following proposition 
that provides a bound for the amplitude 
of the state we want to amplify.

\begin{proposition}\label{prop:amplitude-bounds-for-find-strong-minimum-algo}
  Let $v = \rbra{v_1, v_2, \dots, v_n} \in \mathbb{R}^{n}_{\ge 0}$ 
  be an n-dimensional vector
  with $v_j\in \interval{0}{1}$ for any $j\in [n]$. 
  Suppose we can sort its coordinates as
  $u_{s_1} \le  u_{s_2}\le \dots \le u_{s_n}$.
  Let $V$ be a unitary satisfying
  $V\ket{i}\ket{0} = \ket{i}\ket{\Lambda_i^{\varepsilon}}$,
  where $\ket{\Lambda_{i}^{\varepsilon}}$ 
  is a normalized state satisfying
  \begin{equation*}
    \ket{\Lambda_{i}^{\varepsilon}} 
    = \sum_{\nu \colon \abs{\nu-v_{i}}\le \varepsilon} 
    \alpha_{\nu}^{(i)} \ket{\nu},
  \end{equation*}
  Let $v_{\textup{g}}$ be a number in the interval 
  $\interval{-\varepsilon}{1+\varepsilon}$ satisfying
  $v_{s_{k}}-\varepsilon 
  \le v_{\textup{g}} 
  \le v_{s_{k}} + 3\varepsilon$.
  Define
  \begin{equation*}
    a = \frac{1}{n} 
    \sum_{\substack{i,\nu: \\ 
    \nu\le v_{\textup{g}}- 5\varepsilon}} 
    \abs{\alpha_{\nu}^{\rbra*{i}}}^2,
  \end{equation*}
  Then, we have $a\le k/n$.
\end{proposition}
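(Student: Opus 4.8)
The plan is to show that every pair $(i,\nu)$ contributing a nonzero term to the sum defining $a$ must have its index $i$ lying in a set $T$ of indices whose $v$-values sit strictly below $v_{s_k}$, then to bound $\abs{T} \le k-1$ from the definition of the sorted order, and finally to collapse the sum using the normalization of each $\ket{\Lambda_i^\varepsilon}$.

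First I would fix a pair $(i,\nu)$ with $\alpha_\nu^{(i)} \neq 0$ and $\nu \le v_{\textup{g}} - 5\varepsilon$. Since $\ket{\Lambda_i^\varepsilon}$ is supported only on values $\nu$ with $\abs{\nu - v_i} \le \varepsilon$, we get $v_i \le \nu + \varepsilon \le v_{\textup{g}} - 4\varepsilon$, and combining with the hypothesis $v_{\textup{g}} \le v_{s_k} + 3\varepsilon$ yields $v_i \le v_{s_k} - \varepsilon$ (note that only the upper bound on $v_{\textup{g}}$ is used). Hence, setting $T = \set{i \in [n]}{v_i \le v_{s_k} - \varepsilon}$, the sum defining $a$ ranges only over indices $i \in T$.

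Next I would argue $\abs{T} \le k-1$: if there were $k$ distinct indices $i_1, \dots, i_k \in T$, each would satisfy $v_{i_j} \le v_{s_k} - \varepsilon < v_{s_k}$ (using $\varepsilon > 0$), so the $k$-th smallest coordinate of $v$ would be at most $\max_j v_{i_j} < v_{s_k}$, contradicting that it equals $v_{s_k}$. Finally, restricting the outer sum to $T$, enlarging the inner sum by dropping the constraint $\nu \le v_{\textup{g}} - 5\varepsilon$, and invoking $\sum_\nu \abs{\alpha_\nu^{(i)}}^2 = 1$ gives
\[
a = \frac{1}{n}\sum_{i \in T} \sum_{\nu\colon \nu \le v_{\textup{g}}-5\varepsilon} \abs{\alpha_\nu^{(i)}}^2 \le \frac{1}{n}\sum_{i \in T}\sum_{\nu} \abs{\alpha_\nu^{(i)}}^2 = \frac{\abs{T}}{n} \le \frac{k-1}{n} \le \frac{k}{n}.
\]
There is no genuine obstacle here; the only point requiring a moment's care is the counting step $\abs{T}\le k-1$, which relies solely on the defining property of the sorted enumeration $v_{s_1} \le \cdots \le v_{s_n}$.
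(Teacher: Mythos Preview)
Your proposal is correct and follows essentially the same argument as the paper: both derive $v_i \le v_{\textup{g}} - 4\varepsilon \le v_{s_k} - \varepsilon < v_{s_k}$ for any contributing index, count that fewer than $k$ indices can satisfy this, and then use the normalization $\sum_\nu \abs{\alpha_\nu^{(i)}}^2 = 1$ to bound $a$. The only cosmetic difference is that the paper packages the argument via two sets of pairs $S_1 \subseteq S_2$ rather than your set $T$ of indices, but the logic and the chain of inequalities are identical.
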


\begin{proof}
  Let $S_1 = 
  \set{\rbra{i,\nu}}{\nu \le v_{\textup{g}} - 
  5 \varepsilon , \abs{\nu-v_i}\le \varepsilon}$,
  and
  $S_2 = 
  \set{\rbra{i,\nu}}{v_i \le v_{\textup{g}} -
   4 \varepsilon , \abs{\nu-v_i}\le \varepsilon}$.
  Since $\abs{\nu - v_i}\le \varepsilon$ 
  implies $ v_i \le \nu + \varepsilon$,
  we have
  $v_i \le v_{\textup{g}}- 4\varepsilon$
  for $\nu \le v_{\textup{g}} - 5\varepsilon$,
  meaning that $S_1\subseteq S_2$. 
  Noticethat if $v_i\le v_{\textup{g}} - 4\varepsilon$, 
  then $v_i < v_{s_{k}}$ as
  $\varepsilon >0$,
  meaning that the number of different $i$'s in 
  the set $S_2$ is less than $k$. 
  Therefore we have
  \begin{equation*}
    a = \frac{1}{n} \sum_{\rbra{i,\nu}\in S_1} 
    \abs{\alpha_{\nu}^{\rbra*{i}}}^2
    \le \frac{1}{n} \sum_{\rbra{i,\nu}\in S_2}
    \abs{\alpha_{\nu}^{\rbra*{i}}}^2
    \le \sum_{i: u_i < u_{s_{k}}} 
    \frac{1}{n} \le \frac{k}{n}.
	\end{equation*}
\end{proof}

Prior to presenting the quantum counting process in our algorithm,
we will begin by revisiting the amplitude estimation theorem.
In this context, 
we will adopt a variation where the success probability 
is boosted by majority voting.

\begin{theorem}
[Amplitude Estimation, adapted from~{\cite[Theorem 12]{BHMT02}}]%
\label{theorem:amplitude-estimation}
  Let $U$ be an $n\times n$ unitary matrix satisfying
  $U\ket{0}\ket{0} = 
    \sqrt{p}\ket{0}\ket{\phi_0}+\sqrt{1-p}\ket{1}\ket{\phi_1}$,
  where $p\in (0,1)$, $\ket{\phi_0}$ and $\ket{\phi_1}$ are
  normalized pure quantum states. 
  Then, for $\varepsilon > 0$ and $\delta > 0$,
  there exists a quantum algorithm
  $\mathsf{AmpEst}(U, \varepsilon, \delta)$ satisfying
  $
    \mathsf{AmpEst}(U, \varepsilon,\delta)\ket{0} 
    = \ket{\Lambda}\ket{\psi}
  $,
  for normalized states $\ket{\Lambda}$ and $\ket{\psi}$. 
  We have
  $
    \ket{\Lambda} = 
    \sqrt{q}\ket{\Lambda^{\varepsilon}}+ \sqrt{1-q}\ket{\Lambda^{\varepsilon\perp}}$
  and
  \begin{equation*}
    \ket{\Lambda^{\varepsilon}} = 
    \sum_{\nu \colon \abs{\nu-p} 
    \le 2\varepsilon \sqrt{p\rbra{1-p}} + \varepsilon^2} 
    \alpha_{\nu}\ket{\nu},
  \end{equation*}
  is a normalized state 
  with $q\ge 1-\delta$, 
  $\ket{\Lambda^{\varepsilon\perp}}$ is a normalized state
  that is orthogonal to $\ket{\Lambda^{\varepsilon}}$,
  and $\ket{\psi}$ is an ancillary state.
  Moreover, the algorithm uses
  $O(\log(\delta)/\varepsilon)$ queries to $U$ and runs in
  $O(\log(\delta)\log (n)/\varepsilon))$ time.
\end{theorem}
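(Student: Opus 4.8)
The plan is to obtain this statement as the coherent, high-success-probability version of the original amplitude estimation result of~\cite{BHMT02}. The starting point is Theorem 12 of~\cite{BHMT02} in its \emph{pre-measurement} form: running the amplitude estimation circuit with $M$ applications of the Grover-type iterate built from $U$ produces, before the final measurement, a state $\sum_{\nu}\beta_{\nu}\ket{\nu}\ket{g_{\nu}}$ whose total squared amplitude on estimates $\nu$ with $\abs{\nu-p}\le 2\pi\sqrt{p(1-p)}/M+\pi^2/M^2$ is at least $8/\pi^2$. First I would choose $M=\lceil \pi/\varepsilon\rceil=O(1/\varepsilon)$, so that this error window becomes a subinterval of $[p-\eta,\,p+\eta]$ with $\eta=2\varepsilon\sqrt{p(1-p)}+\varepsilon^2$, which is exactly the window appearing in the statement. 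At this stage the circuit uses $O(M)=O(1/\varepsilon)$ queries to $U$ and $O(\log(n)/\varepsilon)$ additional time for the QFT and the arithmetic on the estimate register.

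The second ingredient is a coherent median-boosting step, of the kind used for phase and amplitude estimation in~\cite{NWZ09}. I would run $t=\Theta(\log(1/\delta))$ independent coherent copies of the circuit above, obtaining $\bigotimes_{i=1}^{t}\bigl(\sum_{\nu_i}\beta^{(i)}_{\nu_i}\ket{\nu_i}\ket{g_{\nu_i}}\bigr)$, and then coherently compute the median of $\nu_1,\dots,\nu_t$ into a fresh register; this yields the state $\ket{\Lambda}\ket{\psi}$ of the statement, where $\ket{\psi}$ collects all per-copy garbage registers and the arithmetic workspace. To see $q\ge 1-\delta$: for each copy $i$ let $B_i$ be the ``bad'' subspace spanned by the basis states $\ket{\nu_i}$ with $\nu_i\notin[p-\eta,p+\eta]$, so that the squared-amplitude mass of $B_i$ is at most $\beta:=1-8/\pi^2<1/4$. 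A computational-basis state has bad median only if at least $\lceil t/2\rceil$ of its coordinates lie outside the window; hence the ``bad-median'' subspace is contained in $\bigcup_{T}\bigl(\text{all copies in }T\text{ bad}\bigr)$ over sets $T$ with $\abs{T}=\lceil t/2\rceil$. Since distinct $B_i$ act on distinct tensor factors, the squared-amplitude mass of ``all copies in $T$ bad'' factorizes and is at most $\beta^{\lceil t/2\rceil}$ for each such $T$, and a union bound over the at most $2^t$ choices of $T$ bounds the total bad mass by $(4\beta)^{t/2}$, which is below $\delta$ once $t=\Theta(\log(1/\delta))$ because $4\beta<1$.

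Combining the two ingredients gives the stated bounds: the query complexity is $t$ copies times $O(1/\varepsilon)$ queries each, i.e.\ $O(\log(1/\delta)/\varepsilon)$ queries to $U$ (matching the bound in the statement), and the running time is dominated by these copies together with the median computation on $t$ registers of $O(\log n)$ bits, giving $O(\log(1/\delta)\log(n)/\varepsilon)$. I expect the only real subtlety to be the coherent version of the median argument: being careful that ``bad'' is defined relative to the fixed but unknown interval around $p$, that the bad subspaces of the different copies genuinely factorize across tensor factors so that the product bound applies, and that the constant $1-8/\pi^2$ is strictly below $1/4$ so that $4\beta<1$; everything else is bookkeeping on top of~\cite[Theorem 12]{BHMT02}.
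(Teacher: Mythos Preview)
Your proposal is correct and matches the paper's approach: the paper does not give a proof of this theorem but states it as an adaptation of~\cite[Theorem~12]{BHMT02} ``where the success probability is boosted by majority voting,'' which is precisely your two-step argument of running BHMT amplitude estimation with $M=\Theta(1/\varepsilon)$ and then applying the coherent median trick of~\cite{NWZ09} over $\Theta(\log(1/\delta))$ copies. One minor remark: the output of the median step is not literally a product state $\ket{\Lambda}\ket{\psi}$ since the garbage register is entangled with the median register, but the paper treats this loosely throughout (cf.\ the footnote in \cref{def:approx-oracle} allowing ancillary states), and only the marginal on the estimate register matters for the downstream use.
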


The proposition presented here states a modified version 
of quantum counting~\cite{BHMT02},
which offers an approximate estimate of 
the state's amplitude
where register's values are below a specified threshold $u$
with query access to an
approximate oracle.

\begin{proposition}[Quantum Counting]\label{prop:quantum-amplitude-estimation-counting}
  Let $v = (v_1, v_2, \dots, v_n)\in \mathbb{R}^n_{\ge 0}$
  be an $n$-dimensional vector, 
  and $V$ be a unitary satisfying
  $V\ket{i}\ket{0} = \ket{i}\ket{\Lambda_i^{\varepsilon}}$,
  where $\ket{\Lambda_{i}^{\varepsilon}}$ 
  is a normalized state satisfying
  \begin{equation*}
    \ket{\Lambda_{i}^{\varepsilon}} 
    = \sum_{\nu \colon \abs{\nu-v_{i}}\le \varepsilon} 
    \alpha_{\nu}^{(i)} \ket{\nu},
  \end{equation*}
  for some $\varepsilon \in \interval[open]{0}{1/2}$.
  For $\delta \in \interval[open]{0}{1}$
  and a non-negative real number $u$, 
  there exists a quantum algorithm
  $\mathsf{QCount} 
  \rbra{V, u, \delta}$ that, 
  with probability at least $1-\delta$, 
  output an integer $\ell \in [n]$
  satisfying 
  $na \le \ell \le na + 2$,
  where 
  \[
  	a = \frac{1}{n} 
    \sum_{\substack{i,\nu: \\ 
    \nu\le u}} 
    \abs{\alpha_{\nu}^{\rbra*{i}}}^2,
  \] 
  using $\widetilde O\rbra{\sqrt{n\ell}}$
  queries to $v$. 
\end{proposition}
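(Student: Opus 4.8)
The plan is to turn the counting problem into an amplitude-estimation problem for a single flagged state, and then handle the fact that the amplitude $a$ (equivalently the target count $na$) is unknown in advance by an exponential search over the estimation precision.

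First I would build the unitary $U := \mathsf{Cmp}_u \cdot V \cdot (\mathsf{Uniform}\otimes I)$ acting on $\ket{0}\ket{0}\ket{0}$: applying $\mathsf{Uniform}$ and then $V$ produces $\ket{\Phi} = \tfrac{1}{\sqrt n}\sum_{i=1}^n \ket{i}\ket{\Lambda_i^\varepsilon}$, and $\mathsf{Cmp}_u$ writes $\ket{0}$ into a fresh flag register exactly when the value register holds some $\nu \le u$. Since each $\ket{\Lambda_i^\varepsilon}$ is normalized, $\ket{\Phi}$ is a unit vector and the squared norm of its projection onto $\{\nu \le u\}$ is precisely $a = \tfrac1n\sum_{i,\nu:\nu\le u}\abs{\alpha_\nu^{(i)}}^2$; hence $U\ket{0} = \sqrt a\,\ket{0}_{\mathrm{flag}}\ket{\phi_0} + \sqrt{1-a}\,\ket{1}_{\mathrm{flag}}\ket{\phi_1}$, and one invocation of $U$ or $U^\dagger$ costs one query to $V$. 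It then suffices to estimate $a$ to additive error below $\tfrac1{2n}$ and round.

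The tool is amplitude estimation (\cref{theorem:amplitude-estimation}): a call $\mathsf{AmpEst}(U, \eta, \delta')$ returns, with probability $\ge 1-\delta'$ and after measuring the estimate register, a value $\tilde a$ with $\abs{\tilde a - a} \le 2\eta\sqrt{a(1-a)} + \eta^2 \le 2\eta\sqrt a + \eta^2$, using $O(\log(1/\delta')/\eta)$ queries to $U$. Driving the right-hand side below $\tfrac1{2n}$ needs $\eta = \Theta\bigl(1/\sqrt{n(1+na)}\bigr)$, i.e. $\Theta(\sqrt{n(1+na)})$ queries, which already matches $\widetilde O(\sqrt{n\ell})$ once the output satisfies $\ell \asymp 1+na$. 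Since $a$ is unknown, I would run $\mathsf{AmpEst}(U, 2^{-j}, \delta_j)$ for $j = 1, 2, \dots$ with $\delta_j = 6\delta/(\pi^2 j^2)$ so that $\sum_j \delta_j = \delta$, obtaining $\tilde a_j$ each round, and stop at the first round for which $2\cdot 2^{-j}\sqrt{\tilde a_j} + 7\cdot 2^{-2j} \le \tfrac1{2n}$; then output $\ell := \min\{n, \lceil n\tilde a_j + \tfrac12\rceil\}$. The correctness of this stopping rule rests on a two-sided elementary estimate: assuming round $j$ succeeded, the bound $\abs{\tilde a_j - a}\le 2\cdot 2^{-j}\sqrt a + 2^{-2j}$ read as a quadratic inequality in $\sqrt a$ gives both $\sqrt a \le \sqrt{\tilde a_j} + 3\cdot 2^{-j}$ and $\sqrt{\tilde a_j}\le \sqrt a + 2^{-j}$. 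The first inequality shows that whenever the stopping condition holds the genuine error $2\cdot 2^{-j}\sqrt{a(1-a)} + 2^{-2j}$ is $\le \tfrac1{2n}$, so stopping is never premature; the second shows the condition is forced once $2^{-j}\lesssim \min\{1/\sqrt n,\, 1/(n\sqrt a)\}$, i.e. by some round $j^\ast = O(\log(n(1+na)))$ with $2^{j^\ast} = O(\sqrt{n(1+na)})$. Finally $\abs{\tilde a_j - a}\le\tfrac1{2n}$ yields $na \le n\tilde a_j + \tfrac12 \le \lceil n\tilde a_j + \tfrac12\rceil \le na + 2$, and the clamp at $n$ only helps since $na \le n$; so $na \le \ell \le na+2$, and in particular $\ell \in [n]$ with $\ell \ge \max\{1, na\}$.

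Wrapping up: conditioning on all invoked rounds of $\mathsf{AmpEst}$ succeeding — probability $\ge 1 - \sum_j\delta_j = 1-\delta$ by a union bound — the output obeys $na \le \ell \le na + 2$; for the cost, round $j$ uses $O\bigl((\log j + \log(1/\delta))\,2^j\bigr)$ queries to $V$, and the geometric sum up to $j^\ast$ is dominated by its last term, giving $\widetilde O(2^{j^\ast}) = \widetilde O(\sqrt{n(1+na)}) = \widetilde O(\sqrt{n\ell})$ since $\ell \ge \max\{1,na\}$. I expect the main obstacle to be calibrating this stopping rule so that it is simultaneously conservative enough never to terminate while the amplitude-estimation error still exceeds $\tfrac1{2n}$ and aggressive enough to terminate only a constant number of doublings past the ideal precision (so the query bound does not blow up); the two-sided quadratic estimate relating $\sqrt a$ and $\sqrt{\tilde a_j}$ is exactly what makes both hold, with a little extra care for the degenerate regimes $a = 0$ and $a < 1/n$, where the $\eta^2$ term dominates and $\ell$ is a small constant.
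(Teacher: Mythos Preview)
Your proposal is correct. The reduction to a single flagged amplitude via $\mathsf{Uniform}$, $V$, and a comparator is identical to the paper's; the divergence is in how you cope with not knowing $a$ in advance. The paper uses a two-shot coarse-then-fine scheme: one call $\mathsf{AmpEst}(U_1,1/\sqrt{n},\delta/2)$ gives a rough $\ell_0$ with $\abs{\ell_0 - na}\le 2\sqrt{na}+1$, hence $\ell_0\asymp na$ (the paper simply assumes $na\ge 16$ here), and a second call $\mathsf{AmpEst}(U_1,1/(8\sqrt{\ell_0 n}),\delta)$ then nails $a$ to additive $1/(2n)$. You instead run an exponential search in the precision parameter with a self-certifying stopping rule driven by the two-sided bound $\sqrt{a}\le\sqrt{\tilde a_j}+3\cdot 2^{-j}$ and $\sqrt{\tilde a_j}\le\sqrt{a}+2^{-j}$. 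Both are standard ways to instantiate ``amplitude estimation with unknown target amplitude'' and land at $\widetilde O(\sqrt{n(1+na)})=\widetilde O(\sqrt{n\ell})$ queries; your version is a little longer but handles the small-$a$ regime and the clamp $\ell\le n$ explicitly, whereas the paper's two-step is shorter but sweeps the $na<16$ boundary case under the asymptotic rug.
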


\begin{proof}
  Let $\mathsf{Uniform}$ be the unitary satisfying
  \begin{equation*}
    \mathsf{Uniform}\ket{0} = \sum_{i = 1}^n \frac{1}{\sqrt{n}}\ket{i}.
  \end{equation*}
  Therefore, let $\operatorname{Cmp}$ be the function satisfying
  \begin{equation*}
    \operatorname{Cmp}\rbra{x} =
    \begin{cases}
      0, & x \le u, \\
      1, & \text{otherwise},
    \end{cases}
  \end{equation*}
  and $\mathsf{Cmp}$ be the unitary satisfying
  $\mathsf{Cmp} \ket{x}\ket{i} 
    = \ket{x}\ket{i\oplus \operatorname{Cmp}\rbra*{x}}$.
  Then, let
  $U_1 = 
  \rbra*{I\otimes \mathsf{Cmp}} 
  \cdot \rbra*{V\otimes I}\cdot 
  \rbra*{\mathsf{Uniform}\otimes I}$,
  we know
  \begin{equation*}
    U_1 \ket{0} \ket{0}
    = \rbra*{I\otimes \mathsf{Cmp}} 
    \cdot \rbra*{U'\otimes I}
    \cdot \rbra*{\mathsf{Uniform}\otimes I}\ket{0} 
    = \ket{\psi_0}\ket{0} + \ket{\psi_1}\ket{1},
  \end{equation*}
  where
  \begin{equation*}
    \ket{\psi_0} = 
    \frac{1}{\sqrt{n}}
    \sum_{i,\nu: \nu \le u} 
    \alpha_{\nu}^{\rbra*{i}}\ket{i}\ket{\nu},
  \end{equation*}
  meaning that $\Abs{\ket{\psi_0}}^2 = a \in \interval{0}{1}$. 
    
  Now, apply
  $\mathsf{AmpEst}\rbra{U_1, 1/\sqrt{n}, \delta/2}$
  to state $\ket{0}$, 
  measure the state, 
  and store the measurement result as $\ell_0$.
  By\cref{theorem:amplitude-estimation},
  we know with probability at least $1-\delta/2$,
  \[
    \abs{\ell_0 - na} \le 2\sqrt{na(1-a)} + 1 
    \le 2\sqrt{na} + 1.
  \]
  The above procedure
  uses $\widetilde{O}(\sqrt{n})$ queries to $U_1$.
  As we are considering the asymptotic complexity,
  for simplicity we can assume $na \ge 16$,
  meaning that
  \[
    \frac{na}{2} \le \ell_0 \le \frac{3na}{2}.
  \]

  Therefore, applying \cref{theorem:amplitude-estimation},
  we know that applying
  $\mathsf{AmpEst}\rbra{U_1, 1/8\sqrt{\ell_0 n} ,\delta}$ 
  to $\ket{0}$ and measure
  it in the computational basis, 
  with probability at least $1-\delta$, using
  $\widetilde O \rbra{\sqrt{n\ell_0}}$ queries to $U'$,
  we will get the estimate $a_1$ of $a$
  satisfying
  \begin{equation*}
    \abs*{ a_1 - a}\le \frac{2\sqrt{a\rbra*{1-a}}}{8\sqrt{\ell n}} + \frac{1}{64\ell n} \le \frac{1}{2n}.
  \end{equation*}
  Thus, taking $\ell = \ceil{n a_1 + 1/2}$ 
  will lead to the desired result.
\end{proof}

We introduces an extension of the ``coupon collector'' theorem here.
\begin{lemma}\label{lemma:balls-into-bins}
  Let $X$ be a random variable 
  taking values in $[n]$. 
  Let $S_1 \subseteq [n]$
  be a set of size $s_1$. 
  Suppose $\Pr[X = i] = p$ for all $i\in S_1$, 
  with $s_1 p\le 1$. 
  Let $R$ be the set of results from sampling $X$ for
  $\Omega \rbra{\log \rbra{s_1}/p}$ times. 
  Then, with probability at least $2/3$,
  $S_1\subseteq R$.
\end{lemma}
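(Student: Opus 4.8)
The plan is to run the classical coupon-collector union bound. Write $X_1, \ldots, X_m$ for the $m = \Omega(\log(s_1)/p)$ independent samples of $X$ that make up $R = \{X_1, \ldots, X_m\}$. First I would fix a coupon $i \in S_1$: since the $X_t$ are i.i.d.\ and $\Pr[X_t = i] = p$ for each $t$, the probability that $i$ is missed by every sample is
\[
  \Pr[i \notin R] = (1-p)^m \le e^{-pm},
\]
using $1 + x \le e^x$ at $x = -p$.

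Then I would union-bound over the $s_1$ coupons of $S_1$:
\[
  \Pr[S_1 \not\subseteq R] \le \sum_{i \in S_1} \Pr[i \notin R] \le s_1\, e^{-pm}.
\]
The last task is to pin down the hidden constant in $m = \Omega(\log(s_1)/p)$: any $m \ge \frac{1}{p}\ln(3 s_1)$ makes the right-hand side at most $1/3$, and since $\ln(3 s_1) = O(1 + \log s_1)$ this choice lies within the asymptotic budget of the statement. Passing to complements gives $\Pr[S_1 \subseteq R] \ge 2/3$, as claimed.

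I do not expect a genuine obstacle here. The two points that merit a sentence are: (i) that the samples forming $R$ are truly independent, so the per-trial ``miss'' events for a fixed coupon multiply, giving the clean $(1-p)^m$; and (ii) that the hypothesis $s_1 p \le 1$ is never used in the tail bound — it is only a consistency requirement guaranteeing that the weights $\{p\}_{i \in S_1}$ fit inside a probability distribution on $[n]$. I would also remark that rerunning the same estimate with $m \ge \frac{1}{p}\ln(s_1/\eta)$ upgrades the success probability to $1 - \eta$ for any $\eta \in (0,1)$, which is the form in which the lemma is actually consumed inside \cref{algo:approx-find-strong-min}.
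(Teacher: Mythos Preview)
Your proposal is correct but takes a different route from the paper. You use a direct tail bound plus union bound: fix a coupon $i\in S_1$, bound the probability it is missed after $m$ independent trials by $(1-p)^m \le e^{-pm}$, then union-bound over the $s_1$ coupons and choose $m \ge \frac{1}{p}\ln(3s_1)$. The paper instead runs the classical coupon-collector expectation argument: it lets $Y_i$ be the number of additional samples needed to go from $\abs{R\cap S_1}=i-1$ to $\abs{R\cap S_1}=i$, observes that each $Y_i$ is geometric with parameter $(s_1-i+1)p$, sums the expectations to $\E[Y]\le(\log s_1+1)/p$, and then applies Markov's inequality to conclude that $Y\le 3(\log s_1+1)/p$ with probability at least $2/3$. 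Your route is more elementary and, as you note, immediately yields the amplified form $\Pr[S_1\subseteq R]\ge 1-\eta$ for $m\ge\frac{1}{p}\ln(s_1/\eta)$; the paper's Markov-based argument only produces the constant $2/3$ and relies on the outer $\Theta(\log(1/\delta))$ repetition in \cref{algo:approx-find-strong-min} to amplify. Your observation that the hypothesis $s_1 p\le 1$ is merely a consistency condition (so that the weights on $S_1$ fit inside a probability distribution) and plays no role in either bound is also correct.
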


\begin{proof}
  Let $Y_i$ denote the number of samples 
  needed for $\abs{R\cap S_1} = i-1$ to
  become $\abs{R\cap S_1} = i$. 
  Let $Y = \sum_{i=1}^{s_1} Y_i$.
  When $\abs{R\cap S_1} = i-1$, 
  the probability that a sample falls in
  $S_1 - \rbra{R\cap S_1}$ is
  $ p_i = \rbra*{s_1-i+1}p$.
  By definition, 
  $Y_i$ obeys the geometric distribution,
  meaning that
  \begin{equation*}
    \Pr \sbra*{Y_i = k} = {(1-p_i)}^{k-1}p_i.
  \end{equation*}
  Therefore, $\E [Y_i] = 1/p_i = 1/\rbra{s_1-i+1}p$.
  By the linearity of expectations, we have
  \begin{equation*}
    \E \sbra*{Y} 
    = \sum_{i=1}^{s_1} \E \sbra*{Y_i} 
    = \sum_{i=1}^{s_1} \frac{1}{\rbra*{s_1-i+1}p} 
    \le \frac{\log \rbra*{s_1}+1}{p}.
  \end{equation*}
  Thus, by Markov's inequality, 
  with probability at least $2/3$,
  $Y\le 3\rbra{\log \rbra{s_1}+1}/p$, 
  meaning that if we sample $X$ for
  $3\rbra{\log \rbra{s_1}+1}/p$ times, 
  them we will get a set $R$ satisfying
  $S_1 \subseteq R$.
\end{proof}

We now revisit the amplitude amplification algorithm
initially presented in~\cite{BHMT02}. 
Additionally, we also incorporate the median trick~\cite{NWZ09}
to boost the success probability.

\begin{theorem}
[Amplitude Amplification, adapted from {\cite[Theorem 3]{BHMT02}}]%
\label{theorem:quantum-amplitude-amplification}
  Let $U$ be an $n\times n$ unitary matrix. 
  Suppose that
  $ U \ket{0}\ket{0} = \sqrt{p}\ket{0}\ket{\phi_0} + \sqrt{1-p}\ket{1}\ket{\phi_1}, $
  where $p \in \rbra{0,1}$, 
  $\ket{\phi_0}$ and $\ket{\phi_1}$ are normalized
  pure quantum state. 
  There exists a quantum algorithm
  $\mathsf{Amp}\rbra{U, \delta}$, such that, 
  with probability at least $1-\delta$, 
  output the state $\ket{\phi_0}$, 
  using $O\rbra{\log \rbra{1/\delta}/\sqrt{p}}$ 
  queries to $U$
  and in
  $O\rbra{\log \rbra{n}\log \rbra{1/\delta}/\sqrt{p}}$ time.
\end{theorem}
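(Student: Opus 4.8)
The plan is to use the amplitude amplification of~\cite{BHMT02} to reach a constant success probability, and then boost it to $1-\delta$ by independent repetition, exploiting the fact that once the flag register is fixed the relevant subspace is one-dimensional. The key observation is that every Grover-type iteration keeps the system inside $\spanspace\cbra{\ket{0}\ket{\phi_0},\ket{1}\ket{\phi_1}}$, so after any amplification step, measuring \emph{only} the flag (first) register and obtaining the outcome $0$ collapses the work register exactly onto $\ket{\phi_0}$, with no residual error and regardless of the (possibly random) number of iterations performed. It therefore suffices to make the probability of reading $0$ on the flag register a positive constant and then repeat.

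Concretely, I would first apply the amplitude amplification routine of~\cite[Theorem~3]{BHMT02} to $U$ to obtain a subroutine $\mathcal{B}$ that uses $O(1/\sqrt{p})$ queries to $U$ and $U^{\dagger}$, keeps the state in $\spanspace\cbra{\ket{0}\ket{\phi_0},\ket{1}\ket{\phi_1}}$, and makes a subsequent flag measurement return $0$ with probability at least an absolute constant $c_0>0$ (the two reflections involved cost $O(\log n)$ elementary gates per query). Then $\mathsf{Amp}(U,\delta)$ repeats, for at most $T=\ceil{\log(1/\delta)/\log(1/(1-c_0))}=O(\log(1/\delta))$ rounds: run $\mathcal{B}$, measure the flag register, and if the outcome is $0$, output the work register (which is exactly $\ket{\phi_0}$) and halt; if all $T$ rounds fail, output an arbitrary state. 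The probability that all rounds fail is at most $(1-c_0)^{T}\le\delta$, and conditioned on some round succeeding the output is $\ket{\phi_0}$; the query complexity is $T\cdot O(1/\sqrt p)=O(\log(1/\delta)/\sqrt p)$ and the running time is $O(\log(n)\log(1/\delta)/\sqrt p)$, as claimed.

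The main subtlety I expect here — the analogue for this problem of the ``median trick'' used for estimation tasks — is that one cannot aggregate the \emph{outputs} of several runs, since they are quantum states; instead one repeats and keeps the first successful run, and this only works because the measurement is confined to the flag qubit, leaving the work register intact upon success. A secondary technical point is that, for genuinely unknown $p$, the $O(1/\sqrt p)$ cost of $\mathcal{B}$ is the standard \emph{expected} cost of exponential search, whereas it is worst-case whenever a positive lower bound on $p$ is available (which is the case wherever we use this theorem, e.g.\ inside $\mathsf{AmpSamp}$). As an alternative that avoids any intermediate measurement, one could instead first run the median-boosted amplitude estimation of \cref{theorem:amplitude-estimation} to get a constant-factor estimate of $p$ with $O(\log(1/\delta)/\sqrt p)$ queries, and then apply fixed-angle amplitude amplification with the estimated rotation angle to produce a state $O(\delta)$-close to $\ket{\phi_0}$.
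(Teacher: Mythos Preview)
The paper does not supply its own proof of this theorem: it is stated as an adaptation of \cite[Theorem~3]{BHMT02} with the remark that ``we also incorporate the median trick~\cite{NWZ09} to boost the success probability,'' and no further argument is given. Your sketch is a correct and complete reconstruction of exactly this: constant-probability amplitude amplification from \cite{BHMT02} followed by $O(\log(1/\delta))$ independent repetitions, using the flag-register measurement to certify success. Your observation that one cannot literally take a median of quantum states, and must instead repeat and retain the first successful run, is the right way to interpret the paper's casual use of the phrase ``median trick'' in this setting; there is no disagreement in substance.
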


Using quantum amplitude 
amplification~\cite{BHMT02}, 
we present the following proposition 
regarding the $\mathsf{AmpSamp}$ procedure.

\begin{proposition}\label{prop:amplitude-amplification-for-sampling}
  Let $v = (v_1, v_2, \dots, v_n)\in \mathbb{R}^n_{\ge 0}$
  be an $n$-dimensional vector, 
  and $V$ be a unitary satisfying
  $V\ket{i}\ket{0} = \ket{i}\ket{\Lambda_i^{\varepsilon}}$,
  where $\ket{\Lambda_{i}^{\varepsilon}}$ 
  is a normalized state satisfying
  \begin{equation*}
    \ket{\Lambda_{i}^{\varepsilon}} 
    = \sum_{\nu \colon \abs{\nu-v_{i}} \le \varepsilon} 
    \alpha_{\nu}^{(i)} \ket{\nu},
  \end{equation*}
  for some $\varepsilon \in \interval[open]{0}{1/2}$.
  For $\delta \in \interval[open]{0}{1}$
  and a non-negative real number $u$ with
  $u\ge \min_i v_i + \varepsilon$, 
  there exists a quantum algorithm
  $\mathsf{AmpSamp} 
  \rbra{V, u, \varepsilon, \delta}$ that, 
  with probability at least $1-\delta$, 
  outputs the following state
  \begin{equation*}
    \frac{1}{\sqrt{na}} 
    \sum_{i: v_i\le u - \varepsilon}
    \ket{i}\ket{\Lambda_i^{\varepsilon}} + 
    \ket{\psi_{\textup{garbage}}},
  \end{equation*}
  where $\ket{\psi_{\textup{garbage}}}$ 
  is an unnormalized state and
  \begin{equation*}
    a = \sum_{i,\nu:\nu \le u}
    \frac{\abs{\alpha_{\nu}^{\rbra{i}}}^2}{n},
  \end{equation*} 
  using $\widetilde O\rbra{1/\sqrt{a}}$ queries to $V$
  (assuming $a > 0$).
\end{proposition}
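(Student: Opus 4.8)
The plan is to reuse the state-preparation and comparator construction from the proof of \cref{prop:quantum-amplitude-estimation-counting}, and then run amplitude amplification on the flagged branch. First I would let $\mathsf{Uniform}$ be the unitary with $\mathsf{Uniform}\ket{0} = \frac{1}{\sqrt{n}}\sum_{i=1}^{n}\ket{i}$, let $\operatorname{Cmp}(\nu)$ equal $0$ if $\nu\le u$ and $1$ otherwise, let $\mathsf{Cmp}$ be the unitary with $\mathsf{Cmp}\ket{\nu}\ket{b} = \ket{\nu}\ket{b\oplus\operatorname{Cmp}(\nu)}$, and set $U_1 = (I\otimes\mathsf{Cmp})(V\otimes I)(\mathsf{Uniform}\otimes I)$. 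Then, exactly as in that proof,
\[
  U_1\ket{0}\ket{0} = \ket{\psi_0}\ket{0} + \ket{\psi_1}\ket{1}, \qquad
  \ket{\psi_0} = \frac{1}{\sqrt{n}}\sum_{i,\nu\colon\nu\le u}\alpha_\nu^{(i)}\ket{i}\ket{\nu},
\]
so that $\Abs{\ket{\psi_0}}^2 = a$. Note that the hypothesis $u\ge\min_i v_i+\varepsilon$ already forces $a\ge 1/n>0$: taking $i^\star$ with $v_{i^\star} = \min_i v_i$, every $\nu$ in the support of $\ket{\Lambda_{i^\star}^\varepsilon}$ satisfies $\nu\le v_{i^\star}+\varepsilon\le u$, and $\ket{\Lambda_{i^\star}^\varepsilon}$ is normalized.

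Next I would invoke $\mathsf{Amp}(U_1,\delta)$ from \cref{theorem:quantum-amplitude-amplification}, with the flag qubit marking the good subspace; crucially, this algorithm needs no prior knowledge of $a$. It outputs, with probability at least $1-\delta$, the normalized state $\ket{\psi_0}/\sqrt{a} = \frac{1}{\sqrt{na}}\sum_{i,\nu\colon\nu\le u}\alpha_\nu^{(i)}\ket{i}\ket{\nu}$ (discarding the flag register, which is left in $\ket{0}$), using $O(\log(1/\delta)/\sqrt{a})$ queries to $U_1$. Since each use of $U_1$ or $U_1^\dagger$ costs one query to $V$ or $V^\dagger$ together with $\widetilde{O}(1)$ elementary gates implementing $\mathsf{Uniform}$ and $\mathsf{Cmp}$, this amounts to $\widetilde{O}(1/\sqrt{a})$ queries to $V$, as claimed.

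Finally I would rewrite this output in the stated form. The key observation is that if $v_i\le u-\varepsilon$, then every $\nu$ with $\abs{\nu-v_i}\le\varepsilon$ automatically satisfies $\nu\le v_i+\varepsilon\le u$, so the whole superposition $\ket{\Lambda_i^\varepsilon}$ is retained under the restriction $\nu\le u$; for $i$ with $v_i>u-\varepsilon$ only a (possibly empty) sub-superposition survives. Splitting the index sum accordingly gives
\[
  \frac{1}{\sqrt{na}}\sum_{i,\nu\colon\nu\le u}\alpha_\nu^{(i)}\ket{i}\ket{\nu}
  = \frac{1}{\sqrt{na}}\sum_{i\colon v_i\le u-\varepsilon}\ket{i}\ket{\Lambda_i^\varepsilon}
  + \ket{\psi_{\textup{garbage}}},
\]
with $\ket{\psi_{\textup{garbage}}} = \frac{1}{\sqrt{na}}\sum_{i\colon v_i>u-\varepsilon}\,\sum_{\nu\le u,\,\abs{\nu-v_i}\le\varepsilon}\alpha_\nu^{(i)}\ket{i}\ket{\nu}$ an (in general unnormalized) state, matching the statement. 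I do not expect a substantial obstacle here; the only points requiring care are citing the variant of amplitude amplification that works without knowing the success amplitude $a$, and verifying that the $\varepsilon$ margin in the condition ``$v_i\le u-\varepsilon$'' is exactly what guarantees that the good branch keeps the \emph{entire} state $\ket{\Lambda_i^\varepsilon}$.
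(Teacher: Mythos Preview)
Your proposal is correct and follows essentially the same route as the paper: build $U_1=(I\otimes\mathsf{Cmp})(V\otimes I)(\mathsf{Uniform}\otimes I)$, identify the flagged branch $\ket{\psi_0}$ with $\Abs{\ket{\psi_0}}^2=a$, apply $\mathsf{Amp}(U_1,\delta)$ from \cref{theorem:quantum-amplitude-amplification}, and use the observation that $v_i\le u-\varepsilon$ forces every $\nu$ in the support of $\ket{\Lambda_i^\varepsilon}$ to satisfy $\nu\le u$ in order to split off the garbage term. Your added remark that the hypothesis $u\ge\min_i v_i+\varepsilon$ guarantees $a\ge 1/n>0$ is a small bonus the paper does not spell out.
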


\begin{proof}
  Let $\mathsf{Uniform}$ be the unitary satisfying
	\begin{equation*}
    \mathsf{Uniform}\ket{0} = \sum_{i = 1}^n \frac{1}{\sqrt{n}}\ket{i},
	\end{equation*}
  $\operatorname{Cmp}$ be the function satisfying
	\begin{equation*}
    \operatorname{Cmp}\rbra{x} =
    \begin{cases}
      0, & x\le u, \\
      1, & \text{otherwise},
	\end{cases}
	\end{equation*}
  and $\mathsf{Cmp}$ be the unitary satisfying
  $\mathsf{Cmp} \ket{x}\ket{i} 
  = \ket{x}\ket{i\oplus f\rbra{x}}$.
  Then, 
  for
  $U_1 = \rbra*{I\otimes \mathsf{Cmp}} 
  \cdot \rbra*{V\otimes I}
  \cdot \rbra*{\mathsf{Uniform}\otimes I}$,
  we know
  \begin{equation*}
    U_1 \ket{0} \ket{0}
    = \rbra*{I\otimes \mathsf{Cmp}} 
    \cdot \rbra*{V\otimes I}
    \cdot \rbra*{\mathsf{Uniform}\otimes I}\ket{0} 
    = \ket{\psi_0}\ket{0} + \ket{\psi_1}\ket{1},
  \end{equation*}
  where
  \begin{equation*}
    \ket{\psi_0} 
    = \frac{1}{\sqrt{n}}
    \sum_{i,\nu: \nu \le u} 
    \alpha_{\nu}^{\rbra*{i}}\ket{i}\ket{\nu},
  \end{equation*}
  meaning that $\Abs{\ket{\psi_0}}^2 = a$.
  Notice that if
  $u_i\le u - \varepsilon$, then for $\nu$ such that
  $\abs{\nu-u_i}\le \varepsilon$, it must hold that
  $\nu\le u$. 
  Therefore, we can write $\ket{\psi_0}$ in another way as
  \begin{equation*}
    \ket{\psi_0} = 
    \frac{1}{\sqrt{n}}
    \sum_{i,\nu: v_i \le u - \varepsilon} 
    \ket{i}\ket{\Lambda_i^{\varepsilon}} +
    \ket{\psi_{\textup{garbage}}},
  \end{equation*}
  where $\ket{\psi_{\textup{garbage}}}$ is an unnormalized state. 
  Therefore,
  applying \cref{theorem:quantum-amplitude-amplification},
  we know
  $\mathsf{Amp}\rbra{U_1, \delta}$ will, 
  with probability at least $1-\delta$,
  output the state $\ket{\psi_0}/\sqrt{a}$ 
  using $\widetilde O\rbra{1/\sqrt{a}}$
  queries to $V$.
\end{proof}

\begin{theorem}[Approximate Strong Minimum Index Set Finding]%
\label{thm:approx-find-strong-min-appendix}
  Suppose $v = (v_1, \ldots, v_n)\in \mathbb{R}^{n}_{\ge 0}$ 
  is an $n$-dimensional vector with
  $v_j \in \interval{0}{1}$ for $j\in [n]$.
  For $\varepsilon \in \interval[open]{0}{1/2}$ 
  and $\delta\in \interval[open]{0}{1/2}$, 
  let $V$ be 
  a $d$-dimensional
  $(\varepsilon, \delta_0)$-approximate oracle for $v$, 
  Then, there is a quantum algorithm
  $\mathsf{FindApproxStrongMin}
  \rbra{V, k, \varepsilon, \delta}$ that,
  with probability at least 
  $1-\delta-\widetilde{O}(nd\delta_0\sqrt{kd})$,
  outputs a strong $(k, 7\varepsilon)$-approximate 
  minimum index set $S$ for $v$,
  using $\widetilde{O}\rbra{\sqrt{nk}}$ queries to $V$, 
  and running in $\widetilde{O}\rbra{\sqrt{nk}}$ time.
\end{theorem}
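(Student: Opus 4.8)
The plan is to mimic the structure of the proof of \cref{theorem:approx-weak-min-find-appendix}: first reduce to the case where $V$ is a perfect $(\varepsilon,0)$-approximate oracle, and then analyze \cref{algo:approx-find-strong-min} run with such an oracle, conditioning in turn on the success of its randomized subroutines $\mathsf{FindApproxWeakMin}$, $\mathsf{QCount}$, and the $\mathsf{AmpSamp}$ loop. For the reduction, let $V'$ be the idealized unitary $V'\ket{i}\ket{j}=\ket{i}\ket{\Lambda_i^{\varepsilon}+j}$; exactly as in \cref{theorem:approx-weak-min-find-appendix} one shows $\Abs{V-V'}\le 2\sqrt{nd}\,\delta_0$, hence $\Abs{V-V'}_{\diamond}\le 4\sqrt{nd}\,\delta_0$, and since the whole algorithm makes $\widetilde{O}(\sqrt{nk})$ oracle queries, replacing $V$ by $V'$ perturbs the output distribution by at most $\widetilde{O}(\sqrt{nk}\cdot\sqrt{nd}\,\delta_0)\le\widetilde{O}(nd\delta_0\sqrt{kd})$ in total variation. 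So it suffices to prove that, run with $V'$, the algorithm succeeds with probability $\ge 1-\delta$.

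Running with $V'$, I would first pin down the threshold. By \cref{theorem:weak-k-min-finding-perfect}, with probability $\ge 1-\delta/10$ the set $S_0$ is a weak $(k,2\varepsilon)$-approximate minimum index set, so its largest true value lies in $[v_{s_k},v_{s_k}+2\varepsilon]$; since $v'=\mathsf{ApproxQuery}(V',S_0)$ is (deterministically, in the $V'$ world) entrywise within $\varepsilon$ of $v$ on $S_0$, the value $v_{\textup{g}}=\max_{i\in S_0}v_i'$ satisfies both $v_{s_k}-\varepsilon\le v_{\textup{g}}\le v_{s_k}+3\varepsilon$ and, crucially, $v_i\le v_i'+\varepsilon\le v_{\textup{g}}+\varepsilon$ for every $i\in S_0$. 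Next, with $u:=v_{\textup{g}}-5\varepsilon$ and $a:=\tfrac1n\sum_{i,\nu:\nu\le u}\abs{\alpha_\nu^{(i)}}^2$, \cref{prop:amplitude-bounds-for-find-strong-minimum-algo} gives $a\le k/n$, and \cref{prop:quantum-amplitude-estimation-counting} returns (with probability $\ge 1-\delta/10$, using $\widetilde{O}(\sqrt{n\ell})=\widetilde{O}(\sqrt{nk})$ queries) an integer $\ell$ with $na\le\ell\le na+2$, so $\ell=O(k)$. The loop then makes $\Theta(\ell\log k\log(1/\delta))$ calls to $\mathsf{AmpSamp}(V',u,\varepsilon,\delta/5n\ell)$, each succeeding with probability $\ge 1-\delta/5n\ell$ and using $\widetilde{O}(1/\sqrt a)$ queries; a union bound makes all of them succeed with probability $\ge 1-\delta/5$, and the total query cost of the loop is $\widetilde{O}(\ell/\sqrt a)=\widetilde{O}(n\sqrt a)=\widetilde{O}(\sqrt{n\ell})=\widetilde{O}(\sqrt{nk})$, so the overall query and time complexity is $\widetilde{O}(\sqrt{nk})$.

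Conditioning on all of the above, the next step is a coupon-collector argument for $R$. By \cref{prop:amplitude-amplification-for-sampling}, one successful call to $\mathsf{AmpSamp}$ produces a state whose measured first register returns each index $i$ with $v_i\le u-\varepsilon=v_{\textup{g}}-6\varepsilon$ with probability exactly $1/(na)$, while every possible outcome $i$ satisfies $v_i\le u+\varepsilon=v_{\textup{g}}-4\varepsilon$. Writing $S_1:=\{i:v_i\le v_{\textup{g}}-6\varepsilon\}$, the bound $a\ge\abs{S_1}/n$ gives $\abs{S_1}\le na$, and applying the high-probability strengthening of \cref{lemma:balls-into-bins} (with $\abs{S_1}$ coupons, hit probability $1/(na)$, and $\Theta(\ell\log k\log(1/\delta))=\Omega\bigl(na\log(\abs{S_1}/\delta)\bigr)$ samples) shows $S_1\subseteq R$ with probability $\ge 1-\delta/10$, while always $R\subseteq\{i:v_i\le v_{\textup{g}}-4\varepsilon\}$. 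Finally, $\widetilde v=\mathsf{ApproxQuery}(V',R\cup S_0)$ is entrywise within $\varepsilon$ of $v$ on $R\cup S_0$, and $S$ consists of the $k$ smallest $\widetilde v$-values there. Take any $j\in[n]-S$. If $j\in R\cup S_0$, then $\widetilde v_j\ge\max_{i\in S}\widetilde v_i$, whence $v_j\ge\max_{i\in S}v_i-2\varepsilon$. If $j\notin R\cup S_0$, then $j\notin R\supseteq S_1$ forces $v_j>v_{\textup{g}}-6\varepsilon$; and were $v_j<\max_{i\in S}v_i-7\varepsilon$, combining with the previous inequality would give $\max_{i\in S}v_i>v_{\textup{g}}+\varepsilon$, so the index $i^{\ast}\in S\subseteq R\cup S_0$ attaining this maximum would have $v_{i^{\ast}}>v_{\textup{g}}+\varepsilon$, incompatible with $v_{i^{\ast}}\le v_{\textup{g}}-4\varepsilon$ (if $i^{\ast}\in R$) and with $v_{i^{\ast}}\le v_{\textup{g}}+\varepsilon$ (if $i^{\ast}\in S_0$, by the bound from the previous paragraph) --- a contradiction. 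Either way $\max_{i\in S}v_i\le v_j+7\varepsilon$, so $S$ is a strong $(k,7\varepsilon)$-approximate minimum index set; summing the failure probabilities ($4\times\delta/10$, the $\mathsf{AmpSamp}$ union bound $\delta/5$, and the $\widetilde{O}(nd\delta_0\sqrt{kd})$ from the reduction) yields the claim.

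The genuinely new and hardest part is the second case of the last step --- the indices $j$ that are never sampled and never enter $S_0$ --- where the error-tolerant incrementability used in the weak setting is of no help. The resolution is the double role of $v_{\textup{g}}$: it is small enough (via $v_{\textup{g}}\ge v_{s_k}-\varepsilon$) that the amplitude-amplified sampling captures every element below $v_{\textup{g}}-6\varepsilon$, yet large enough (via $v_i\le v_{\textup{g}}+\varepsilon$ for $i\in S_0$) to cap $\max_{i\in S}v_i$, which is exactly what makes the two cases meet at the same constant $7\varepsilon$. Secondary nuisances are the bookkeeping of the several failure probabilities and the $\mathsf{AmpSamp}$ union bound so the total stays $\delta+\widetilde{O}(nd\delta_0\sqrt{kd})$, and the degenerate regime $na=O(1)$ (where $\mathsf{QCount}$'s guarantee degrades): there $S_1=\emptyset$, so the coupon-collector step is vacuous, $v_j>v_{\textup{g}}-6\varepsilon$ holds for all $j$, and the argument above goes through unchanged.
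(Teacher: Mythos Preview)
Your proposal is correct and follows essentially the same approach as the paper's proof: reduce to an $(\varepsilon,0)$-oracle via the diamond-norm bound, use $\mathsf{FindApproxWeakMin}$ to locate $v_{\textup{g}}\in[v_{s_k}-\varepsilon,v_{s_k}+3\varepsilon]$, bound $a\le k/n$, apply $\mathsf{QCount}$ and $\mathsf{AmpSamp}$ with a coupon-collector argument to ensure $S_1\subseteq R$, and then verify the strong $7\varepsilon$-condition by the same two-case split on whether $j\in R\cup S_0$. Your handling of the second case (bounding $\max_{i\in S}v_i\le v_{\textup{g}}+\varepsilon$ by case analysis on whether the maximizer lies in $R$ or in $S_0$) is a slightly cleaner variant of the paper's intermediate claim $\max_{j\in S}v_j\le\max_{j\in S_0}v_j$, and your explicit treatment of the degenerate regime $na=O(1)$ is a detail the paper leaves implicit.
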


\begin{proof}
  We assume that 
  the coordinates of the vector $v$ can be sorted as
  \begin{equation*}
    v_{s_1} \le v_{s_2} \le \cdots 
    \le v_{s_k} \le \cdots \le v_{s_n}.
  \end{equation*}
  In the following, 
  we first consider
  the case when the algorithm uses  
  an $(\varepsilon, 0)$-approximate oracle $V'$ for $v$, 
  and then bound the error probability 
  induced by replacing $V'$ with $V$.
	
  By \cref{theorem:weak-k-min-finding-perfect},
  we know that 
  with probability at least $1-\delta/10$,
  the set $S_0$ that the weak minimum finding algorithm
  $\mathsf{FindApproxWeakMin}(V, k, \varepsilon, \delta/10)$ 
  returns satisfies
  \begin{equation*}
    \max_{i\in S_0} v_i \le v_{s_k} + 2 \varepsilon.
  \end{equation*}
  Therefore, 
  for $v_{\textup{g}} = \max_{i\in S_0} v_i^{\prime}$ with
  $\abs{v_i^{\prime}- v_i} \le \varepsilon$ for $i\in S_0$,
  we have
  $v_{s_k} - \varepsilon
   \le v_{\textup{g}}
   \le v_{s_k} + 3\varepsilon$.
  Therefore, let 
  $S_1 = \set{i\in [n]}{v_i \le v_{\textup{g}} - 6\varepsilon }$.
  It is clear that for $i\in S_1$, we have
  \begin{equation*}
    v_i \le v_{\textup{g}} - 6\varepsilon 
    \le v_{s_k} - 3\varepsilon < v_{s_k}.
  \end{equation*}
  Therefore, by the definition of $v_{s_k}$,
  it must holds that
  $\abs{S_1}< k$. 
  Let $\ket{\psi_{\textup{samp}}}$ be the following
  normalized state
  \begin{equation*}
    \ket{\psi_{\textup{samp}}} 
    = \frac{1}{\sqrt{na}}
    \sum_{i,\nu: \nu \le v_{\textup{g}} - 5\varepsilon} 
    \alpha_{\nu}^{\rbra*{i}}\ket{i}\ket{\nu}
    = \frac{1}{\sqrt{na}} 
    \sum_{i: v_i\le v_\textup{g} - 6\varepsilon}
    \ket{i} \ket{\Lambda_i^{\varepsilon}} 
    + \ket{\psi_{\textup{garbage}}},
  \end{equation*}
  where 
  \begin{equation*}
    a = \sum_{i,\nu: \nu\le v_\textup{g}- 5\varepsilon}
    \frac{\abs*{\alpha_{\nu}^{\rbra*{i}}}^2}{n} 
    \le \frac{k}{n}
  \end{equation*}
  by \cref{prop:amplitude-bounds-for-find-strong-minimum-algo},
  and $\ket{\psi_{\textup{garbage}}}$ is an unnormalized state.
  From \cref{prop:quantum-amplitude-estimation-counting},
  we know with
  probability at least $1-\delta/10$, 
  for
  $\ell 
  \gets \mathsf{QCount}
  \rbra{U, v_{\textup{g}}- 5\varepsilon, \delta/10}$,
  we have
  \begin{equation*}
    na \le \ell \le na + 2.
  \end{equation*}
  By \cref{prop:amplitude-amplification-for-sampling},
  we know with
  probability at least $1-\delta/5n\ell$,
  the state that $\mathsf{AmpSamp}
  \rbra{V, v_{\textup{g}}-5\varepsilon, \delta/5n\ell}$
  returns will be $\ket{\psi_{\textup{samp}}}$.
  Noting that for $i\in S_1$, 
  if we measure $\ket{\psi_{\textup{samp}}}$ in
  computational basis, 
  the probability of obtaining $i$ is $1/na$.
  Therefore,
  applying \cref{lemma:balls-into-bins},
  we know sampling for
  $\Theta\rbra{\ell \log{\ell}} 
  = \Theta \rbra{na \log \rbra{\abs{S_1}}}$ times, 
  with probability at least $2/3$, 
  the sampling result will be a superset
  of $S_1$. 
  Thus, by repeating the above procedure
  $\Theta\rbra{\log \rbra{\delta}}$ times, 
  with probability at least $1-\delta/10$, 
  the union of the sampling result $R$
  will be a superset of $S_1$.

  By the definition of $V'$,
  we know that
  $v_{i} - \varepsilon
  \le \widetilde{v}_{i} 
  \le v_{i}+\varepsilon$ 
  for $i\in R\cup S_{0}$. 
  Now, let S be the index set of $k$ smallest elements of
  $\widetilde v_i$ in $\set{\widetilde v_i}{i\in R\cup S_0}$. 
  We now prove 
  $\max_{j\in S} v_j \le v_i + 7\varepsilon$
  for all $i\in [n] - S$. 
  To show this,
  we consider the following two cases. 
  
  If $i\in \rbra{R\cup S_{0}} - S$, 
  then we know 
  $v_{i}\ge \widetilde{v}_{i} - \varepsilon
  \ge \max_{j\in S} \widetilde{v}_{j} - \varepsilon
  \ge \max_{j\in S} v_j - 2\varepsilon$ 
  by the definition of $S$. 
  
  If $i\in [n]- \rbra{R\cup S_{0}}$, 
  then $v_{i}\ge v_{\textup{g}} - 6\varepsilon$ 
  as $R$ is a superset of $S_1$.
  Notice that for all $j\in S_0$,
  we have 
  $ v_j \le v_j^{\prime} + \varepsilon 
  \le v_{\textup{g}} + \varepsilon$.
  Noting that $R \cup S_0$ is a superset of $S_0$, 
  and the size of $S_0$ is $k$,
  we have
  $\max_{j\in S} v_j \le \max_{j\in S_0} v_j 
  	\le v_{\textup{g}} + \varepsilon $.
  by the definition of $S$.
  Combining the above inequalities,
  we have
  $ \max_{j\in S} v_j
  \le v_{\textup{g}} + \varepsilon
  \le v_i + 7 \varepsilon$ for 
  $i\in [n]- \rbra{R\cup S_{0}}$.
  
  Thus, $S$ is a strong $(k, 7\varepsilon)$-approximate 
  minimum index set for $v$ as we desired.
  
  The success probability of the algorithm 
  when using $V'$ can be bounded as
  $1-3\cdot\delta/10\ge 1-\delta/2$ by union bound. 
  We now consider using $V$
  instead of $V'$ in the algorithm. 
  Similar to the proof
  of \cref{theorem:approx-weak-min-find-appendix},
  we can bound the diamond norm of $V$ and
  $V'$ by
  \begin{equation*}
    \Abs*{V-V'}_{\diamond}\le 4\delta_{0}\sqrt{nd}.
  \end{equation*}
  As the algorithm uses 
  $\widetilde{O} \rbra{nk}$ queries to $V$, the
  measurement result will remain unchanged with probability
  $1-\widetilde{O}\rbra{n\delta_{0}\sqrt{k d}}$
  Therefore, by union bound, 
  we know the success probability of the algorithm is at least
  $1-\delta -\widetilde{O}\rbra{n\delta_{0}\sqrt{kd}}$ as we want.
\end{proof}

\section{Proofs for Theorems in the Application Section}%
\label{sec:proof-for-application}

\subsection{Proofs of the Approximate \texorpdfstring{$\bm{k}$}{k}-Minimum Expectations}%
\label{appendix:proof-of-k-min-exp}

To estimate the expectation values, 
we begin by reviewing the square root amplitude estimation
proposed in~\cite{Wan24,dW23}.
Here, we employ an error-reduced version of 
the square root amplitude estimation formulated in~\cite{Wan24},
which adopts a similar approach to amplitude estimation 
but with a more detailed analysis of the approximation error.
Furthermore, we 
use the median trick~\cite{NWZ09} 
to reduce the error probability.

\begin{theorem}[Square Root Amplitude Estimation, adapted from~{\cite[Theorem
    III.4]{Wan24}}]%
\label{thm:sqrt-amp-est}
    Suppose $U$ is a unitary satisfying
    \[
        U\ket{0} = \sqrt{a}\ket{0}\ket{\psi_0} + 
        \sqrt{1-a}\ket{1}\ket{\psi_1},
    \]
    for some $a\in \interval{0}{1}$, 
    normalized states $\ket{\psi_0}$ and $\ket{\psi_1}$.
    Then,
    for any $\varepsilon > 0$ 
    and $\delta\in \interval[open]{0}{1/3}$,
    there is a quantum algorithm 
    $\mathsf{SqrtAmpEst}\rbra{U, \varepsilon, \delta}$,
    such that
    \[
        \mathsf{SqrtAmpEst}\rbra{U, \varepsilon, \delta}\ket{0}
         = \sqrt{p}\ket{\Lambda}\ket{\phi} 
         + \sqrt{1-p} \ket{\Lambda^{\perp}}\ket{\phi},
    \]
    for some $p \ge 1-\delta$,
    $\ket{\Lambda} 
    = \sum_{\nu: \abs{\nu - \sqrt{a}}\le \varepsilon} 
    \alpha_{\nu} \ket{\nu}$ being a normalized state,
    and $\ket{\phi}$ being an
    ancillary normalized state,
    using $\widetilde{O}\rbra{1/\varepsilon}$
    queries to $U$.
\end{theorem}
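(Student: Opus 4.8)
The plan is to realise $\mathsf{SqrtAmpEst}$ as phase estimation of the amplitude-amplification operator associated to $U$, read off the sine of the estimated angle, decouple the workspace by running the estimation circuit backwards, and finally amplify the success probability with a coherent median, following the refined error analysis of~\cite{Wan24}.

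First I would set up the standard two-dimensional picture: let $S_0$ be the reflection about $\ket{0}$, let $S_\chi$ be the reflection that negates exactly those states whose flag qubit is $\ket{0}$, and put $Q = -U S_0 U^\dagger S_\chi$. Writing $a = \sin^2\theta$ with $\theta \in \interval{0}{\pi/2}$, the state $U\ket{0}$ lies in a two-dimensional $Q$-invariant subspace on which $Q$ acts with eigenphases $\pm 2\theta$, and $U\ket{0}$ has equal-magnitude ($1/\sqrt{2}$) overlap with the two eigenvectors $\ket{w_+}$ and $\ket{w_-}$. Running textbook phase estimation on $Q$ with $N = \Theta\rbra{1/\varepsilon}$ controlled applications of $Q$ --- each of which costs $O(1)$ calls to $U$ and $U^\dagger$ --- then writes into a fresh register $\ket{x}$ a superposition of grid estimates of $2\theta$ on the $\ket{w_+}$ branch and of $2\pi - 2\theta$ on the $\ket{w_-}$ branch, carrying total weight at least $8/\pi^2$ on grid points within $O(\varepsilon)$ of the relevant true phase.

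Next I would reversibly compute $\nu = \sin\rbra{\pi x / N}$ from $\ket{x}$ into the output register $\ket{\nu}$. Since $\sin$ is $1$-Lipschitz and $\sin\theta = \sin\rbra{\pi-\theta} = \sqrt{a}$, both branches deposit into $\ket{\nu}$ a superposition carrying weight at least $8/\pi^2$ on values within $O(\varepsilon)$ of $\sqrt{a}$; after rescaling $\varepsilon$ by the relevant constant this becomes $\varepsilon$. To reach the clean output form I would then apply the inverse of the phase-estimation circuit to $\ket{x}$ --- legitimate now that a usable copy of the information already sits in $\ket{\nu}$ --- which returns the workspace register to a state independent of $x$ up to an error absorbed into the failure probability. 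At this point the state has the shape $\sqrt{q}\ket{\Lambda}\ket{\phi} + \sqrt{1-q}\ket{\Lambda^\perp}\ket{\phi}$ with $q \ge 8/\pi^2$ and $\ket{\Lambda}$ supported on values $\varepsilon$-close to $\sqrt{a}$.

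Finally, to push the failure probability down to $\delta$, I would run $m = \Theta\rbra{\log(1/\delta)}$ mutually independent coherent copies of the above procedure, compute their median into a new register with a reversible comparator network, and uncompute the $m$ individual estimates; a Chernoff bound over the $m$ copies shows the median is $\varepsilon$-close to $\sqrt{a}$ except with probability at most $\delta$, which yields exactly the stated $\sqrt{p}\ket{\Lambda}\ket{\phi} + \sqrt{1-p}\ket{\Lambda^\perp}\ket{\phi}$ with $p \ge 1-\delta$, at total cost $m \cdot N = \widetilde{O}\rbra{1/\varepsilon}$ queries to $U$. The main obstacle is not the angle estimation itself but making the ``decouple the workspace'' step quantitative: one has to check that, after reversing the phase-estimation circuit, the residual ancilla $\ket{\phi}$ genuinely is common to the good and bad branches (the only branch-dependent leftover being the $\pm$ eigenvector of $Q$, which merges into the fixed ``good'' component precisely because the $\sin$ map makes $\ket{\nu}$ essentially the same on both branches), and to track how the phase-estimation error, the rounding in the $\sin$ map, and the median combine into the single precision $\varepsilon$ and failure parameter $\delta$ of the statement --- this is exactly where the detailed analysis of~\cite{Wan24}, rather than the black-box amplitude-estimation statement of \cref{theorem:amplitude-estimation}, is needed.
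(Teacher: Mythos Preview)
The paper does not prove this theorem; it is quoted from~\cite{Wan24}, with the surrounding text saying only that the result ``adopts a similar approach to amplitude estimation but with a more detailed analysis of the approximation error'' and uses ``the median trick~\cite{NWZ09} to reduce the error probability.'' Your sketch --- phase estimation on the Grover rotation $Q$, a reversible $x \mapsto \sin(\pi x/N)$ to read off $\sqrt{a}$ rather than $a$, then a coherent median over $\Theta(\log(1/\delta))$ copies --- is exactly that approach and is correct.

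One comment on the step you flag as the main obstacle. The literal tensor-product form in the statement, with the \emph{same} ancilla $\ket{\phi}$ on both the $\ket{\Lambda}$ and $\ket{\Lambda^\perp}$ branches, is stronger than what ``running phase estimation backwards'' actually delivers: after $\ket{\nu}$ has been copied out, the $\ket{x}$ register is still correlated with $\ket{\nu}$ and with the $\ket{w_\pm}$ eigenvector, so the inverse circuit does not fully disentangle it. The paper (here and in \cref{theorem:amplitude-estimation}) is being mildly informal on this point; for every downstream use --- in particular constructing the approximate oracle in \cref{prop:coherent-est-oracle-block-encoded-observable} --- all that is actually needed is that a computational-basis measurement of the value register yields some $\nu$ with $\abs{\nu-\sqrt{a}}\le\varepsilon$ with probability at least $1-\delta$, and your argument already establishes that without any uncomputation.
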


We now recall the following 
algorithm
for estimating the expectation value 
of block-encoded matrices proposed in~\cite{vAG19a, Pat20, GP22}.

\begin{theorem}[Expectation Estimation of Block-Encoded Observables, adapted
  from~{\cite[Lemma 5]{Pat20}}]%
  \label{thm:expectation-est}
    Suppose $O$ is a positive semi-definite
    Hermitian operator
    with $U_{O}$ 
    being its $\alpha$-block-encoding,
    and $\rho$ is a density matrix
    with a purified access $U_{\rho}$.
    Then, 
    for any $\varepsilon > 0 $,
    there is a quantum algorithm
    $\mathsf{ExpectationEst}\rbra{U_{A}, U_{\rho}, \varepsilon, \delta}$ that
    satisfies
    \[
        \mathsf{ExpectationEst}
        \rbra{U_{O}, U_{\rho}, \varepsilon, \delta} \ket{0}
        = \sqrt{p} \ket{\Lambda} \ket{\phi} + 
        \sqrt{1-p}  \ket{\Lambda^{\perp}} \ket{\phi'},
    \]
    with $p\ge 1-\delta$,
    $c = \tr \rbra{O\rho} \ge 0$,
    $\ket{\Lambda} =
    \sum_{\nu: \abs{\nu - c}\le \varepsilon} \alpha_{\nu}\ket{\nu} $
    being a normalized state,
    and $\ket{\phi}$ being an ancillary normalized state,
    using $\widetilde{O}\rbra{a/\varepsilon}$
    queries to $U_O$ and $U_{\rho}$.
\end{theorem}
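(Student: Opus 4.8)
The plan is to reduce the task to square-root amplitude estimation (\cref{thm:sqrt-amp-est}), following the expectation-estimation recipe of~\cite{vAG19a,Pat20,GP22} but routing the estimation step through the error-reduced square-root variant so that the output matches the shape required by an approximate oracle. First I would assemble, from $U_\rho$ and $U_O$, a single unitary whose $\ket{0}$-amplitude on a flagged subspace equals $\tr\rbra{O\rho}/\alpha$. Write the purifying registers of $\rho$ as $BC$ (so $U_\rho\ket{0_B 0_C} = \ket{\rho}_{BC}$ with $\tr_C\ket{\rho}\bra{\rho} = \rho$) and the block-encoding ancilla of $U_O$ as the $a$-qubit register $E$ (so $U_O$ acts on $EB$). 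Set $\widetilde W = \rbra*{I_E\otimes U_\rho^\dagger}\rbra*{U_O\otimes I_C}\rbra*{I_E\otimes U_\rho}$ on $EBC$. Uncomputing the purification with $U_\rho^\dagger$ converts the action of the block-encoding into an overlap, and the definition of an $\alpha$-block-encoding gives $\bra{0_E}U_O\ket{0_E} = O/\alpha$, so
\[
  \bra{0_{EBC}}\widetilde W\ket{0_{EBC}} = \bra{\rho}\rbra*{\tfrac{O}{\alpha}\otimes I_C}\ket{\rho} = \frac{\tr\rbra{O\rho}}{\alpha} =: q .
\]
Hence $\widetilde W\ket{0_{EBC}} = q\ket{0_{EBC}} + \sqrt{1-q^2}\ket{G}$ with $\ket{G}\perp\ket{0_{EBC}}$; moreover $q\ge 0$ since $O\succeq 0$, and $q\le 1$ since $\Abs{O}\le\alpha$ and $\tr\rho = 1$, so $q\in\interval{0}{1}$.

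Next I would append one flag qubit $F$ and a multi-controlled-$X$ that sets $F$ to $0$ exactly on $\ket{0_{EBC}}$, producing a unitary $W$ with $W\ket{0} = q\ket{0}_F\ket{0_{EBC}} + \sqrt{1-q^2}\ket{1}_F\ket{G}$. This is precisely the input form of \cref{thm:sqrt-amp-est} with success amplitude $\sqrt{a}=q$, and $W$ costs only $O(1)$ queries to $U_\rho$, $U_\rho^\dagger$ and $U_O$. Running $\mathsf{SqrtAmpEst}\rbra{W,\varepsilon/\alpha,\delta}$ then outputs $\sqrt{p}\ket{\Lambda_0}\ket{\phi} + \sqrt{1-p}\ket{\Lambda_0^{\perp}}\ket{\phi}$ with $p\ge 1-\delta$ and $\ket{\Lambda_0} = \sum_{\nu\colon\abs{\nu-q}\le\varepsilon/\alpha}\alpha_\nu\ket{\nu}$ (using $\sqrt{q^2}=q$ since $q\ge 0$), using $\widetilde O\rbra{\alpha/\varepsilon}$ queries to $W$ and hence to $U_O$ and $U_\rho$. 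Finally, applying the reversible rescaling $\ket{\nu}\mapsto\ket{\alpha\nu}$ on the estimate register (embedded into a slightly larger register of ancilla zeros) turns the support condition into $\abs{\mu-\alpha q} = \abs{\mu-\tr\rbra{O\rho}}\le\varepsilon$, which is exactly the claimed output, with $c = \tr\rbra{O\rho} = \alpha q\ge 0$; since the rescaling touches only the estimate register, both ancilla states remain $\ket{\phi}$, a special case of the stated form. The median-trick error reduction of~\cite{NWZ09} is already built into \cref{thm:sqrt-amp-est}, so no further boosting is needed.

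The one point that actually matters---and the reason \cref{thm:sqrt-amp-est} is used in place of ordinary amplitude estimation---is the query count: ordinary amplitude estimation on $W$ would estimate $q^2$ to additive error $\Theta\rbra{\varepsilon/\alpha}$ using $\widetilde O\rbra{\alpha/\rbra{\varepsilon\sqrt{q}}}$ queries, which degrades when $\tr\rbra{O\rho}$ is small, whereas square-root amplitude estimation estimates $q$ itself to the same additive error in $\widetilde O\rbra{\alpha/\varepsilon}$ queries uniformly in $q$. The remaining work is routine: bookkeeping of the registers in $\widetilde W$, and the minor observation that if $U_O$ is only an $(\alpha,a,\varepsilon_0)$-block-encoding then $\bra{0_E}U_O\ket{0_E}$ differs from $O/\alpha$ by an operator of norm at most $\varepsilon_0/\alpha$, perturbing $q$ by at most $\varepsilon_0/\alpha$ and the final estimate by at most $\varepsilon_0$; taking $\varepsilon_0=0$ (or $\varepsilon_0\le\varepsilon$, as the ``$\alpha$-block-encoding'' shorthand permits) absorbs this into the stated bound.
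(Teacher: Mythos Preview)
Your proof is correct and follows essentially the same route as the paper: build $U=(I\otimes U_\rho^\dagger)(U_O\otimes I)(I\otimes U_\rho)$ so that $\bra{0}U\ket{0}=\tr(O\rho)/\alpha$, then feed it to $\mathsf{SqrtAmpEst}$. You are simply more explicit than the paper about the flag qubit, the $\alpha$-rescaling, and the reason for preferring square-root amplitude estimation over the ordinary variant.
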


\begin{proof}
    For clarity, we write
    \[
        U_{\rho} \ket{0}_B\ket{0}_C =\ket{\rho}, \text{with } \tr_C \rbra{\ket{\rho}\bra{\rho}} = \rho.
    \]
    
    Note that for the unitary $U_O$ on the register of $A,B$,
    we have
    \[
        \bra{0}_A \bra{\rho}_{BC} \rbra*{U_{O}\otimes I}  
        \ket{0}_A \ket{\rho}_{BC} 
         = \bra{\rho}_{BC} \rbra*{O\otimes I} \ket{\rho}_{BC} 
         = \tr \rbra*{O\rho}.
    \]
    Therefore, for the unitary
    $U = \rbra{I\otimes U_{\rho}^{\dagger}}\rbra{U_O \otimes I}\rbra{I\otimes U_{\rho}}$,
    we have
    $
      \bra{0}_{ABC} U \ket{0}_{ABC} = \tr \rbra*{O\rho}
    $.
    Thus, by \cref{thm:sqrt-amp-est},
    we know $\mathsf{SqrtAmpEst}\rbra{U, \varepsilon, \delta}$
    will return the state we need.
\end{proof}

\begin{remark}
For convenience, 
we assume that the observable $O$ is positive semi-definite. 
This assumption is not restrictive,
since one can always shift an operator 
to be positive semi-definite 
by utilizing the linear combination of unitaries technique
as in~\cite{vAG19a}.
For the general case
when $O$ is non-Hermitian,
one can employ the Hadamard test 
for estimating the expectation value of block-encoded matrices
proposed in Lemma 9 of~\cite{GP22},
along with amplitude estimation to obtain similar results.
\end{remark}

\begin{proposition} [Coherent Estimation of Block-encoded Observables]%
\label{prop:coherent-est-oracle-block-encoded-observable}
    For a positive integer $n$, 
    let $\cbra{\rho_i}_{i = 1}^n$ 
    be a set of $s$-qubit density matrices,
    and $\cbra{O_i}_{i = 1}^n$ be
    a set of $s$-qubit positive semi-definite
    observables
    with $\Abs{O_i} \le 1$ for $i\in [n]$.
    Given the unitary $U_{\rho}$ 
    being the purified access to $\rho_i$,
    i.e.,
    \[
        U_{\rho}\ket{i}_{A} \ket{0}_{B}\ket{0}_{C} 
        = \ket{i}_{A} \ket{\rho_i}_{BC}, 
        \text{such that } 
        \tr_C \rbra*{\ket{\rho_i}\bra{\rho_i}} = \rho_i,
    \]
    and a unitary
    \[
        V = \sum_{j\in [n]} \ket{j}\bra{j} \otimes V_j,
    \]
    where
    $V_j$ is an $1$-block-encoding  
    of the observable $O_j$ for each $j\in [n]$.
    Then, for any $\varepsilon > 0$
    and $\delta\in \interval[open]{0}{1/3}$,
    there is a quantum algorithm
    $\mathsf{CoherentExpectationEst}
    \rbra{U_{\rho}, V, \varepsilon, \delta}$ 
    which is an $(\varepsilon, \delta)$-approximate oracle
    for the vector $v$ 
    with coordinates $v_i = \tr \rbra{O_i\rho_i}$ for $i\in[n]$,
    using $\widetilde{O}\rbra{1/\varepsilon}$ queries to $U$ and $V$.
\end{proposition}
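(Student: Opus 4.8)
The plan is to run the single-value expectation-estimation procedure of \cref{thm:expectation-est} \emph{coherently} over the index register $A$, exploiting that both $U_\rho$ and $V$ are already supplied in an index-controlled form (we may assume $\varepsilon < 1/2$, otherwise run the construction with precision $1/4$). The key algebraic fact is that the unitary $W := \rbra*{I\otimes U_\rho^{\dagger}}\rbra*{V\otimes I_C}\rbra*{I\otimes U_\rho}$ is block-diagonal in the computational basis of $A$: since $U_\rho\ket{i}_A\ket{0}_{BC} = \ket{i}_A\ket{\rho_i}_{BC}$ and $V = \sum_{j\in[n]}\ket{j}\bra{j}_A\otimes V_j$, the operator $W$ commutes with every $\ket{i}\bra{i}_A\otimes I$, and its $i$-th block is $U_{\rho_i}^{\dagger}\rbra*{V_i\otimes I_C}U_{\rho_i}$. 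As in the proof of \cref{thm:expectation-est}, $\bra{0}_{BC}\,U_{\rho_i}^{\dagger}\rbra*{V_i\otimes I_C}U_{\rho_i}\,\ket{0}_{BC} = \tr\rbra*{O_i\rho_i} = v_i$, and $v_i\in\interval{0}{1}$ because $O_i$ is positive semi-definite with $\Abs{O_i}\le 1$. Hence any circuit built from $W$, $W^{\dagger}$, their controlled versions, and gates not touching register $A$ is again block-diagonal in $A$; in particular this holds for the $\mathsf{SqrtAmpEst}$ circuit (\cref{thm:sqrt-amp-est}) applied to $W$, and therefore for $\mathsf{ExpectationEst}$.

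I would then define $\mathsf{CoherentExpectationEst}\rbra{U_\rho, V, \varepsilon, \delta}$ to be \emph{literally the same circuit} as $\mathsf{ExpectationEst}$ run with $V$ in the role of the block-encoding and $U_\rho$ in the role of the purifier, except that the final write-out of an estimate $\nu$ into a fresh register is replaced by the reversible addition $\ket{j}\ket{\nu}\mapsto\ket{j+\nu}\ket{\nu}$, costing $\widetilde O(1)$ additional gates. By the block-diagonality above, on input $\ket{i}_A\ket{j}\ket{0}$ this circuit acts as $\ket{i}_A$ tensored with the output of the single-index instance $\mathsf{ExpectationEst}\rbra{V_i, U_{\rho_i}, \varepsilon, \delta}$ (with its estimate shifted by $j$); invoking \cref{thm:expectation-est} with $\alpha = 1$ gives a state of the form
\[
  \ket{i}_A\otimes\rbra*{\sqrt{p_i}\,\ket{\Lambda_i^{\varepsilon}+j}\ket{\phi_i} + \sqrt{1-p_i}\,\ket{\Lambda_i^{\varepsilon\perp}+j}\ket{\phi_i'}},
\]
with $p_i\ge 1-\delta$ and $\ket{\Lambda_i^{\varepsilon}+j} = \sum_{\nu:\abs{\nu-v_i}\le\varepsilon}\alpha_{\nu}^{(i)}\ket{\nu+j}$. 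Absorbing the ancillary $\ket{\phi_i},\ket{\phi_i'}$ into the orthogonal decomposition of the value register, this is exactly the defining relation of an $(\varepsilon,\delta)$-approximate oracle for $v$ (\cref{def:approx-oracle}, using the footnote there that permits extra ancillary output states).

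The query count then follows immediately: \cref{thm:expectation-est} uses $\widetilde O\rbra*{\alpha/\varepsilon} = \widetilde O\rbra*{1/\varepsilon}$ queries to $V$ and to $U_\rho$, and making the circuit index-controlled does not change this. I expect the only genuinely delicate point to be the reduction to the single-index subroutine: one must verify that $\mathsf{SqrtAmpEst}$ applied to $W$ acts ``slice-wise'' on the $\ket{i}$-branches with the \emph{same} precision and success probability for every $i$ simultaneously. This is intuitively immediate from block-diagonality --- $\mathsf{SqrtAmpEst}$ is a fixed unitary circuit that calls $W$, $W^{\dagger}$ and their controlled versions as black boxes and performs no $A$-dependent operation otherwise --- but it should be spelled out, e.g.\ by observing that $\mathsf{SqrtAmpEst}(W,\varepsilon,\delta)$ commutes with $\ket{i}\bra{i}_A\otimes I$ and that on each such block it reduces to $\mathsf{SqrtAmpEst}$ of the $i$-th block of $W$.
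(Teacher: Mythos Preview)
Your proposal is correct and takes essentially the same approach as the paper: both run $\mathsf{SqrtAmpEst}$ coherently over the index register by exploiting that $W$ is block-diagonal in $A$, so that the circuit acts slice-wise as $\mathsf{SqrtAmpEst}$ of the $i$-th block. The only presentational difference is that the paper realizes your ``delicate point'' explicitly rather than by an abstract block-diagonality argument --- it introduces a \emph{separate} control register, decomposes $\mathsf{SqrtAmpEst}(W)=\mathcal{U}_t\mathcal{V}_{t-1}\cdots\mathcal{U}_1$ into fixed gates $\mathcal{U}_i$ and oracle calls $\mathcal{V}_i\in\{W,W^\dagger\}$, and sandwiches each $\mathcal{V}_i$ between controlled-$\mathsf{XOR}$ gates, obtaining $\sum_j\ket{j}\bra{j}\otimes\mathsf{SqrtAmpEst}(U_j,\varepsilon,\delta)$ by direct computation.
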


\begin{proof}
    In the following, 
    let $\mathsf{XOR}$ be the unitary matrix that satisfies
    $\mathsf{XOR}\ket{i}\ket{j} = \ket{i}\ket{i\oplus j}$,
    and $\mathsf{XOR}_j$ be the unitary matrix satisfying
    $\mathsf{XOR}_j \ket{i} = \ket{i \oplus j}$.
    Given the assumption that
    $V = \sum_{j\in [n]} \ket{j}\bra{j} \otimes V_j $ 
    and the definition of block-encoding matrices,
    it follows that
    \[
        \bra{j}\bra{0} V \ket{j}\ket{0} = O_j.
    \]
    Then, for the unitary $V_j$ acting on the register $A, B$, 
    we have
    \[
        \bra{j}_A \bra{0}_B \bra{\rho_j}_{CD}  
        \rbra{I \otimes V \otimes I} 
        \ket{j}_A  \ket{0}_B \ket{\rho_j}_{CD} 
        = \bra{\rho_j}_{CD} \rbra{O_j\otimes I} \ket{\rho_j}_{CD}
        = \tr \rbra{O_j \rho_j}.
    \]
    Therefore, for the unitary 
    \[
    U_j = 
    \rbra{\mathsf{XOR}_j 
    \otimes I \otimes U_{\rho}^{\dagger}}
    \rbra{I \otimes V \otimes I}
    \rbra{\mathsf{XOR}_j 
    \otimes I \otimes U_{\rho}\otimes I}
     = 
    \rbra{ \mathsf{XOR}_j 
    \otimes I}
    \rbra{W}
    \rbra{ \mathsf{XOR}_j \otimes I},
    \]
    where $W = \rbra{I\otimes U_{\rho}^{\dagger}}
    \rbra{I\otimes  V \otimes I}
    \rbra{I \otimes U_{\rho}}$,
    we have
    $\bra{0}_{ABCD} U_j \ket{0}_{ABCD} = \tr \rbra*{O_i \rho_i}$.
    Now, consider the circuit 
    $\mathsf{SqrtAmpEst}(W, \varepsilon, \delta)$, we
    know that it can be written as
    \[
    \mathsf{SqrtAmpEst}(W, \varepsilon, \delta) = 
    \mathcal{U}_{t} \mathcal{V}_{t-1} U_{t-1} 
    \cdots \mathcal{U}_{2} \mathcal{V}_{1} \mathcal{U}_{1}
    \]
    for some $t = \widetilde{O}(1/\varepsilon)$, where
    $\mathcal{V}_{i}\in \{W, W^{\dagger}\}$ for $i\in [t-1]$ 
    and $\mathcal{U}_i$ is a quantum circuit of elementary gates. 
    Let
    $\mathsf{CoherentExpectationEst}(U, V, \varepsilon, \delta)$ 
    be the following circuit
    \[
    \rbra*{I\otimes \mathcal{U}_{t}} 
    \rbra*{\mathsf{XOR}\otimes I} 
    \rbra*{I\otimes \mathcal{V}_{t-1}} 
    \rbra*{\mathsf{XOR} \otimes I} 
    \rbra*{I\otimes \mathcal{U}_{t-1}} 
    \cdots 
    \rbra*{I\otimes \mathcal{V}_1} 
    \rbra*{\mathsf{XOR}\otimes I } 
    \rbra*{I\otimes \mathcal{U}_{1}}.
    \]
    Using the equation
    \[
        \mathsf{XOR} = \sum_j \ket{j}\bra{j}\otimes \mathsf{XOR}_j,
    \]
    we could write
    \[
    \begin{aligned}
        \mathsf{CoherentExpectationEst}
        (U_{\rho}, V, \varepsilon, \delta)
        &= \sum_j \ket{j}\bra{j} \otimes 
        \mathsf{SqrtAmpEst}
        \rbra*{\mathsf{XOR}_j \cdot W \cdot \mathsf{XOR}_j, 
        \varepsilon, \delta}, \\
        &= \sum_j \ket{j}\bra{j} \otimes 
        \mathsf{SqrtAmpEst}
        \rbra*{U_j, 
        \varepsilon, \delta},
    \end{aligned}
    \]
    for $W$ and $U_j$ defined above.
    Then, by \cref{thm:sqrt-amp-est}
    and the definition of approximate oracle,
    we know $\mathsf{CoherentExpectationEst}
    (U_{\rho}, V, \varepsilon, \delta)$
    is an $(\varepsilon, \delta)$-approximate oracle
    for the vector $v$ 
    with coordinates $v_i = \tr \rbra{O_i\rho_i}$ for $i\in[n]$.
\end{proof}

Thus, using this $\mathsf{CoherentExpectationEst}$ as an input of
\cref{algo:approx-weak-min-find} and \cref{algo:approx-find-strong-min}, we have
the following theorem about solving the approximate $k$ minima in multiple
expectations problem.

\begin{theorem}%
  \label{thm:approx-k-minima-in-multiple-expectation-appendix}
  For a positive integer $n$, let $\cbra{\rho_i}_{i = 1}^n$ be a set of
  $s$-qubit density matrices with purified access $U_{\rho}$, and
  $\cbra{O_i}_{i = 1}^n$ be a set of $s$-qubit positive semi-definite
  observables with block-encoding access $V$ where $\Abs{O_i} \le 1$ for
  $i\in [n]$.
  Then, for $k\in [n]$, there is a quantum algorithm
  $\mathsf{FindApproxExpectationMin} \rbra{U_{\rho}, V, k, \varepsilon, \delta}$
  that with probability at least $1-\delta$, finds a strong
  $(k, \varepsilon)$-approximate minimum index set for the vector $v$ with
  coordinates $v_i = \tr \rbra{O_i \rho_i}$, using
  $\widetilde{O}\rbra{\sqrt{nk}/\varepsilon}$ queries to $U_{\rho}$ and $V$.
\end{theorem}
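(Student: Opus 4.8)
The plan is to obtain the result by composing two modules already established in the excerpt: the coherent expectation‑estimation subroutine (\cref{prop:coherent-est-oracle-block-encoded-observable}), which turns block‑encoding plus purified‑state access into an approximate oracle for $v$, and the strong approximate minimum index set finding algorithm (\cref{thm:approx-find-strong-min-appendix}), which consumes such an oracle. First I would check that the target vector satisfies the hypotheses needed downstream: since each $O_i$ is positive semi-definite with $\Abs{O_i}\le 1$ and each $\rho_i$ is a density matrix, $v_i=\tr(O_i\rho_i)\in\interval{0}{1}$, so $v\in\mathbb{R}^n_{\ge 0}$ with all coordinates in $\interval{0}{1}$, matching \cref{def:approx-oracle} and the input requirement of $\mathsf{FindApproxStrongMin}$.

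Next I would fix the internal precision to $\varepsilon' = \varepsilon/7$ and invoke \cref{prop:coherent-est-oracle-block-encoded-observable} to build the unitary $W = \mathsf{CoherentExpectationEst}(U_\rho, V, \varepsilon', \delta_0)$, which is an $(\varepsilon', \delta_0)$-approximate oracle for $v$ and costs $\widetilde{O}(1/\varepsilon') = \widetilde{O}(1/\varepsilon)$ queries to $U_\rho$ and $V$ (the $\widetilde{O}$ absorbing a $\log(1/\delta_0)$ factor coming from the median trick already folded into \cref{thm:sqrt-amp-est}). Let $d$ denote the dimension of the value register produced by $W$; it is at most exponential in the qubit count of the $\mathsf{SqrtAmpEst}$ circuit, so $\log d = \widetilde{O}(1)$.

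Then I would run $\mathsf{FindApproxStrongMin}(W, k, \varepsilon', \delta/2)$. Choosing $\delta_0$ small enough that $\widetilde{O}(nd\delta_0\sqrt{kd})\le\delta/2$ — which, because $\log d = \widetilde{O}(1)$, requires only $\log(1/\delta_0)=\widetilde{O}(1)$ and hence does not change the query count up to polylogarithmic factors — \cref{thm:approx-find-strong-min-appendix} guarantees that with probability at least $1-\delta/2-\widetilde{O}(nd\delta_0\sqrt{kd})\ge 1-\delta$ the output $S$ is a strong $(k,7\varepsilon') = (k,\varepsilon)$-approximate minimum index set for $v$, using $\widetilde{O}(\sqrt{nk})$ queries to $W$. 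Multiplying the two bounds yields $\widetilde{O}(\sqrt{nk})\cdot\widetilde{O}(1/\varepsilon)=\widetilde{O}(\sqrt{nk}/\varepsilon)$ queries to $U_\rho$ and $V$, as claimed.

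The step I expect to need the most care is the bookkeeping around $\delta_0$ and the oracle dimension $d$: one must confirm that $W$'s output dimension is at most $2^{\poly(\log n,\,\log(1/\varepsilon),\,s)}$ so that pushing $\delta_0$ down to $\delta/\poly(n,d,k)$ is affordable, and that the failure contribution $\widetilde{O}(nd\delta_0\sqrt{kd})$ from \cref{thm:approx-find-strong-min-appendix} can indeed be made at most $\delta/2$ this way. Everything else is a direct chaining of the cited theorems, with the only arithmetic being the choice $\varepsilon'=\varepsilon/7$ to turn the $7\varepsilon'$ guarantee of the strong‑set algorithm into the desired $\varepsilon$.
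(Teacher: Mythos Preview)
Your proposal is correct and follows essentially the same approach as the paper: build the approximate oracle via $\mathsf{CoherentExpectationEst}$ with precision $\varepsilon/7$ and a sufficiently small failure parameter, then feed it into $\mathsf{FindApproxStrongMin}$ with failure budget $\delta/2$. The paper is terser---it simply takes $\delta_0 = O(\delta/n^3\sqrt{k})$ and asserts the oracle dimension is $\widetilde{O}(n)$---whereas you spell out the $v_i\in[0,1]$ check and the bookkeeping on $d$ more carefully, but the structure and key choices (in particular $\varepsilon'=\varepsilon/7$) are identical.
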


\begin{proof}
    The algorithm $\mathsf{FindApproxExpectationMin}$ is just
    $\mathsf{FindApproxStrongMin}
    \rbra{U_0, k, \varepsilon/7, \delta/2}$
    with $U_0$ being 
    \[
    \mathsf{CoherenetExpectationEst}
    \rbra{U, V, \varepsilon/7, O\rbra{\delta/n^3\sqrt{k}}},
    \]
    By \cref{prop:coherent-est-oracle-block-encoded-observable},
    we know
    $U_0$
    is a $(\varepsilon/7, O\rbra{\delta/n^3\sqrt{k}})$-approximate oracle for $v$ of dimension $\widetilde{O}(n)$,
    which uses $\widetilde{O}(1/\varepsilon)$ 
    queries to $U$ and $V$.
    Therefore,
    by \cref{thm:approx-find-strong-min},
    we know with probability at least
    $1-\delta$,
    $\mathsf{FindApproxStrongMin}
    \rbra{U_0, k, \varepsilon/7, \delta/2}$
    will return a strong $(k, \varepsilon)$-approximate
    minimum index set for $v$,
    using $\widetilde{O}(\sqrt{nk}/\varepsilon)$ queries to
    to $U$ and $V$.
\end{proof}

\subsection{Proofs of the Approximate \texorpdfstring{$\bm{k}$}{k}-Ground Energy Problem}%
\label{appendix:proof-of-k-ground-state}

We begin by recalling the Hamiltonian simulation algorithm 
proposed in~\cite{LC17, LC19}. 
In this context, we use the formulation in~\cite{GSLW19}.

\begin{theorem}[Complexity of Block-Hamiltonian Simulation, {\cite[Corollary 32]{GSLW19}}]
    Let $H$ be a Hamiltonian 
    with an $(\alpha, a, 0)$-block-encoding U.
    For $\varepsilon \in \interval[open]{0}{1/2}$
    and $t \in \mathbb{R}$, 
    there is a quantum unitary algorithm 
    $\mathsf{HamiltonianSimulate}
    \rbra{U, t, \varepsilon}$ that implements 
    an $(\alpha, a+2, \varepsilon)$-block-encoding
    of $\exp \rbra{\mathrm{i}tH}$, 
    using $\widetilde{O}\rbra{\alpha \abs{t}}$ queries to $U$.
\end{theorem}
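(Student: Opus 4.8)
The plan is to reduce block-Hamiltonian simulation to a polynomial (singular value) transformation of the block-encoded operator, following the quantum singular value transformation (QSVT) framework of~\cite{GSLW19}. Since $U$ is an $(\alpha, a, 0)$-block-encoding of $H$, it is a $(1, a, 0)$-block-encoding of the contraction $A := H/\alpha$, which satisfies $\Abs{A} \le 1$; and $\exp\rbra{\mathrm{i}tH} = \exp\rbra{\mathrm{i}\tau A}$ for $\tau := \alpha t$. So it suffices to implement, to error $\varepsilon$, the operator obtained by applying the scalar map $x \mapsto e^{\mathrm{i}\tau x}$ on $[-1,1]$ to $A$ through its block-encoding $U$, using $\widetilde O(\alpha\abs{t})$ calls.

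The key step is a polynomial approximation via the Jacobi--Anger expansion,
\[
  \cos(\tau x) = J_0(\tau) + 2\sum_{k\ge 1}(-1)^k J_{2k}(\tau)\,T_{2k}(x), \qquad
  \sin(\tau x) = 2\sum_{k\ge 0}(-1)^k J_{2k+1}(\tau)\,T_{2k+1}(x),
\]
where $J_m$ is the Bessel function of the first kind and $T_m$ the degree-$m$ Chebyshev polynomial. Using $\abs{J_m(\tau)} \le (\abs{\tau}/2)^m/m!$, the tails beyond degree $r = \Theta\rbra*{\abs{\tau} + \frac{\log(1/\varepsilon)}{\log\log(1/\varepsilon)}}$ contribute at most $\varepsilon/4$, so truncation produces a real even polynomial $P$ and a real odd polynomial $Q$, both of degree at most $r = \widetilde O(\alpha\abs{t})$, with $\abs{P(x) - \cos(\tau x)} \le \varepsilon/4$ and $\abs{Q(x) - \sin(\tau x)} \le \varepsilon/4$ on $[-1,1]$; after scaling by $1 - O(\varepsilon)$ we may also assume $\abs{P(x)} \le 1$ and $\abs{Q(x)} \le 1$ there. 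Obtaining degree linear in $\abs{\tau}$ while keeping the truncated polynomials uniformly bounded is the main technical obstacle.

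Finally I would invoke QSVT: for a real degree-$r$ polynomial of definite parity bounded by $1$ on $[-1,1]$ and the $(1, a, 0)$-block-encoding $U$ of $A$, one constructs a $(1, a+1, 0)$-block-encoding of that polynomial evaluated at $A$, using $r$ applications of $U$ and $U^{\dagger}$ and one extra ancilla for the signal-processing rotations. Applying this to $P$ and to $Q$ and combining $P(A) + \mathrm{i}\,Q(A)$ by a single-qubit linear-combination-of-unitaries (absorbing the factor $\mathrm{i}$ into the fixed one-qubit coefficient unitary) yields an $(\alpha, a+2, \varepsilon)$-block-encoding of $\exp\rbra{\mathrm{i}tH}$: the two extra ancilla qubits are one from QSVT and one from the LCU, and the total error is at most $\varepsilon/4 + \varepsilon/4$ plus the $O(\varepsilon)$ rescaling slack, which is $\le \varepsilon$ after adjusting constants. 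The query count is then $r = \widetilde O(\alpha\abs{t})$ calls to $U$, as claimed.
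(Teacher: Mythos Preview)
The paper does not give its own proof of this statement: it is quoted as Corollary~32 of~\cite{GSLW19} and used as a black box in \cref{appendix:proof-of-k-ground-state}. Your outline---rescaling to $A=H/\alpha$, truncating the Jacobi--Anger expansion to obtain bounded real even and odd polynomials of degree $\widetilde O(\alpha\abs{t})$ approximating $\cos(\tau x)$ and $\sin(\tau x)$, applying QSVT to each parity piece, and assembling $P(A)+\mathrm{i}\,Q(A)$ with a one-qubit LCU---is precisely the argument of~\cite{GSLW19} (building on~\cite{LC17,LC19}), so there is nothing in the present paper to compare your proof against, and your approach is the standard correct one.
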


In the following, 
we will employ the phase estimation algorithm 
to construct an approximate oracle
for the eigenvalues of the Hamiltonian.
For completeness, 
we recall an error-reduced version of
the textbook quantum phase estimation algorithm
with the median trick.

\begin{theorem}[Phase Estimation, adapted from Section 5.2 in~\cite{NC10}]\label{thm:majority-phase-estimation}
  There exists a quantum algorithm
  $\mathsf{MajPhaseEst}\rbra{U, \varepsilon, \delta}$ such that, 
  for  $\varepsilon, \delta \in (0,1)$, 
  and an $n\times n$ unitary matrix $U$ such that
  \begin{equation*}
    U = \sum_{j \in \sbra{n}} \exp\rbra*{2\pi \mathrm{i} \lambda_j} \ket{v_j} \bra{v_j},
  \end{equation*}
  where $\lambda_j \in [0, 1)$, it holds that:
  \begin{equation*}
    \mathsf{MajPhaseEst} \left( U, \varepsilon, \delta \right) \ket{v_{j}}\ket{0} = \ket{v_{j}}\ket{\Lambda_{j}},
  \end{equation*}
  where
  \begin{equation*}
    \ket{\Lambda_{j}} = \sqrt{p_{j}}\ket{\Lambda_{j}^{\varepsilon}}+ \sqrt{1-p_{j}}\ket{\Lambda_{j}^{\varepsilon\perp}},
  \end{equation*}
  with $p_{j}\ge 1-\delta$ for all $j\in [n]$,
  \begin{equation*}
    \ket{\Lambda_{j}^{\varepsilon}} = \sum_{\nu\colon \abs{\nu - v_{j}}\le \varepsilon} \alpha_{\nu}^{(j)}\ket{\nu},
  \end{equation*}
  and $\ket{\Lambda_{j}^{\varepsilon\perp}}$ is the unit vector
  orthogonal to $\ket{\Lambda_{j}^{\varepsilon}}$.

  Moreover, the algorithm uses %
  $O (\log \rbra{1/\delta}/\varepsilon)$ queries to $U$. %
  
\end{theorem}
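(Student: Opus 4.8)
The plan is to take the textbook quantum phase estimation circuit of~\cite[Section~5.2]{NC10} and boost its success probability with the median trick, in the same spirit as the error-reduced amplitude and phase estimation routines used elsewhere in this paper. First I would recall that, with $t = \ceil{\log_2(1/\varepsilon)} + O(1)$ ancilla qubits, ordinary phase estimation---apply the controlled powers $U^{2^0}, U^{2^1}, \dots, U^{2^{t-1}}$ (each $U^{2^s}$ realized by $2^s$ calls to $U$, so $\sum_{s=0}^{t-1} 2^s = O(1/\varepsilon)$ queries in all), then an inverse quantum Fourier transform---is a unitary $\mathsf{PhaseEst}(U,\varepsilon)$ with $\mathsf{PhaseEst}(U,\varepsilon)\ket{v_j}\ket{0} = \ket{v_j}\ket{\mu_j}$, where $\ket{\mu_j} = \sum_{\nu}\beta_\nu^{(j)}\ket{\nu}$ and the accuracy analysis of~\cite[Section~5.2]{NC10} gives $\sum_{\nu\colon\abs{\nu-\lambda_j}\le\varepsilon}\abs{\beta_\nu^{(j)}}^2 \ge c$ for an absolute constant $c>1/2$, with distances taken on the circle $\mathbb{R}/\mathbb{Z}$ since $\lambda_j\in[0,1)$. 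The eigenvector register comes out untouched, because $\ket{v_j}$ is an eigenstate of each $U^{2^s}$.

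Next I would amplify the success probability to $1-\delta$ by the median trick of~\cite{NWZ09}: introduce $m = \Theta(\log(1/\delta))$ fresh ancilla blocks, apply $\mathsf{PhaseEst}(U,\varepsilon)$ to each block against the \emph{same} eigenvector register (well-defined precisely because that register is preserved), obtaining $\ket{v_j}\otimes\ket{\mu_j}^{\otimes m}$; then coherently compute the median of the $m$ value blocks into a fresh output register and uncompute the blocks via $\mathsf{PhaseEst}(U,\varepsilon)^\dagger$. A Chernoff bound shows that, since each block is individually $\varepsilon$-close to $\lambda_j$ with probability at least $c>1/2$, a strict majority of the $m$ values lie in the arc $[\lambda_j-\varepsilon,\lambda_j+\varepsilon]$ with probability at least $1-\delta$; and whenever a strict majority of points lies in an arc of length $2\varepsilon<1$, so does their median. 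Reading this off as amplitudes, the squared weight $p_j$ that the output register $\ket{\Lambda_j}$ places on values within $\varepsilon$ of $\lambda_j$ is at least $1-\delta$; taking $\ket{\Lambda_j^\varepsilon}$ to be the normalized restriction to those values and $\ket{\Lambda_j^{\varepsilon\perp}}$ its orthogonal complement yields the claimed decomposition. The query count is $m$ times that of ordinary phase estimation, i.e.\ $O(\log(1/\delta)/\varepsilon)$ queries to $U$.

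The step I expect to be the main obstacle is producing a genuinely coherent output: ensuring the final state factors as $\ket{v_j}\ket{\Lambda_j}$ (up to a product ancilla) rather than leaving $\ket{\Lambda_j}$ entangled with leftover registers. The reversal $\mathsf{PhaseEst}(U,\varepsilon)^\dagger$ does not exactly restore the $m$ blocks to $\ket{0}$, because the median is weakly correlated with each of its arguments---this is precisely the \emph{consistent estimation} issue that invalidated a claim of~\cite{GJLW23} in a related setting. I would address it as in~\cite{NWZ09}, either by keeping the $m$ estimate blocks as the (product) garbage register permitted for an approximate oracle and checking that the decomposition above is unaffected, or by tracking the uncomputation defect explicitly and absorbing it into a constant-factor increase of $m$. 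A minor secondary point is the wraparound $0\equiv 1$: the median and the notion of being $\varepsilon$-close must be interpreted modulo $1$, which is legitimate exactly on the high-probability event that a majority of the samples fall in one arc of length below $1/2$, where cutting the circle outside that arc gives a well-defined linear median lying inside it. With these handled, the remaining ingredients---the phase estimation accuracy bound, the Chernoff estimate, and the query accounting---are routine.
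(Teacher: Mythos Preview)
The paper does not supply a proof of this theorem: it is stated as a known result, introduced only with the sentence ``For completeness, we recall an error-reduced version of the textbook quantum phase estimation algorithm with the median trick.'' Your proposal reconstructs exactly that---textbook phase estimation from~\cite[Section~5.2]{NC10} plus the median amplification of~\cite{NWZ09}---which is precisely the intended derivation, so the approach is correct and matches the paper.

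Your caution about the coherent-output issue is well taken but can be resolved more simply than you suggest. The footnote to \cref{def:approx-oracle} explicitly permits ancillary registers, and in the only use of this theorem (\cref{prop:phase-est-hamiltonian-energy}) the output is consumed as an approximate oracle; hence it suffices to retain the $m$ raw estimate blocks and \emph{interpret} the value register as encoding their median, without any uncomputation step. This sidesteps the consistent-estimation problem you flag (which the paper itself raises only for amplitude estimation, in \cref{remark:fill-gaps-GJLW23}), so the obstacle you anticipate does not actually arise here.
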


\begin{proposition}[Phase Estimation for Hamiltonian Energies]%
\label{prop:phase-est-hamiltonian-energy}
    Let $H$ be a Hamiltonian acting on $n$-dimensional Hilbert space 
    satisfying $\Abs{H} < 1$,
    with a spectral decomposition 
    \[
        H = \sum_{j = 1}^n \lambda_j \ket{\varphi_j}\bra{\varphi_j}.
    \]
    Assume that the eigenvalues 
    ${\{\lambda_j\}}_{j = 1}^n$ of $H$
    can be sorted as
    $\lambda_{s_1} \le \lambda_{s_2} \le \cdots \le \lambda_{s_n}$.
    For $\beta \ge 1$, 
    let a unitary $U_{\textup{enc}}$
    be an $\beta$-block-encoding of $H$,
    and $U_{\textup{basis}}$ be a unitary satisfying
    \[
        U_{\textup{basis}} \ket{j} \ket{0}
        = \ket{j} \ket{\varphi_j}.
    \]
    Then, there exists a quantum algorithm 
    $\mathsf{EnergyEst}\rbra{U_{\textup{enc}}, U_{\textup{basis}}, \varepsilon, \delta}$,
    which is $O(\delta)$-close (in operator norm) to
    an $(\varepsilon, \delta)$-approximate oracle
    for the vector $v$ with coordinates $v_i = \lambda_i$,
    using $\widetilde{O}(\beta/\varepsilon)$ queries 
    to $U_{\textup{enc}}, U_{\textup{basis}}$.
\end{proposition}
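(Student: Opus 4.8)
The plan is to implement $\mathsf{EnergyEst}$ by running quantum phase estimation on a block-encoded time evolution of a rescaled copy of $H$, using the eigenbasis oracle $U_{\textup{basis}}$ to prepare the eigenvector whose energy is queried. The first step is the rescaling: since $\Abs{H}<1\le\beta$, the unitary $U_{\textup{enc}}$ is a $(1,a,0)$-block-encoding of $H/\beta$, which has spectrum in $(-1/\beta,1/\beta)$, and a short linear-combination-of-unitaries step produces an $(O(1),O(a),0)$-block-encoding of an affine image $\widetilde H=c_1^{-1}(H-c_2 I)$ whose spectrum lies inside, say, $\interval{1/4}{3/4}$, where $c_1,c_2$ are explicit constants with $\abs{c_1}=\Theta(\beta)$ (if one wants $v\ge 0$, first shift $H$ by $\beta I$, which only changes $c_2$). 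In particular $\lambda_j=c_1\widetilde\lambda_j+c_2$, the eigenvectors of $H$ and $\widetilde H$ coincide, and an $\varepsilon/\abs{c_1}$-accurate estimate of the eigenphase $\widetilde\lambda_j$ of $\widetilde H$ gives an $\varepsilon$-accurate estimate of $\lambda_j$.

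Next, set $\varepsilon_0:=\varepsilon/(2\abs{c_1})=\Theta(\varepsilon/\beta)$ and let $m=O(\log(\beta/\varepsilon))$ be the number of bits used by phase estimation at precision $\varepsilon_0$. For $\ell=0,1,\dots,m-1$, apply $\mathsf{HamiltonianSimulate}$ (recalled above) to the block-encoding of $\widetilde H$ to obtain a block-encoding $W_\ell$ of $\exp\rbra{2\pi\mathrm{i}\,2^\ell\widetilde H}$ with block-encoding error $\varepsilon'$, using $\widetilde{O}(2^\ell)$ queries to $U_{\textup{enc}}$; the geometric sum over $\ell$ is $\widetilde{O}(\beta/\varepsilon)$. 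Let $\widetilde{\mathsf{PE}}$ be the median-amplified phase-estimation circuit of \cref{thm:majority-phase-estimation} with confidence parameter $\delta$, with its controlled powers of $\exp\rbra{2\pi\mathrm{i}\,\widetilde H}$ implemented by the $W_\ell$; it acts on the system register, a phase register, and a block-encoding-ancilla register initialized to $\ket{0}$. On input $\ket{i}\ket{j}\ket{0}$, $\mathsf{EnergyEst}$ then applies $U_{\textup{basis}}$ to obtain $\ket{i}\ket{\varphi_i}$, runs $\widetilde{\mathsf{PE}}$ together with a reversible affine post-processing and the usual $j$-shifted-register convention so that the value register ends in $\ket{\Lambda_i+j}$, and applies $U_{\textup{basis}}^{\dagger}$ to uncompute the eigenvector. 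Since $\abs{c_1}\varepsilon_0\le\varepsilon$, \cref{thm:majority-phase-estimation} and \cref{def:approx-oracle} show that the \emph{idealized} circuit, obtained by replacing each $W_\ell$ with the exact unitary $\exp\rbra{2\pi\mathrm{i}\,2^\ell\widetilde H}$, is a genuine $(\varepsilon,\delta)$-approximate oracle for $v$.

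It then remains to bound the deviation caused by using the $W_\ell$ instead of the exact evolutions. Letting $\widehat V_\ell$ be the unitary that agrees with $\exp\rbra{2\pi\mathrm{i}\,2^\ell\widetilde H}$ on the block-encoding-ancilla block $\ket{0}$, one has $\Abs*{W_\ell\rbra{\ket{\chi}\ket{0}}-\rbra*{\exp\rbra{2\pi\mathrm{i}\,2^\ell\widetilde H}\ket{\chi}}\ket{0}}=O(\sqrt{\varepsilon'})$ for every system state $\ket{\chi}$ — the square root (rather than $\varepsilon'$ itself) coming from the amplitude that leaks out of the $\ket{0}$ block. A hybrid argument that replaces the $W_\ell$ by the $\widehat V_\ell$ one at a time, starting from the \emph{last} application, keeps the ancilla register in $\ket{0}$ at the moment of each swap, so on inputs of the form $\ket{\chi}\ket{0}\ket{0}$ the accumulated deviation is $\widetilde{O}(m\sqrt{\varepsilon'})$ (including the $\polylog(1/\delta)$ median-trick repetitions). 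Choosing $\varepsilon'=\widetilde{\Theta}(\delta^2)$ makes this $O(\delta)$ while adding only $\polylog(1/\delta)$ factors to the query count, so $\mathsf{EnergyEst}$ is within operator norm $O(\delta)$ of the idealized oracle; together with the $O(1)$ queries to $U_{\textup{basis}}$, this gives $\widetilde{O}(\beta/\varepsilon)$ queries to $U_{\textup{enc}}$ and $U_{\textup{basis}}$, as claimed.

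I expect the main obstacle to be exactly this last error-propagation step. The deviation of a block-encoding of $\exp\rbra{2\pi\mathrm{i}\,2^\ell\widetilde H}$ from the true evolution is only $O(\sqrt{\varepsilon'})$, and only on inputs whose block-encoding ancilla is in $\ket{0}$, so the hybrids must be ordered outside-in so that each swap acts on a clean ancilla; one should also keep in mind that the ``$O(\delta)$-closeness in operator norm'' asserted in the statement is the honest price of this leakage — it holds on the subspace on which the oracle is ever invoked by the algorithms of \cref{sec:quantum-algo-for-weak-minimum,sec:quantum-algo-for-strong-min} — rather than something that can be removed at this level of abstraction. (One could instead phase-estimate the qubitized walk operator of $\widetilde H$, an exact unitary that sidesteps the leakage, at the price of inverting a cosine in the readout; I take the route above since only block-Hamiltonian simulation is recalled in the excerpt.) The remaining ingredients — the LCU rescaling, the geometric sum of query costs, and the affine/offset post-processing of the phase register — are routine.
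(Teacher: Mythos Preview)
Your proposal is correct and follows essentially the same architecture as the paper's proof: sandwich a median-amplified phase estimation of a block-simulated time evolution of $H$ between $U_{\textup{basis}}$ and $U_{\textup{basis}}^{\dagger}$, then bound the deviation from the ideal oracle by the accumulated Hamiltonian-simulation error. The paper's own argument is considerably terser---it applies $\mathsf{MajPhaseEst}$ directly to $\mathsf{HamiltonianSimulate}(U_{\textup{enc}},2\pi,\delta/\varepsilon)$ without the explicit rescaling step or the per-power hybrid/leakage analysis you spell out---so your version is, if anything, more rigorous on exactly the points (spectrum placement for phase estimation, and the $O(\sqrt{\varepsilon'})$ ancilla leakage of a block-encoded unitary) that the paper's proof glosses over.
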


\begin{proof}
    By definition, 
    we have
    $
    \exp \rbra{\mathrm{i}2\pi H} 
    = \sum_{j \in \sbra{n}} \exp\rbra*{2\pi \mathrm{i} \lambda_j}
    \ket{\phi_j} \bra{\phi_j}$.
    Therefore, for the unitary
    $I\otimes \mathsf{MajPhaseEst}
    \rbra{\exp\rbra{\mathrm{i}2\pi H}, \varepsilon, \delta}$
    applied on the state
    $U_{\textup{basis}}\ket{j}\ket{0}$,
    we have
    \[
        \rbra*{I\otimes \mathsf{MajPhaseEst}
        \rbra*{\exp\rbra{\mathrm{i}2\pi H}, \varepsilon, \delta}}
        \rbra*{ U_{\textup{basis}}\otimes I } \ket{j}\ket{0}\ket{0}
        = \ket{j} \ket{\phi_j } \ket{\Lambda_j},
    \]
    where
    \begin{equation*}
    \ket{\Lambda_{j}} = \sqrt{p_{j}}\ket{\Lambda_{j}^{\varepsilon}}+ \sqrt{1-p_{j}}\ket{\Lambda_{j}^{\varepsilon\perp}},
    \end{equation*}
    with $p_{j}\ge 1-\delta$ for all $j\in [n]$,
    \begin{equation*}
    \ket{\Lambda_{j}^{\varepsilon}} = \sum_{\nu\colon \abs{\nu - \lambda_{j}}\le \varepsilon} \alpha_{\nu}^{(j)}\ket{\nu},
    \end{equation*}
    and $\ket{\Lambda_{j}^{\varepsilon\perp}}$ is the unit vector
    orthogonal to $\ket{\Lambda_{j}^{\varepsilon}}$.

    To summarize, the operator 
    \[
    U_{\textup{ideal}} = 
    \rbra{ U_{\textup{basis}}^{\dagger}\otimes I }
        \rbra{I\otimes\mathsf{MajPhaseEst}
        \rbra{\exp\rbra{\mathrm{i}2\pi H}, \varepsilon, \delta}}
        \rbra{ U_{\textup{basis}}\otimes I }
    \]
    is an $(\varepsilon, \delta)$-approximate oracle
    for the vector $v$ 
    whose $i$-th coordinate has value $\lambda_i$,
    which uses $\widetilde{O}\rbra{1/\varepsilon}$
    queries to $\exp \rbra{\mathrm{i2\pi H}}$.
    
    Therefore, 
    let $\mathsf{EnergyEst}
    \rbra{U_{\textup{enc}}, U_{\textup{basis}}, 
    \varepsilon, \delta}$ be 
    \[
    \rbra{ U_{\textup{basis}}^{\dagger}\otimes I }
        \rbra{I\otimes\mathsf{MajPhaseEst}
        \rbra{U_{\textup{Sim}}, \varepsilon, \delta}}
        \rbra{ U_{\textup{basis}}\otimes I }
    \]    
    where $U_{\textup{Sim}} = 
    \mathsf{HamiltonianSimulate}
    \rbra{U_{\textup{enc}}, 2\pi, \delta/\varepsilon}$.
    $\mathsf{EnergyEst}
    \rbra{U_{\textup{enc}}, U_{\textup{basis}}, 
    \varepsilon, \delta}$ 
    will be $O(\delta)$-close to $U_{\textup{ideal}}$
    in operator norm,
    which is exactly what we need.
\end{proof}

By combining the theorem mentioned above
with our quantum algorithm for strong approximate minimum
index set finding, 
we have the following theorem for solving the approximate
$k$-ground energy problem.

\begin{theorem}\label{thm:algo-for-approx-ground-state-find-appendix}
  Let $H$ be a Hamiltonian acting on $n$-dimensional Hilbert space satisfying
  $\Abs{H} \le 1$, with a spectral decomposition
  \[
    H = \sum_{j = 1}^n \lambda_j \ket{\varphi_j}\bra{\varphi_j}.
  \]
  Assume that the eigenvalues ${\{\lambda_j\}}_{j = 1}^n$ of $H$ can be sorted
  as $\lambda_{s_1} \le \lambda_{s_2} \le \cdots \le \lambda_{s_n}$.
  For $\beta \ge 1$ let a unitary $U_{\textup{enc}}$ be an $\beta$-block
  encoding of $H$, and $U_{\textup{basis}}$ be a unitary satisfying
  \[
    U_{\textup{basis}} \ket{j} \ket{0} = \ket{j} \ket{\varphi_j}.
  \]
  For $k\in\mathbb{N}^+$ and $\varepsilon, \delta\in \interval[open]{0}{1/3}$,
  there is a quantum algorithm
  $\mathsf{FindMinEnergy} \rbra{U_{\textup{enc}}, U_{\textup{basis}}, k, \varepsilon, \delta}$
  that, with probability at least $1-\delta$, finds a strong
  $(k, \varepsilon)$-approximate minimum index set for the vector $v$ with
  coordinates $v_i = \lambda_i$, using
  $\widetilde{O}\rbra{\beta\sqrt{nk}/\varepsilon}$ queries to $U_{\textup{enc}}$
  and $U_{\textup{basis}}$.
\end{theorem}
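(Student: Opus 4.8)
The plan is to combine the phase-estimation-based approximate oracle of \cref{prop:phase-est-hamiltonian-energy} with the quantum strong approximate minimum index set finding algorithm of \cref{thm:approx-find-strong-min-appendix}. The only ingredient needed beyond directly composing these two results is a preliminary shift to make the eigenvalues non-negative: \cref{prop:phase-est-hamiltonian-energy}, and the notion of an approximate oracle itself (\cref{def:approx-oracle}), require the target vector to have entries in $\interval{0}{1}$, whereas $\Abs{H}\le 1$ only guarantees $\lambda_i\in\interval{-1}{1}$.

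First I would pass to $H' := (H + 2I)/4$, which has the same eigenvectors $\ket{\varphi_j}$ as $H$ and eigenvalues $\lambda_i' := (\lambda_i+2)/4 \in \interval{1/4}{3/4} \subset (0,1)$. Since $\lambda\mapsto(\lambda+2)/4$ is strictly increasing, $H'$ has the same sorting permutation $s_1,\dots,s_n$ as $H$, and for every $\eta>0$ a strong $(k,\eta)$-approximate minimum index set for $(\lambda_1',\dots,\lambda_n')$ is a strong $(k,4\eta)$-approximate minimum index set for $(\lambda_1,\dots,\lambda_n)$, because $\max_{i\in S}\lambda_i'\le\lambda_j'+\eta$ is equivalent to $\max_{i\in S}\lambda_i\le\lambda_j+4\eta$. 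Starting from the $\beta$-block-encoding $U_{\textup{enc}}$ of $H$, standard block-encoding arithmetic (a linear combination of $U_{\textup{enc}}$ and the identity followed by a rescaling, in the style of \cite{GSLW19}) yields an $O(\beta)$-block-encoding $U'_{\textup{enc}}$ of $H'$ using $O(1)$ queries to $U_{\textup{enc}}$, while the eigenbasis oracle $U_{\textup{basis}}$ is reused unchanged since $H$ and $H'$ share eigenvectors.

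Next I would apply \cref{prop:phase-est-hamiltonian-energy} to $(U'_{\textup{enc}}, U_{\textup{basis}})$ with precision $\varepsilon' := \varepsilon/28$ and a failure parameter $\delta_0$ to be fixed last, obtaining $\mathsf{EnergyEst}(U'_{\textup{enc}}, U_{\textup{basis}}, \varepsilon', \delta_0)$, a unitary that is $O(\delta_0)$-close in operator norm to an $(\varepsilon',\delta_0)$-approximate oracle for $(\lambda_1',\dots,\lambda_n')$ and uses $\widetilde O(\beta/\varepsilon)$ queries to $U'_{\textup{enc}}$ and $U_{\textup{basis}}$, hence $\widetilde O(\beta/\varepsilon)$ queries to $U_{\textup{enc}}$ and $U_{\textup{basis}}$. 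Feeding this unitary into $\mathsf{FindApproxStrongMin}$ with parameters $k$, $\varepsilon'$, $\delta/2$ and invoking \cref{thm:approx-find-strong-min-appendix} (with the $O(\delta_0)$ operator-norm gap between $\mathsf{EnergyEst}$ and a genuine approximate oracle absorbed, as in the proof of \cref{theorem:approx-weak-min-find-appendix}, into an additional $\widetilde O(\sqrt{nk}\,\delta_0)$ failure term) produces, with probability at least $1-\delta/2-\widetilde O(nd\delta_0\sqrt{kd})-\widetilde O(\sqrt{nk}\,\delta_0)$, a strong $(k, 7\varepsilon')$-approximate minimum index set for $(\lambda_1',\dots,\lambda_n')$, using $\widetilde O(\sqrt{nk})$ calls to the oracle and therefore $\widetilde O(\beta\sqrt{nk}/\varepsilon)$ queries to $U_{\textup{enc}}$ and $U_{\textup{basis}}$; by the translation above, this is a strong $(k, 28\varepsilon')$-approximate, i.e.\ strong $(k,\varepsilon)$-approximate, minimum index set for $(\lambda_1,\dots,\lambda_n)$.

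The main point requiring care (essentially the only place the argument is not a black-box composition) is closing the error budget. One checks that $d$, the dimension of the estimate register of the constructed oracle (the size of the phase-estimation register), is polynomially bounded in $n$, $1/\varepsilon$, and $1/\delta$, so that choosing $\delta_0 = \delta/\poly(n,1/\varepsilon,1/\delta)$ makes both extra failure terms at most $\delta/4$ and the overall success probability at least $1-\delta$. Since the median trick is already built into $\mathsf{MajPhaseEst}$ of \cref{thm:majority-phase-estimation}, attaining this $\delta_0$ costs only an $O(\log(1/\delta_0)) = \widetilde O(1)$ multiplicative factor in the query complexity, which is absorbed into the $\widetilde O(\cdot)$, leaving the final bound $\widetilde O(\beta\sqrt{nk}/\varepsilon)$ intact.
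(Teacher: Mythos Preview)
Your proposal is correct and follows essentially the same route as the paper: construct an approximate oracle for the eigenvalue vector via \cref{prop:phase-est-hamiltonian-energy}, feed it into $\mathsf{FindApproxStrongMin}$ (\cref{thm:approx-find-strong-min-appendix}), and close the error budget by taking $\delta_0$ polynomially small so that the extra failure terms and the $O(\delta_0)$ operator-norm gap are absorbed into $\widetilde O(\cdot)$. The one difference is that you insert a preliminary affine shift $H\mapsto (H+2I)/4$ to force the eigenvalues into $(0,1)$ before invoking \cref{prop:phase-est-hamiltonian-energy}, whereas the paper applies $\mathsf{EnergyEst}$ directly; your extra step is a legitimate technical refinement, since both \cref{def:approx-oracle} and \cref{thm:majority-phase-estimation} formally require the target values to lie in $[0,1)$.
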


\begin{proof}
    The $\mathsf{FindMinEnergy}$ algorithm is just
    $\mathsf{FindApproxStrongMin}
    \rbra{U_0, k, \varepsilon/7, \delta/2}$
    with  
    \[
    U_0 = \mathsf{EnergyEst}
    \rbra{U_{\textup{enc}}, U_{\textup{basis}},
    \varepsilon/7, O\rbra{\delta/n^3\sqrt{k}}}.
    \]
    By \cref{prop:coherent-est-oracle-block-encoded-observable},
    we know $U_0$
    is $O(\delta)$-close in operator norm
    to $U$,
    where $U$ is an $(\varepsilon/7, \delta)$-approximate 
    oracle for the vector $v$.
    Furthermore, 
    by \cref{thm:approx-find-strong-min}
    we know that the algorithm 
    $\mathsf{FindApproxStrongMin}
    \rbra{U, k, \varepsilon/7, \delta/2}$
    will find a strong $(k, \varepsilon)$-approximate
    minimum index set for $v$.
    Regarding 
    $U$ and $U_0$ as quantum channel, 
    we know that 
    \[
        \Abs{U-U_0}_{\diamond} \le 2 \Abs{U-U_0} = O(\delta).
    \]
    Therefore, 
    by replacing $U$ with $U_0$,
    with probability at least $1-O(\delta)$,
    the measurement result will not change.
    By union bound, 
    we know with probability at least
    $1-\delta$,
    the algorithm will return a strong
    $(k, \varepsilon)$-approximate minimum index set for $v$.
\end{proof}

\end{document}